\newtheorem{proposition}{Proposition} 
\newtheorem{definition}{Definition}
\newtheorem{lemma}{Lemma}
\newtheorem{theorem}{\textbf{\textsc{Theorem}}}
\newtheorem{remark}{\emph{Remark}}
\begin{document}
		%
	\title{Federated Learning Framework with Straggling Mitigation and Privacy-Awareness for AI-based Mobile Application Services}
	\author{Yuris Mulya Saputra, Diep N. Nguyen, Dinh Thai Hoang, Quoc-Viet Pham, \\Eryk Dutkiewicz, and Won-Joo Hwang
		\IEEEcompsocitemizethanks{\IEEEcompsocthanksitem Y.~M.~Saputra is with the
			School of Electrical and Data Engineering, University of Technology Sydney, Sydney, NSW 2007, Australia (e-mail: yurismulya.saputra@student.uts.edu.au) and Department of Electrical Engineering and Informatics, Vocational College, Universitas Gadjah Mada, Yogyakarta 55281, Indonesia.
			\IEEEcompsocthanksitem D.~N.~Nguyen, D.~T.~Hoang, and E.~Dutkiewicz are with the
			School of Electrical and Data Engineering, University of Technology Sydney, Sydney, NSW 2007, Australia (e-mail: \{hoang.dinh, diep.nguyen, eryk.dutkiewicz\}@uts.edu.au).
			\IEEEcompsocthanksitem Quoc-Viet Pham is with the Korean Southeast Center for the 4th Industrial Revolution Leader Education, Pusan National University, Busan 46241, Republic of Korea (e-mail: vietpq@pusan.ac.kr).
			\IEEEcompsocthanksitem Won-Joo Hwang is with the Department of Biomedical Convergence Engineering, Pusan National University, Yangsan 50612, Republic of Korea (e-mail: wjhwang@pusan.ac.kr).
		
	}
	\thanks{}}
	
\IEEEtitleabstractindextext{%
\begin{abstract}
		
		In this work, we propose a novel framework to address straggling and privacy issues for federated learning (FL)-based mobile application services, taking into account limited computing/communications resources at mobile users (MUs)/mobile application provider (MAP), privacy cost, the rationality and incentive competition among MUs in contributing data to the MAP. Particularly, the MAP first determines a set of the best MUs for the FL process based on the MUs' provided information/features. To mitigate straggling problems with privacy-awareness, each selected MU can then encrypt part of local data and upload the encrypted data to the MAP for an encrypted training process, in addition to the local training process. For that, each selected MU can propose a contract to the MAP according to its expected trainable local data and privacy-protected encrypted data. To find the optimal contracts that can maximize utilities of the MAP and all the participating MUs while maintaining high learning quality of the whole system, we first develop a multi-principal one-agent contract-based problem leveraging FL-based multiple utility functions. These utility functions account for the MUs' privacy cost, the MAP's limited computing resources, and asymmetric information between the MAP and MUs. Then, we transform the problem into an equivalent low-complexity problem and develop a light-weight iterative algorithm to effectively find the optimal solutions. Experiments with a real-world dataset show that our framework can speed up training time up to 49\% and improve prediction accuracy up to 4.6 times while enhancing the network's social welfare, i.e., total utility of all participating entities, up to 114\% under the privacy cost consideration compared with those of baseline methods.
		

\end{abstract}

\begin{IEEEkeywords}
Federated learning, privacy, encryption, straggling problem, contract theory.
\end{IEEEkeywords}}

\maketitle

\IEEEdisplaynontitleabstractindextext

\IEEEpeerreviewmaketitle

\IEEEraisesectionheading{\section{Introduction}\label{sec:Int}}	
	
	\IEEEPARstart{T}he ever-growing big data market between mobile users (MUs) and mobile application providers (MAPs) for emerging artificial intelligence (AI)-based mobile application services (e.g., healthcare, crowdsensing, and mobile social network applications)~\cite{Sarker:2021,Allied:2021} has recently attracted paramount interest from both industry and academia. Through utilizing the collected local data from many mobile devices, e.g., via embedded sensors, the AI-based mobile application services can extract the meaningful information leveraging machine learning (ML) approaches, e.g., centralized learning at the cloud server and local learning at mobile devices~\cite{Zhang2:2018,Sun:2019}. 
	Nonetheless, there exist two inherent challenges when applying such conventional ML approaches for AI-based mobile application services. First, MUs may not want to share their raw data due to the privacy risk of storing/processing data at the centralized servers. Second, MUs with limited computing/communications resources as well as unreliable wireless channels can be stragglers, adversely impacting the learning/training quality/time.
	
	Federated learning (FL) has been considered as a highly-effective learning approach to deal with the aforementioned problems. Like distributed learning, FL facilitates collaborative learning among multiple entities, e.g., MUs, without requiring them to share their raw data~\cite{Yang:2019,Lim:2020}. In particular, each participating MU can first train its local data to produce a local trained model independently. Then, the MAP can collect the local models from all participating MUs to update the global model iteratively. Nonetheless, similar to the centralized learning above, MUs are usually limited in computing and communications resources. For that, the quality of the FL process may deteriorate when some of the MUs experience low computing resources to train their local dataset and/or unstable wireless communication links when sharing the local trained models to the MAP at each learning round (referred to as \emph{straggling} MUs). Consequently, FL may likely suffer from significant delay because the MAP needs to wait for local trained models from all the participating MUs at each learning round to update the global model. To address the straggling problem, data from straggling MUs can be uploaded to an edge/a cloud server for the FL process (on behalf of the straggling MUs) as proposed in~\cite{Wang8:2019,Yoshida:2020,Mills:2020,Khan2:2020}. However, this approach cannot be widely adopted as MUs usually do not want to send the raw data to the edge/cloud server, e.g., the MAP, for processing due to privacy concerns. 
	
	To mitigate the straggling problems while preserving data privacy for the participating MUs under their limited computing resources, parts of local datasets from the participating MUs can be first encrypted to preserve their dataset privacy and then sent to the MAP for the additional encrypted training process. For example, the authors in~\cite{Marcano:2019,Yue:2021} first use homomorphic encryption (i.e., an encryption algorithm which allows operations to be executed directly on encrypted data/ciphertext without decryption) to protect the privacy of their local data and then send the encrypted data to the MAP for the encrypted training process, e.g., leveraging convolutional neural networks (CNN) and long-short term memory (LSTM). Nonetheless, both works only consider deep learning approaches to train the whole data at the MAP. In~\cite{Phong:2018,Fang:2021}, instead of encrypting the local datasets to share them with the MAP, MUs train their whole datasets locally then only encrypt their local models before sending them to the MAP. Note that~\cite{Marcano:2019}-\cite{Fang:2021} and other similar works in the literature do not account for the cryptography overhead/cost as well as limited computing and communications resources of the MAP and MUs. Moreover, they lack an incentive mechanism to motivate the MUs to participate in the FL processes, especially under the competition/conflict-of-interest among the MUs considering their diverse computing resources and data. To this end, how to optimally compensate all the participating MUs while maximizing the MAP's benefit under the privacy protection for the MUs and limited computing resources from the MAP and MUs remains as an open issue.

	In this paper, we propose a novel framework to address straggling and privacy issues for an FL-based mobile application service leveraging data encryption and additional encrypted training process at the MAP given the limited computing/communications resources at MUs/MAP, privacy cost, the rationality and incentive competition among MUs in contributing data to the MAP. In particular, the MAP can first select the best active MUs in its considered area (e.g., a shopping mall or a residential area) based on their shared information, i.e., selection metric values. This aims to obtain reliable trained model updates for the global model accuracy of the FL process with minimum straggling issues~\cite{Kang:2019}. After the MU selection process, the MAP can provide incentives/payments to encourage these MUs to contribute their data to the training process. In this way, the MAP can improve its learning quality to produce highly-accurate prediction model, thereby maximizing its utility correspondingly. Meanwhile, the participating MUs can maximize their utilities through acquiring certain incentives from the MAP in exchange for training the local datasets and/or sharing the encrypted datasets. 
	
	However, since there exists conflict-of-interest among the selected MUs due to their limited computing resources to train their local datasets, each participating MU may compete with other participating MUs in offloading encrypted dataset to the MAP for the additional training process under the MAP's limited computing resources (e.g., due to heavy workloads from other computationally-intensive services). 
	To cope with this problem, Stackelberg game models can be used to study the interactions between the MUs and the MAP~\cite{Yang:2016,Feng:2019,Kim:2020,Ding:2020}. Nonetheless, these game models are only applicable when the MUs and the MAP have full information about each other (referred to as \emph{information symmetry}), e.g., the MUs know the MAP's current computing resources.
	In practice, the MAP usually keeps its computing resources as private information~(referred to as \emph{information asymmetry}). 
	Additionally, the MUs may not be interested in fully following the offers from the MAP due to the conflict of economic preferences between them. Hence, the above Stackelberg game models are not applicable to our considered problem.

	To address the aforementioned problem, we develop a \emph{multi-principal one-agent (MPOA)} contract-based framework~\cite{Bernheim:1986}, in which the participating MUs (as \emph{principals}) can first offer contract items, i.e., the sizes of local and uploaded encrypted datasets as well as their requested payments, to the MAP. Then, the MAP (as \emph{the agent}) can optimize the offered contracts on behalf of the participating MUs under its limited computing resources. Specifically, we formulate a privacy-preserving FL contract problem that can maximize the utilities of the MAP and the selected MUs for the entire FL process under the MAP's information-asymmetry and common constraints. These constraints guarantee that the MAP will always obtain a non-negative and maximum utility during the FL process. Unlike the contract problem for the conventional FL~\cite{Lim:2020}, in our framework, there exist two utility functions required to be addressed and optimized by each participating MU and the MAP, i.e., one for the encrypted (sharing) training data and the other for the unencrypted (local) training data.
	
	
	To obtain the optimal contracts, the FL contract optimization problem is first transformed into an equivalent problem with a lower complexity (due to constraint reduction). Then, an iterative algorithm is developed to quickly find the equilibrium solution, i.e., optimal contracts, for the transformed problem. This equilibrium solution can achieve the performance gap of less than 1\% compared with that of the information-symmetry FL contract, i.e., when each MU is fully aware of other MUs' available dataset sizes and the MAP's actual computing resources. Upon receiving the optimal contracts from the MAP, both the MAP and selected MUs can implement the proposed FL algorithm. 
	We evaluate our FL-based framework using a real-world human activity recognition (HAR) dataset with 15M activity recognition samples from MUs' devices in 2019~\cite{WISDM:2019}. While considering the privacy costs, the experimental results show that our privacy-preserving FL-based framework can speed up the training time up to 49\% and improve the accuracy level up to 4.6 times compared with those of baseline methods, i.e., conventional FL methods without using the proposed framework. Furthermore, the proposed framework can enhance the social welfare of the mobile network up to 114\% compared with that of the baseline methods. The main contributions of our work are as follows:
	
	\begin{itemize}
		
		\item Propose a novel FL-based framework for the privacy-aware mobile application service that considers additional encrypted training process, privacy protection costs, the limited computing/communications resources at MUs and the MAP, and incentive competition among MUs in contributing data to the MAP. This framework is resilient to the straggling problems while preserving MUs' privacy. 
		
		\item Introduce an MU selection scheme using the MUs' available dataset and device information. As such, we can choose the best MUs that can improve the overall privacy-preserving FL process with minimal straggling problems.
		
		\item Formulate the MPOA-based FL contract problem to address the encrypted training process competition among the selected MUs under the MAP's information asymmetry. Then, we transform the original FL contract problem into a low-complexity equivalent problem and design the iterative algorithm that can quickly find the equilibrium solution for the selected MUs. 
		
		\item Theoretically analyze the convergence of proposed FL with straggling mitigation and privacy-awareness.
		
		\item Perform extensive experiments with a real-world HAR dataset. The results can provide useful information to help the MAP in designing the stable FL process with privacy-awareness for AI-based mobile application services.
		
	\end{itemize}
	The rest of this paper is organized as follows. Section~\ref{sec:MU_scheme} introduces the proposed FL-based framework. Section~\ref{sec:MU_select} presents the proposed MU selection method. Then, the proposed MPOA contract-based problem and solution are described in Section~\ref{sec:PF} and Section~\ref{sec:PS}, respectively. Section~\ref{sec:FLS} explains the detailed learning process of the proposed FL. The performance evaluation is provided in Section~\ref{sec:PE}, and then the conclusion is summarized in Section~\ref{sec:Conc}.
	
	\section{Proposed Federated Learning Framework with Straggling Mitigation and Privacy-Awareness}
	\label{sec:MU_scheme}
	
	In this section, we first provide an overview of FL approach with straggling mitigation and privacy-awareness.
	Then, we present the proposed FL-based framework for the AI-based mobile application service.
	
	\subsection{Federated Learning with Straggling Mitigation and Privacy-Awareness Overview}
	\label{subsec:codfl}
	
	\begin{figure*}[t]
	\begin{center}
		$\begin{array}{cc} 
		\epsfxsize=2.4 in \epsffile{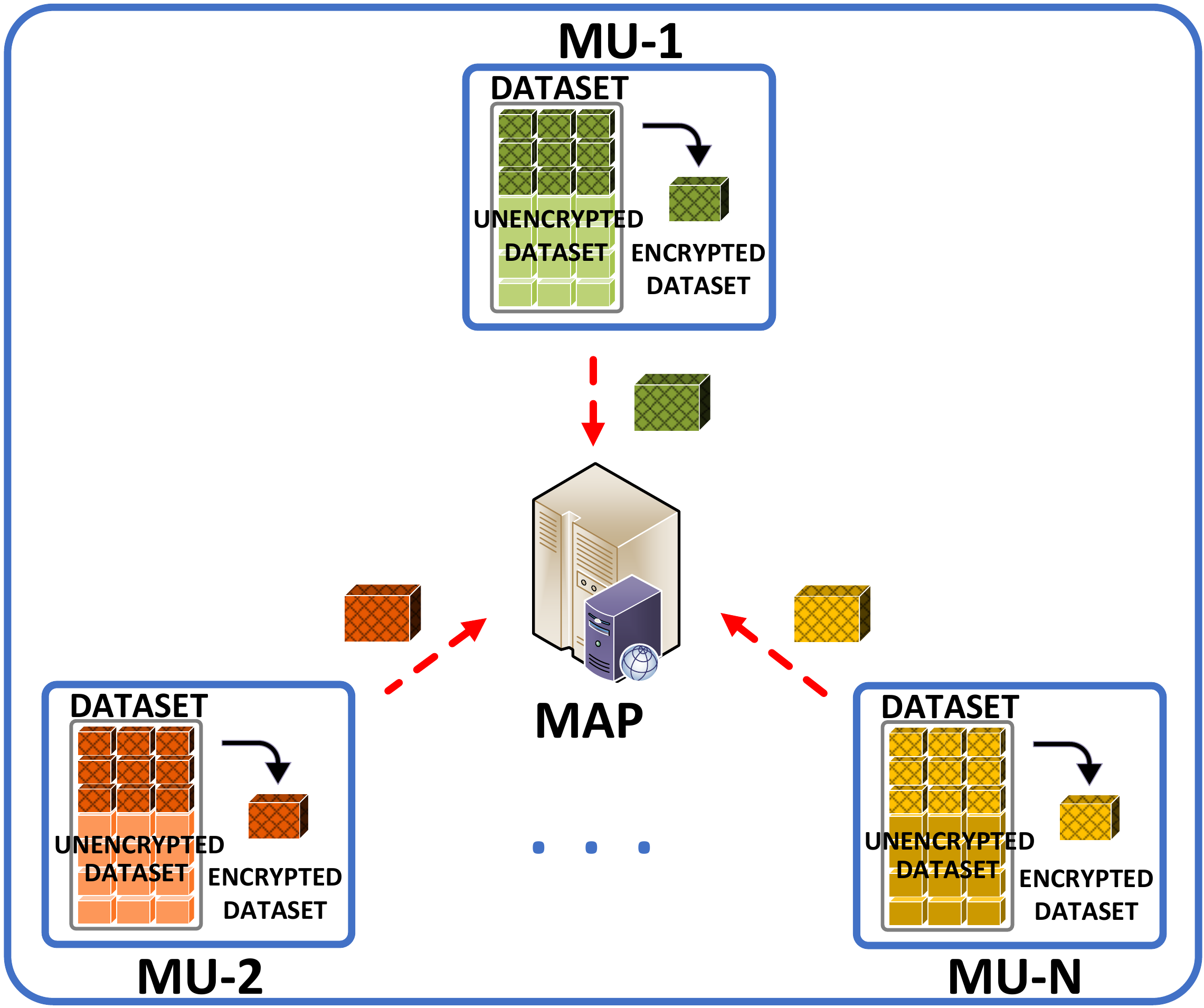} & 
		\hspace*{-.0cm}
		\epsfxsize=3.465 in \epsffile{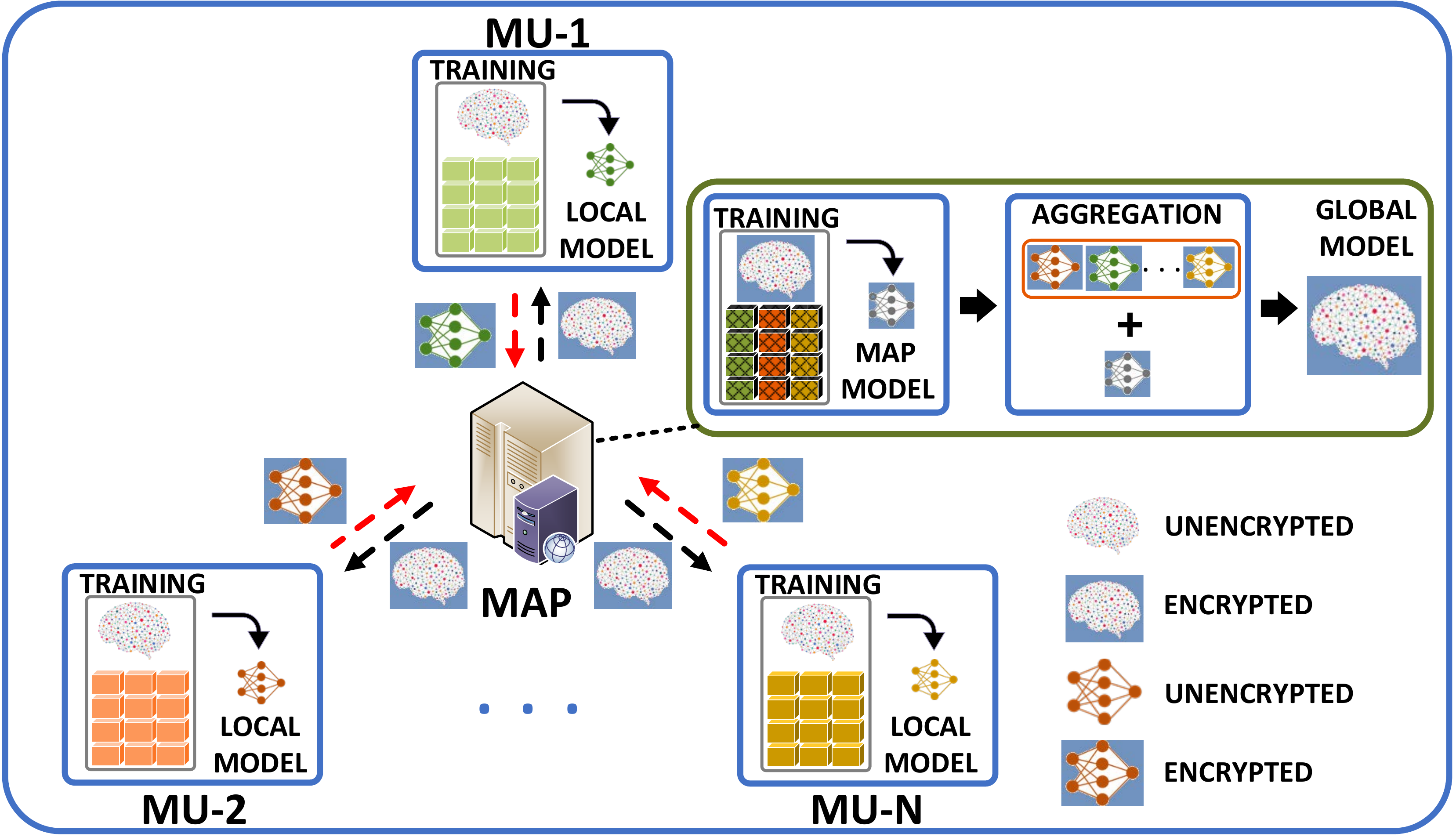} \\ [0.1cm]
		\text{\footnotesize (a) Pre-FL process} & \text{\footnotesize (b) FL process} \\ [0.2cm]
		\vspace*{-0cm}
		\end{array}$
		\vspace{-1.5\baselineskip}
		\caption{The FL with straggling mitigation and privacy-awareness with (a) one-time pre-FL and (b) iterative FL processes.}
		\vspace{-1.5em}
		\label{fig:SM2}
	\end{center}
	\end{figure*} 
	
	The process of FL with straggling mitigation and privacy-awareness is illustrated in Fig.~\ref{fig:SM2} that consists of the MAP (as a master node) and  participating MUs (as workers) $\mathcal{N} = \{1,\dots,n,\ldots,N\}$. Each MU-$n$ has its own sensing dataset $S_n = (\mathbf{F}_n, \mathbf{L}_n)$, where $\mathbf{F}_n$ and $\mathbf{L}_n$ are the training feature and training label matrices of the dataset at MU-$n$ with the size $D_n = |S_n|$, respectively. Here, $|S_n|$ is the number of samples at MU-$n$. Unlike the conventional FL, the proposed FL first incorporates a \emph{pre-FL process}, that is only executed once prior to implementing the FL process. The pre-FL aims to perform data encryption and offloading for an additional training process at the MAP, aiming at avoiding the straggling problems, addressing the limited computing resources at MUs, and maintaining the learning quality. Specifically, as shown in Fig.~\ref{fig:SM2}(a), each MU-$n$ first encrypts part of its sensing dataset to protect its privacy. Then, the encrypted datasets from all MUs can be collected by the MAP for the additional encrypted training process. In this case, the local/unencrypted dataset at each MU will be used for the local learning process as similarly implemented in the conventional FL approach. The mechanism on how to obtain optimal sizes of encrypted and unencrypted datasets is explained in Section~\ref{sec:PF}.
	
	Let ${D}^o_n$ and ${D}^l_n$ represent the sizes of encrypted and local (unencrypted) sensing datasets of MU-$n$, respectively, such that ${D}^o_n + {D}^l_n \leq {D}_n,\forall n \in \mathcal{N}$. This condition indicates that some MUs may not want to train all their datasets in practice. To encrypt a specific dataset, each MU-$n$ can use fully homomorphic encryption (FHE) that is widely used for data encryption in untrusted environments, e.g., third parties, without masking or dropping any features to protect the data privacy~\cite{Craig:2009}. To improve the performance of FHE, we adopt Brakerski/Fan-Vercauteren (BFV)-based FHE to support concurrent arbitrary operations on encrypted data~\cite{Fan:2012}. Specifically, the BFV-based FHE includes the following polynomial-time algorithms during the encryption and decryption processes.
	
	\begin{itemize}
		\item $\emph{\mbox{SKGen}}(n)$ to generate the secret key $g_n^{sk}$ for MU-$n$.
		
		\item $\emph{\mbox{PKGen}}(g_n^{sk})$ to generate the public key $g_n^{pk}$ as the function of $g_n^{sk}$ for MU-$n$.
		
		\item $\emph{\mbox{Enc}}(g_n^{pk},h)$ to encrypt data $h$ utilizing $g_n^{pk}$ with the output of encrypted data $h^*$.
		
		\item $\emph{\mbox{Dec}}(g_n^{sk},h^*)$ to decrypt data $h^*$ utilizing $g_n^{sk}$ with the output of original data $h$.
		
		\item $\emph{\mbox{Add}}(h^*_1,h^*_2), \emph{\mbox{Sub}}(h^*_1,h^*_2)$ and $\emph{\mbox{Mul}}(h^*_1,h^*_2)$ to respectively add, substract, and multiply encrypted data $h^*_1$ and $h^*_2$ with the output of $h^*_{add}, h^*_{sub}$, and $h^*_{mul}$.
		
		%
		
	\end{itemize}
	
	Based on the aforementioned algorithms, each MU-$n$ first can generate the secret key $g_n^{sk}$ using $\emph{\mbox{SKGen}}(n)$ and public key $g_n^{pk}$ using $\emph{\mbox{PKGen}}(g_n^{sk})$. The $g_n^{sk}$ of MU-$n$ remains private to other MUs and the MAP. Meanwhile, $g_n^{pk}$ of MU-$n$ is known by other MUs and the MAP. Then, each MU-$n$ can encrypt a part of the local dataset $S_n$, i.e., $S_n^\dagger$, utilizing its $g_n^{pk}$ to produce the encrypted dataset $S_n^o = (\mathbf{F}_n^o, \mathbf{L}_n^o)$, i.e., $\emph{\mbox{Enc}}(g_n^{pk},S_n^\dagger) = S_n^o$, where $\mathbf{F}_n^o$ and $\mathbf{L}_n^o$ are the encrypted training feature and encrypted training label matrices of the dataset at MU-$n$ with the size $D_n^o = |S_n^o|$, respectively. All the $S_n^o, \forall n \in \mathcal{N}$, then can be collected by the MAP for the accumulation process prior to the encrypted training execution. As such, the training feature and label matrices of encrypted dataset from all MUs in $\mathcal{N}$ can be respectively concatenated into 
	\begin{equation}
	\label{eqn3a1}
	\begin{aligned}
	\mathbf{F}^o &= 
	\begin{pmatrix}
	\mathbf{F}^o_1 \\
	\mathbf{F}^o_2 \\
	\vdots   \\
	\mathbf{F}^o_N
	\end{pmatrix} \text{, and }
	\mathbf{L}^o &= 
	\begin{pmatrix}
	\mathbf{L}^o_1 \\
	\mathbf{L}^o_2 \\
	\vdots   \\
	\mathbf{L}^o_N
	\end{pmatrix}.
	\end{aligned}
	\end{equation} 
	
	To minimize extra privacy leakage when sharing the encrypted datasets $S^o_n, \forall n \in \mathcal{N}$, with the MAP, e.g., metadata leakage of the encrypted datasets via their plaintext headers~\cite{Nikitin:2019}, a privacy protection level $\varepsilon_n$ ($\varepsilon_n \leq 1$)~\cite{Xu:2015} with respect to the encrypted dataset size of MU-$n$ can be deployed. Specifically, the higher the privacy protection level (i.e., higher $\epsilon_{n}$) is, the higher the privacy cost for the MU-$n$ is. As such, the privacy cost for the encrypted dataset generation and sharing of MU-$n$ can be written by~\cite{Prakash:2021}
	\begin{equation}
	\label{eqn:for2c}
	\begin{aligned}
	\xi_n = \frac{\beta}{2}\log_2\bigg(1 + \frac{\varepsilon_nD^o_{n}}{A_n^2}\bigg),
	\end{aligned}
	\end{equation}
	where $\beta$ is the unit cost of the privacy and $A_n$ is a function to quantify the influence of raw dataset distribution with respect to the number of features at MU-$n$ defined in~\cite{Prakash:2021}. Equation (\ref{eqn:for2c}) is later used for contract optimization in Section~\ref{sec:PF} (note that our framework can adopt any privacy cost metric for the contract optimization).
	
	Upon completing the pre-FL process, the FL process between the MAP and all MUs in $\mathcal{N}$ can be observed in Fig.~\ref{fig:SM2}(b). At each learning round, all MUs first train their local unencrypted datasets and then send the encrypted local trained models to the MAP (to preserve the privacy of the local trained models). Meanwhile, the MAP also trains all the collected encrypted datasets to produce another encrypted trained model. After completing the learning process within a training time threshold at each round (which is predefined by the MAP), the MAP can collect the encrypted local models from the MUs and aggregate them to obtain the aggregated encrypted local model. After that, the aggregated encrypted local model from MUs and encrypted model from the MAP can be combined together to update the encrypted global model, which is then used for the next FL process at the MUs and MAP. Note that the MUs are required to decrypt the encrypted global model before utilizing it to train the unencrypted local dataset~\cite{Phong:2018}. Moreover, for each learning round, the MAP only requires to aggregate any encrypted models that are received within a pre-defined training time threshold, due to additional training at the MAP. The above processes aim to guarantee the learning convergence while minimizing the straggling problems from the straggling MUs.
	
	\subsection{Proposed Federated Learning-Based Framework for the Mobile Application Service}
	
 	To effectively execute the above FL approach under the limited computing resources of the MAP as well as
 	participating MUs, privacy protection costs in implementing encryption mechanism, and the conflict-of-interest among the MUs, we then describe the proposed FL-based framework for the mobile application service. Suppose that there exists a set of active MUs in the network (as the candidate workers that utilize smart devices, e.g., smartphones and smartwatches) denoted by $\mathcal{M} = \{1,\dots,m,\ldots,M\}$. At a particular period, each active MU-$m$ through the mobile device can capture motion sensing information through its embedded sensor devices, e.g., accelerometer and gyroscope. This sensing information can be stored in a log file at the mobile device's internal storage. To participate in the FL process, the interested MUs can first install a human activity service application established by the MAP at their mobile devices' built-in Android or iOS platforms, e.g., Samsung Health (www.samsung.com/global/galaxy/apps/samsung-health) and Apple Health (www.apple.com/au/ios/health). Then, the MUs can securely share their selection metric values with respect to sizes of datasets, device configurations, and communication connections in the network, for the entire FL process. Note that such sharing information is only used to help the MAP choose the best MUs without disclosing private sensing data that are collected by the selected MUs when they perform the sensing process. In particular, the set of participating MUs in $\mathcal{N}$, where $\mathcal{N} \subset \mathcal{M}$, is choosen by the MAP based on the above accessable information. 
 	The selection of $N$ MUs ensures that they can provide useful sensing data for the FL process.
 	
 	\begin{figure*}[!t]
 		\centering
 		\includegraphics[scale=0.4]{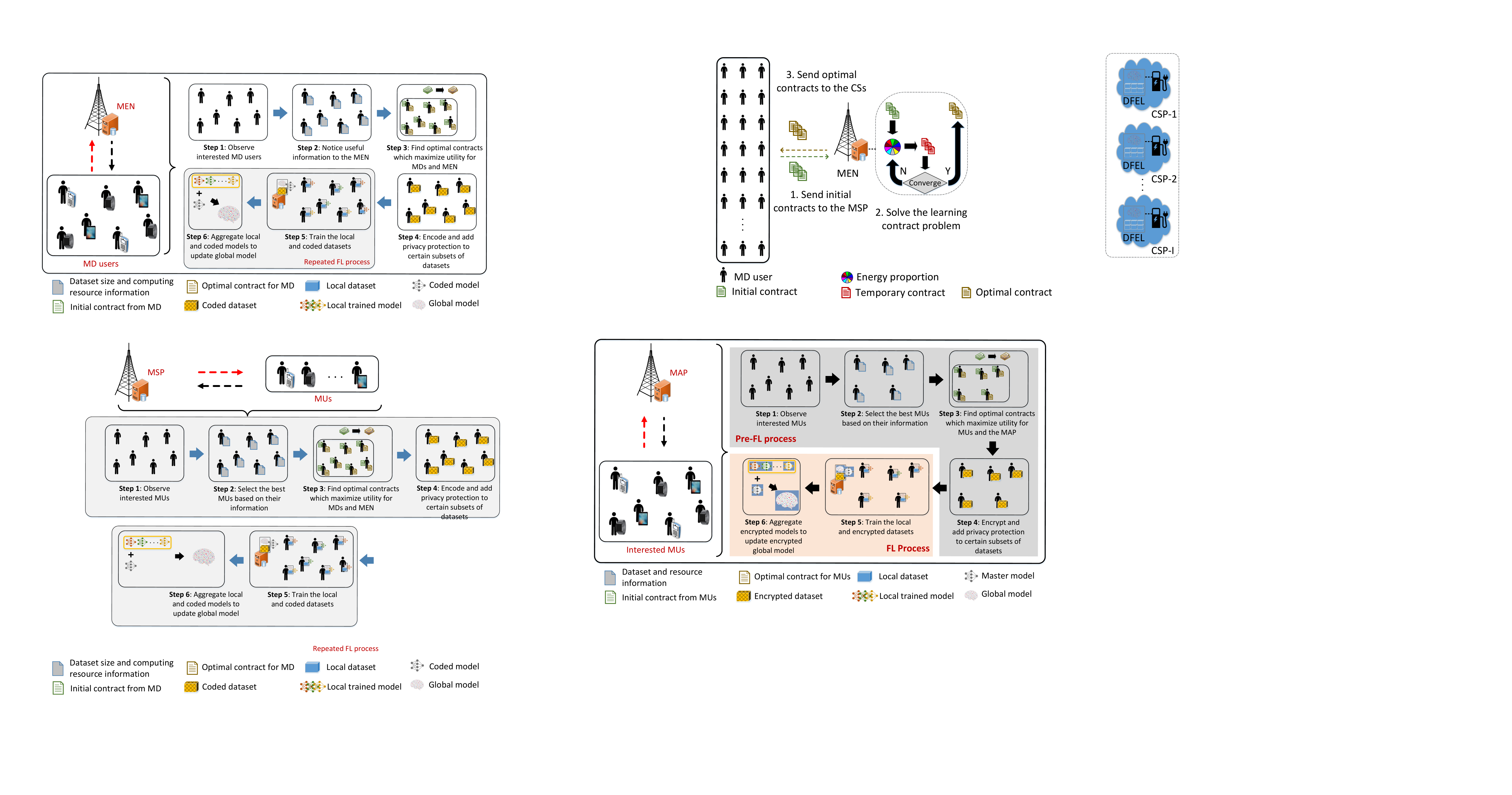}
 		\caption{The proposed FL-based framework for the mobile application service.}
 		\vspace{-0.5em}
 		\label{fig:SM}
 	\end{figure*}
 	
 	To reduce the straggling problem at MU-$n$, the size of local sensing dataset must be less than or equal to the maximum size of dataset that can be processed at the MU-$n$, i.e., ${\hat D}^l_n$, for the entire FL process, which is ${D}^l_n \leq {\hat D}^l_n$~\cite{Lim:2020}. For the collected encrypted dataset at the MAP, its size is constrained by maximum size of encrypted sensing samples $D^o_{max}$ that can be trained at the MAP (based on the MAP's computing resources). In practice, the MAP retains its willingness to train the collected encrypted datasets as an unknown information for the MUs due to its economic benefit. This willingness is defined as the type of the MAP~\cite{Bolton:2005} and influenced by the MAP's current computing resources. Particularly, a higher type implies the willingness to train more encrypted sensing datasets from the MUs due to its higher available computing resources. In other words, the willingness to train more encrypted datasets can compensate the low computing resources and straggling problems more, and thus bring more benefits for the MUs and better learning quality for the MAP. To this end, a finite set of the MAP's types can be specified as $\Pi = \{\pi_1,\ldots,\pi_i,\ldots,\pi_{I}\}$, and $\pi_1 < \pi_2 < \ldots < \pi_i < \ldots < \pi_{I-1} < \pi_I$,
 	where $i \in \mathcal{I}$ with $\mathcal{I} = \{1,\ldots,i,\ldots,I\}$, represents the type index. 
 	
 	Despite that the MAP's type is private, the MUs can still observe the distribution of MAP's types, i.e, $\phi_i$, where $\sum_{i=1}^{I} \phi_i = 1, \forall i \in \mathcal{I}$~\cite{Xu:2015}, e.g., by monitoring public workloads of the MAP in the previous FL processes. For the MAP with type $\pi_i$, it has the maximum trainable encrypted dataset size ${\hat D}^o_i$, where ${\hat D}^o_i = \frac{\pi_i}{\pi_I}D^o_{max}, \forall i \in \mathcal{I}$. From ${\hat D}^o_i$, the MAP can allocate the encrypted dataset proportion for each MU-$n$ correspondingly. Specifically, for the MAP with type $\pi_i$, the encrypted dataset proportion vector of all MUs can be denoted by  $\boldsymbol{\eta}=[\boldsymbol{\eta}_1,\ldots,\boldsymbol{\eta}_i,\ldots,\boldsymbol{\eta}_I]$, where $\boldsymbol{\eta}_i=[\eta_i^1,\ldots,\eta_i^n,\ldots,\eta_i^{N}]$, and $0 \leq \eta_i^n \leq 1, \forall i \in \mathcal{I}, \forall n \in \mathcal{N}$. We denote the expected size of encrypted dataset and corresponding payment vectors of all participating MUs for all MAP's types as $\boldsymbol{D}^o=[\boldsymbol{D}^o_1,\ldots,\boldsymbol{D}^o_i,\ldots,\boldsymbol{D}^o_I]$ and $\boldsymbol{\rho}^o=[\boldsymbol{\rho}^o_1,\ldots,\boldsymbol{\rho}^o_i,\ldots,\boldsymbol{\rho}^o_I]$, respectively, where $\boldsymbol{D}^o_i = [D^o_{i,1},\ldots,D^o_{i,n},\ldots,D^o_{i,N}]$ and $\boldsymbol{\rho}^o_i = [\rho^o_{i,1},\ldots,\rho^o_{i,n},\ldots,\rho^o_{i,N}]$.  In this case, the actual size of encrypted dataset at MU-$n$ considering the MAP with type $\pi_i$ will be $\eta_i^n D^o_{i,n}$. Moreover, each MU-$n$ can offer a bigger size of encrypted sensing dataset to the MAP when the MAP's type increases because the MU-$n$ can obtain higher payment from the MAP. Furthermore, we denote $\boldsymbol{D}^l=[D^l_1,\ldots,D^l_n,\ldots,D^l_N]$ and $\boldsymbol{\rho}^l=[\rho^l_1,\ldots,\rho^l_n,\ldots,\rho^l_N]$ to be the size of local dataset and corresponding payment vectors of all MUs, respectively. Overall, as illustrated in Fig.~\ref{fig:SM}, the whole steps of proposed FL-based framework can be executed sequentially as follows:
 	\begin{itemize}
 		\item Step 1: The MAP broadcasts FL invitation to all MUs in the mobile network and observes the interested MUs therein. If an MU agrees to join the FL process, he/she can securely share his/her selection metric values to the MAP via the application.
 		
 		\item Step 2: The MAP chooses $N$ active MUs based on their securely-shared information. More details are described in Section~\ref{sec:MU_select}.
 		
 		\item Step 3: The selected MUs send initial contracts containing the sizes of required local training data and encrypted data along with their requested payments to the MAP for local and additional training processes, respectively. Then, the MAP will optimize the contracts and send them back to the MUs.
 		
 		\item Step 4: Using the optimal sizes of encrypted data, all the MUs encrypt and add the privacy protection to specific subsets of dataset prior to sending them to the MAP. 
 		
 		\item Step 5: All the MUs train the optimized local datasets to produce local trained models independently. Meanwhile, the MAP will use all the collected encrypted datasets to produce an encrypted trained model. 
 		
 		\item Step 6: After reaching the training time threshold at each learning round, their encrypted local models are then collected by the MAP and combined with the MAP's encrypted trained model to update the current encrypted global model (before sending it back to the MUs).
 	\end{itemize}
 	In this case, Steps 1 to 4 are only implemented once before the FL process starts. Differently, Steps 5 and 6 are repeated for each learning round until the global model converges or after a pre-defined learning rounds is reached. More details are presented in the following sections. 
	
	


	\section{MU Selection with Available Dataset and Device Information}
	\label{sec:MU_select}
	
	To select $N$ participating MUs in the FL process, the MAP via the service application can first collect selection metric values from current active MUs in $\mathcal{M}$ based on their available sensing dataset quantity, general device information, and communication connections. Specifically, each MU-$m$ can first observe the quantity of available sensing dataset as the size of the dataset $D_m$. Since we consider the time-series sensing dataset in this work, the larger the size of dataset is (due to longer sensing period), the better the quality of dataset for the learning process is. For example, an MU that employs a sensing process for three hours usually obtains a higher number of training samples than those with less than one hour (considering that both MUs have the same sampling rate to do the sensing process continuously). As such, the MU with a larger dataset size can likely improve the learning process quality with a higher accuracy~\cite{Song:2014}.
	
	The active MUs also require to observe their general device configuration, e.g., computing resources, to execute the training process locally at the MUs efficiently. Here, the computing resources of each MU-$m$ can be denoted by $\kappa_m$.
	For the communication connections, each active MU-$m$ can observe the available transmission rate $R_m$ (based on its hardware configuration).
	
	From the observed $D_m$, $\kappa_m$, and $R_m$, the MUs via the application can first normalize the sets of $D_m$, $\kappa_m$, and $R_m$, $\forall m \in \mathcal{M}$, within $0-1$ range separately and obtain the normalized dataset size ${\tilde D}_m$, normalized computing resources ${\tilde \kappa}_m$, and normalized transmission rate ${\tilde R}_m$, $\forall m \in \mathcal{M}$. Then, the MUs via the application can calculate their selection metric values, i.e., $\chi_m = \iota_1{\tilde D}_m + \iota_2{\tilde \kappa}_m + \iota_3{\tilde R}_m$.
	In this case, $\iota_1, \iota_2, \text{ and }\iota_3$ are the selection weight parameters determined by the MAP to control $\chi_m$, where $\iota_1 + \iota_2 + \iota_3 = 1$~\cite{Song:2014,Wang:2016}. These $\chi_m, \forall m \in \mathcal{M}$, are then shared by the MUs to the MAP. Using such selection metric values, the MAP can select $N$ MUs that will execute the FL process based on their largest $\chi_m$ values, i.e.,
	\begin{equation}
	\label{eqn:for4}
	\begin{aligned}
	\mathcal{N} = \max_{[N]}\{\chi_1,\ldots,\chi_M\}.
	\end{aligned}
	\end{equation}
	Equation (\ref{eqn:for4}) indicates that only MUs in $\mathcal{N}$ can participate in the FL process, i.e., train local datasets and then send local trained models to the MAP for the global model update.
	
	
	%

	\section{MPOA-Based FL Contract Optimization Problem}
	\label{sec:PF}
	
	Upon completing the MU selection process, the MPOA-based FL contract optimization problem can be formulated. 
	Specifically, 
	the selected MUs can send initial contracts including the sizes of local datasets $D^l_n, \forall n \in \mathcal{N}$ (where $0 \leq D^l_n \leq D_n$), and requested payments $\rho^l_n = \alpha_l D^l_n, \forall n \in \mathcal{N}$, as well as the sizes of encrypted datasets $D^o_{i,n}, \forall n \in \mathcal{N}, \forall i \in \mathcal{I}$ (where $0 \leq D^o_{i,n} \leq D_n$), and requested payments $\rho^o_{i,n} = \alpha_o D^o_{i,n}, \forall n \in \mathcal{N}, \forall i \in \mathcal{I}$. These $\alpha_l$ and $\alpha_o$ specify unit prices of using a local sensing sample for local training at an MU and an encrypted sensing sample for training process at the MAP, respectively. Using these initial contracts, the MAP can help the MUs find the optimal contracts under the competition among the MUs in the proposed FL process. 
	
	\subsection{Utility Optimization for the MAP}
	
	Accounting for the encrypted dataset proportion vector for the MAP with type $\pi_i$, i.e., $\boldsymbol{\eta}_i$, the size of offered local dataset, i.e., $\boldsymbol{D}^l$, and its payment vector, i.e., $\boldsymbol{\rho}^l$, the size of offered encrypted dataset for the MAP with type $\pi_i$, i.e., $\boldsymbol{D}^o_i$, and its payment vector for the MAP with type $\pi_i$, i.e., $\boldsymbol{\rho}^o_i$, the utility of the MAP with type $\pi_i$ (for executing the proposed FL process along with the MU-$n, \forall n \in \mathcal{N}$) can be expressed as follows:
	\begin{align}
	&{U}^i_{MAP} = \label{eqn:for30} \\
	&\underbrace{\pi_iG_o(\boldsymbol{\eta}_i,\boldsymbol{D}^o_i) - C_o(\boldsymbol{\eta}_i,\boldsymbol{\rho}^o_i,\boldsymbol{D}^o_i)}_{\text{\emph{Utility from training encrypted data}}} + \underbrace{G_l(\boldsymbol{D}^l) - C_l(\boldsymbol{\rho}^l)}_{\text{\emph{Utility from local training at MUs}}}. \nonumber
	\end{align}
	Specifically, the utility of the MAP with type $\pi_i$ can be decomposed into two components. The first component is the difference between the gain function $G_o(\boldsymbol{\eta}_i,\boldsymbol{D}^o_i)$ and cost function $C_o(\boldsymbol{\eta}_i,\boldsymbol{\rho}^o_i,\boldsymbol{D}^o_i)$ for collecting and training the encrypted dataset at the MAP, where $\pi_i$ represents the weight of $G_o(\boldsymbol{\eta}_i,\boldsymbol{D}^o_i)$ for the MAP with type $\pi_i$. 
	Meanwhile, the second component is the difference between the gain function $G_l(\boldsymbol{D}^l)$ and cost function $C_l(\boldsymbol{\rho}^l)$ for local training process at the MUs. 
	
	For both gain functions, we use a squared-root function as that of~\cite{Xu:2015} since in terms of economic perspective, the gain increases when a dataset with larger size is used in the FL process. However, the MAP may have less interest to further enhance the gain when much larger size of dataset incurs less global model accuracy improvement~\cite{Samuelson:2005}. 
	To this end, the gain functions for training encrypted and local datasets can be respectively formulated by
	\begin{equation}
	\label{eqn:for30a}
	\begin{aligned}
	G_o(\boldsymbol{\eta}_i,\boldsymbol{D}^o_i) = \upsilon_o\sqrt{\sum_{n \in \mathcal{N}}\eta_i^n D^o_{i,n}} \text{, and } G_l(\boldsymbol{D}^l) = \upsilon_l\sqrt{\sum_{n \in \mathcal{N}}D^l_n},
	\end{aligned}
	\end{equation} 
	where $\upsilon_o > 0$ and $\upsilon_l > 0$ are the conversion parameters indicating the monetary unit of utilizing the encrypted and local datasets, respectively, according to the current data trading market~\cite{Xu:2015}. 
	The MAP's cost function $C_o(\boldsymbol{\eta}_i,\boldsymbol{\rho}^o_i,\boldsymbol{D}^o_i)$ is the sum of total payments for participating MUs in $\mathcal{N}$ (in regards to their encrypted dataset sharing) and energy consumption cost for training the combined encrypted dataset, i.e.,
	\begin{equation}
	\label{eqn:for30a2}
	\begin{aligned}
	C_o(\boldsymbol{\eta}_i,\boldsymbol{\rho}^o_i,\boldsymbol{D}^o_i) = \sum_{n \in \mathcal{N}}\eta_i^n\rho_{i,n}^o + \zeta c f^2 \sum_{n \in \mathcal{N}}\eta_i^n D^o_{i,n} ,
	\end{aligned}
	\end{equation} 
	where $\zeta$, $c$, and $f$ are the efficient capacitance constant of computing chipset, the number of CPU cycles to execute one sample of encrypted dataset, and utilized computing resources for the MAP, respectively. Moreover, the MAP's cost function $C_l(\boldsymbol{\rho}^l)$ is the total payments for the participating MUs with respect to the local training process at the MUs, i.e.,
	\begin{equation}
	\label{eqn:for30a3}
	\begin{aligned}
	C_l(\boldsymbol{\rho}^l) = \sum_{n \in \mathcal{N}}\rho_n^l.
	\end{aligned}
	\end{equation} 
	
	Based on (\ref{eqn:for30})-(\ref{eqn:for30a3}), the MAP's utility maximization problem for type $\pi_i$ can be written as:
	\begin{equation}
	\label{eqn:for3b}
	\begin{aligned}
	(\mathbf{P}_1) \phantom{10} & \max_{\boldsymbol{\eta}_i} \phantom{5} U^i_{MAP},
	\end{aligned}
	\end{equation}
	\begin{align}
	&\text{ s.t. } \quad \sum_{n \in \mathcal{N}}\eta_i^n D^o_{i,n} \leq {\hat D}^o_i, \label{eqn:for3c}  \\
	&0 \leq \eta_i^n \leq 1, \forall n \in \mathcal{N}, \label{eqn:for3c2} 
	\end{align}
	where constraint (\ref{eqn:for3c}) ensures that the total encrypted dataset collected by the MAP cannot exceed the maximum trainable encrypted dataset at the MAP with type $\pi_i$, i.e., ${\hat D}^o_i$. As the objective function (\ref{eqn:for3b}) is a convex function, i.e., the gain and cost functions are respectively concave and linear, and the constraints~(\ref{eqn:for3c})-(\ref{eqn:for3c2}) in $(\mathbf{P}_1)$ are linear, the optimal $\boldsymbol{\hat \eta}_i,\forall i \in \mathcal{I}$, can be found using popular convex solvers.
	
	\subsection{Utility Optimization for Participating MUs}
	
	Using the optimal encrypted dataset proportion $\boldsymbol{\hat \eta}_i, \forall i \in \mathcal{I}$, each MU-$n$ can derive its expected utility by considering all the MAP's possible types, which is
	\begin{align}
	&U_n(\boldsymbol{D}^o, \boldsymbol{\rho}^o,\boldsymbol{D}^l, \boldsymbol{\rho}^l) = 	\label{eqn:for6} \\ &\sum_{i=1}^{I}\Bigg(\underbrace{{\hat \eta}_i^n\rho^o_{i,n} - \xi_{i,n} - \gamma{\hat \eta}_i^n D^o_{i,n}}_{\text{\emph{Utility of training encrypted data at MAP}}} + \underbrace{\rho^l_n - \zeta_n c_n f_n^2 D^l_{n} }_{\text{\emph{Utility from local training}}} \Bigg) \phi_i. \nonumber
	\end{align}
	Similar to the MAP's utility, the utility of MU-$n$ can be divided into two components. The first component is the difference between the received payment ${\hat \eta}_i^n\rho^o_{i,n}$ for sharing the encrypted dataset and the costs for encrypted dataset privacy protection and transmission, where $\xi_{i,n} = \frac{\beta}{2}\log_2\bigg(1 + \frac{\varepsilon_n{\hat \eta}_i^n D^o_{i,n}}{A_n^2}\bigg)$ indicates the privacy cost for the encrypted dataset generation of MU-$n$ when the MAP has type $\pi_i$. Additionally, $\gamma{\hat \eta}_i^n D^o_{i,n}$ denotes the transmission cost of encrypted dataset, where $\gamma$ is the unit cost of sending an encrypted sample to the MAP. Meanwhile, the second component is the difference between the received payment and the cost of energy consumption $\zeta_n c_n f_n^2 D^l_{n}$ for training the local dataset, where $\zeta_n$, $c_n$, and $f_n$ are the efficient capacitance constant of computing chipset, the number of CPU cycles to execute one sample of local dataset, and utilized computing resources for the MU-$n$, respectively. In addition, the use of $\sum_{i=1}^{I}(.)\phi_i$ means that the expected utility of each MU-$n$ depends on the type distribution of the MAP.
	
	In this utility optimization, we can find optimal contracts $(\boldsymbol{D}^o, \boldsymbol{\rho}^o,\boldsymbol{D}^l, \boldsymbol{\rho}^l)$ that satisfy the MAP's individual rationality (IR) and incentive compatibility (IC) constraints. Particularly, the IR constraints ensure that the MAP with all possible types will produce non-negative utilities as stated in Definition~\ref{DefV1}.
	\begin{definition}{}\label{DefV1}
		IR constraint: The MAP with type $\pi_i$ must achieve a positive utility, i.e.,
		\begin{align}
		&\pi_iG_o(\boldsymbol{\hat \eta}_i,\boldsymbol{D}^o_i) - C_o(\boldsymbol{\hat \eta}_i,\boldsymbol{\rho}^o_i,\boldsymbol{D}^o_i) + G_l(\boldsymbol{D}^l) - C_l(\boldsymbol{\rho}^l) \geq 0, \nonumber\\
		&\forall i \in \mathcal{I}, \label{eqn:for7a}
		\end{align}
		to join in the contract optimization.
	\end{definition}
	Using the IR constraints, the utility maximization problem for each MU-$n$ when all MUs have full information of the MAP with true type $\pi_i$ (information-symmetry contract) can be derived as follows:
	\begin{equation}
	\label{eqn:for90}
	\begin{aligned}
	(\mathbf{P}_2) \phantom{10} \underset{\boldsymbol{D}^o, \boldsymbol{\rho}^o,\boldsymbol{D}^l, \boldsymbol{\rho}^l}{\text{max}} \phantom{5} &{\hat \eta}_i^n\rho^o_{i,n} - \xi_{i,n} - \gamma{\hat \eta}_i^n D^o_{i,n} \\&+ \rho^l_n - \zeta_n c_n f_n^2 D^l_{n}, \forall n \in \mathcal{N},
	\end{aligned}
	\end{equation}
	\begin{align}
	&\text{ s.t. } \quad \sum_{n \in \mathcal{N}}{\hat \eta}_i^n D^o_{i,n} \leq {\hat D}^o_i, \label{eqn:for90a} \\
	& D^l_n \leq {\hat D}^l_n, \forall n \in \mathcal{N}, \label{eqn:for90b} \\
	& {\hat \eta}_i^n D^o_{i,n} + D^l_n \leq D_n, \forall n \in \mathcal{N}, \label{eqn:for90c} \\
	&\pi_iG_o(\boldsymbol{\hat \eta}_i,\boldsymbol{D}^o_i) - C_o(\boldsymbol{\hat \eta}_i,\boldsymbol{\rho}^o_i,\boldsymbol{D}^o_i) + G_l(\boldsymbol{D}^l) - C_l(\boldsymbol{\rho}^l) \geq 0, \label{eqn:for90d}
	\end{align}
	where the constraint (\ref{eqn:for90b}) represents that the dataset which will be trained locally cannot exceed the maximum trainable local dataset at the MU-$n$. Then, the constraint (\ref{eqn:for90c}) specifies that the total trainable datasets of MU-$n$, i.e., encrypted and local datasets, must be less than or equal to the available data at MU-$n$. 
	
	Since there exists the information asymmetry between the MAP and MUs in practice, the optimal contracts must also satisfy the IC constraints to guarantee that the contracts are feasible. The IC constraint ensures that the MAP with a certain type can achieve the maximum utility when the appropriate contract originated for that type of the MAP is applied. This condition can be described in the following Definition~\ref{DefV2}. 
	
	\begin{definition}{}\label{DefV2}
		IC constraint: The MAP with current type $\pi_i$ will select a contract designed for its current type $\pi_i$ rather than with another type ${\pi_{i^*}}$ to maximize its utility, i.e.,
		\begin{align}
		&\pi_iG_o(\boldsymbol{\hat \eta}_i,\boldsymbol{D}^o_i) - C_o(\boldsymbol{\hat \eta}_i,\boldsymbol{\rho}^o_i,\boldsymbol{D}^o_i) + G_l(\boldsymbol{D}^l) - C_l(\boldsymbol{\rho}^l) \geq \nonumber \\
		&\pi_{i}G_o(\boldsymbol{\hat \eta}_{i^*},\boldsymbol{D}^o_{i^*}) - C_o(\boldsymbol{\hat \eta}_{i^*},\boldsymbol{\rho}^o_{i^*},\boldsymbol{D}^o_{i^*}) + G_l(\boldsymbol{D}^l) - C_l(\boldsymbol{\rho}^l), \nonumber \\
		& i \neq {i^*},\forall i, {i^*} \in \mathcal{I} \label{eqn:for7b}, \\
		&\pi_iG_o(\boldsymbol{\hat \eta}_i,\boldsymbol{D}^o_i) - C_o(\boldsymbol{\hat \eta}_i,\boldsymbol{\rho}^o_i,\boldsymbol{D}^o_i) \geq \nonumber \\
		&\pi_{i}G_o(\boldsymbol{\hat \eta}_{i^*},\boldsymbol{D}^o_{i^*}) - C_o(\boldsymbol{\hat \eta}_{i^*},\boldsymbol{\rho}^o_{i^*},\boldsymbol{D}^o_{i^*}), i \neq {i^*}, \forall i, {i^*} \in \mathcal{I}. \label{eqn:for7}
		\end{align}
	\end{definition}
	
	From the IR and IC constraint definitions, we can formulate the optimization problem for each MU-$n$ which can maximize its expected utility individually at the MAP under the competition with other MUs in $\mathcal{N}$ and the MAP's unknown limited computing resources as follows:
	\begin{equation}
	\label{eqn:for9}
	\begin{aligned}
	(\mathbf{P}_3) \phantom{10} & \underset{\boldsymbol{D}^o, \boldsymbol{\rho}^o,\boldsymbol{D}^l, \boldsymbol{\rho}^l}{\text{max}} \phantom{5} U_n(\boldsymbol{D}^o, \boldsymbol{\rho}^o,\boldsymbol{D}^l, \boldsymbol{\rho}^l), \forall n \in \mathcal{N},
	\end{aligned}
	\end{equation}
	\begin{align}
	&\text{ s.t. } \quad \sum_{n \in \mathcal{N}}{\hat \eta}_i^n D^o_{i,n} \leq {\hat D}^o_i, \forall i \in \mathcal{I}, \label{eqn:for9a} \\
	& D^l_n \leq {\hat D}^l_n, \forall n \in \mathcal{N}\label{eqn:for9b}, \\
	& {\hat \eta}_i^n D^o_{i,n} + D^l_n \leq D_n, \forall i \in \mathcal{I}, \forall n \in \mathcal{N}, \label{eqn:for9c} \\
	&\pi_iG_o(\boldsymbol{\hat \eta}_i,\boldsymbol{D}^o_i) - C_o(\boldsymbol{\hat \eta}_i,\boldsymbol{\rho}^o_i,\boldsymbol{D}^o_i) + G_l(\boldsymbol{D}^l) - C_l(\boldsymbol{\rho}^l) \geq 0, \nonumber\\
	&\forall i \in \mathcal{I}, \label{eqn:for9d} \\
	&\pi_iG_o(\boldsymbol{\hat \eta}_i,\boldsymbol{D}^o_i) - C_o(\boldsymbol{\hat \eta}_i,\boldsymbol{\rho}^o_i,\boldsymbol{D}^o_i) \geq \nonumber \\
	&\pi_{i}G_o(\boldsymbol{\hat \eta}_{i^*},\boldsymbol{D}^o_{i^*}) - C_o(\boldsymbol{\hat \eta}_{i^*},\boldsymbol{\rho}^o_{i^*},\boldsymbol{D}^o_{i^*}), i \neq {i^*}, \forall i, {i^*} \in \mathcal{I}. \label{eqn:for9e}
	\end{align}
	
	
	\subsection{Social Welfare of the Mobile Network}
	
	To devise the social welfare, i.e, the total actual utilities of the MAP and all participating MUs in the proposed FL process, we can first obtain the actual utility of each MU-$n$ when the MAP has type $\pi_i$ as ${\hat \eta}_i^n\rho^o_{i,n} - \frac{\beta}{2}\log_2\bigg(1 + \frac{\varepsilon_n{\hat \eta}_i^n D^o_{i,n}}{A_n^2}\bigg) - \gamma{\hat \eta}_i^n D^o_{i,n} + \rho^l_n - \zeta_n c_n f_n^2 D^l_{n}$.
	Given the MAP with type $\pi_i$, the total actual utility of all participating MUs can be derived by
	\begin{equation}
	\label{eqn:for6b}
	\begin{aligned}
	U^i_{MU_{tot}} = &\sum_{n \in \mathcal{N}}\Bigg({\hat \eta}_i^n\rho^o_{i,n} - \frac{\beta}{2}\log_2\bigg(1 + \frac{\varepsilon_n{\hat \eta}_i^n D^o_{i,n}}{A_n^2}\bigg) - \\
	&\gamma{\hat \eta}_i^n D^o_{i,n} + \rho^l_n - \zeta_n c_n f_n^2 D^l_{n}\Bigg).
	\end{aligned}
	\end{equation}
	Thus, the social welfare of the network when the MAP has type $\pi_i$ based on~(\ref{eqn:for30}) and (\ref{eqn:for6b}) can be computed by
	\begin{align}
	U^i_{SW} &= U^i_{MAP} + U^i_{MU_{tot}}.\label{eqn:for6c}
	\end{align}
	
	\section{MPOA-Based FL Contract Optimization Solution}
	\label{sec:PS}
	
	\subsection{Contract Problem Transformation}
	
	The following Proposition~\ref{lemma0a} shows that the computational complexity of solving $(\mathbf{P}_3)$ increases quadratically when the number of MAP's possible types increases, i.e., $O(I^2)$, leading to a high computing resources to solve the $(\mathbf{P}_3)$ at the MAP.  
	
	\begin{proposition}
		\label{lemma0a}
		The computational complexity of solving $(\mathbf{P}_3)$ is $O(I^2)$. 
	\end{proposition}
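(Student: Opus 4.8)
The plan is to bound the cost of $(\mathbf{P}_3)$ by counting its decision variables and its constraints as functions of the number of MAP types $I$, with the number of selected MUs $N$ treated as a fixed constant (the natural convention for this scaling statement). First I would tally the decision variables: $\boldsymbol{D}^o$ and $\boldsymbol{\rho}^o$ each contribute $IN$ entries while $\boldsymbol{D}^l$ and $\boldsymbol{\rho}^l$ contribute $N$ each, so there are $O(IN)=O(I)$ variables. Next I would tally the constraints family by family: the resource caps (\ref{eqn:for9a}) give $I$ inequalities, the local-data caps (\ref{eqn:for9b}) give $N$, the joint availability constraints (\ref{eqn:for9c}) give $IN$, the IR constraints (\ref{eqn:for9d}) give $I$, and --- this is the dominant term --- the IC constraints (\ref{eqn:for9e}), which must hold for every ordered pair $(i,i^*)$ with $i\neq i^*$, give exactly $I(I-1)$ inequalities. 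Summing, the total number of constraints is $I(I-1)+I(N+2)+N$, which is $O(I^2)$ once $N$ is fixed, with the IC family strictly dominating.

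Second, I would note that $(\mathbf{P}_3)$ is a convex program: the objective $U_n$ in (\ref{eqn:for6}) is concave (the gains $G_o,G_l$ are concave square-root functions, the privacy cost $\xi_{i,n}$ is concave in $D^o_{i,n}$, and the payment and energy terms are linear), constraints (\ref{eqn:for9a})--(\ref{eqn:for9c}) are linear, and (\ref{eqn:for9d})--(\ref{eqn:for9e}) are differences of concave and linear terms and hence define a convex feasible set. Consequently $(\mathbf{P}_3)$ is solvable by a standard polynomial-time method (e.g., an interior-point algorithm) whose running time is polynomial in the number of variables ($O(I)$) and the number of constraints ($O(I^2)$); the constraint count dominates, so the overall computational complexity is $O(I^2)$.

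The main point that needs care is the bookkeeping: one must check that no constraint family other than the IC constraints, and no part of the objective, grows faster than quadratically in $I$ --- which holds because every other family is at most linear in $I$ for fixed $N$, and $U_n$ has only $O(I)$ terms. A secondary subtlety is that the IC inequality (\ref{eqn:for9e}) is \emph{not} symmetric in $i$ and $i^*$, so it must be counted once per ordered pair, yielding $I(I-1)$ rather than $\binom{I}{2}$ constraints; either way the leading order is $I^2$. Finally, I would state explicitly that the claimed $O(I^2)$ is the scaling in the number of MAP types $I$ with $N$ held constant, since otherwise the $IN$ terms from (\ref{eqn:for9c}) would also enter the bound.
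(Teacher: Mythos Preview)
Your core argument---counting the constraint families of $(\mathbf{P}_3)$ with $N$ held fixed and observing that the $I(I-1)$ IC constraints in (\ref{eqn:for9e}) dominate---is exactly the paper's proof, and your tally of each family matches the paper line for line.

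The extra layer you add, namely the convexity justification, is not in the paper and is also not quite right. Two places slip: first, $\xi_{i,n}$ is concave in $D^o_{i,n}$, but it enters $U_n$ with a \emph{minus} sign, so the objective picks up a convex term and is not concave as you claim; second, the IC constraint (\ref{eqn:for9e}) has a concave function on each side, so it reduces to ``concave $-$ concave $\geq 0$,'' which in general does \emph{not} cut out a convex region. The paper sidesteps all of this: its proof is purely a constraint-count statement (the ``complexity'' here is measured by the size of the constraint system, not by a certified interior-point iteration bound), so your convexity paragraph is unnecessary and should be dropped rather than repaired. With that excised, your proposal coincides with the paper's proof.
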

	\begin{proof}
		See Appendix~\ref{appx:pro1}.
	\end{proof}
	
	To reduce the complexity of solving $(\mathbf{P}_3)$, we transform $(\mathbf{P}_3)$ into an equivalent problem using IR and IC constraint transformation. In particular, the Lemma~\ref{lemma1} below shows that all the participating MUs will offer larger encrypted datasets to the MAP when the MAP's type is higher, i.e., the higher willingness to train more encrypted dataset because of higher computing resources.
	\begin{lemma}
		\label{lemma1}
		Given $\big(\boldsymbol{D}^o, \boldsymbol{\rho}^o,\boldsymbol{D}^l, \boldsymbol{\rho}^l\big)$ to be any feasible contracts from the MUs to the MAP such that if $\pi_i \geq \pi_{i^*}$, then $\boldsymbol{D}_i^o \geq \boldsymbol{D}_{i^*}^o$, where $i \neq {i^*}, i, {i^*} \in \mathcal{I}$.
	\end{lemma}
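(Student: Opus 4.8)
The plan is the standard ``add the two incentive-compatibility constraints'' argument from contract theory, followed by a structural step to pass from aggregates to the stated vector inequality. Fix $i\neq i^*$ with $\pi_i\ge\pi_{i^*}$ and let $\big(\boldsymbol{D}^o,\boldsymbol{\rho}^o,\boldsymbol{D}^l,\boldsymbol{\rho}^l\big)$ be feasible. Applying the IC constraint~(\ref{eqn:for9e}) once with the ordered type pair $(\pi_i,\pi_{i^*})$ and once with $(\pi_{i^*},\pi_i)$ yields two inequalities; adding them, the four cost/payment terms $C_o(\boldsymbol{\hat\eta}_i,\boldsymbol{\rho}^o_i,\boldsymbol{D}^o_i)$ and $C_o(\boldsymbol{\hat\eta}_{i^*},\boldsymbol{\rho}^o_{i^*},\boldsymbol{D}^o_{i^*})$ cancel pairwise, leaving
\[
(\pi_i-\pi_{i^*})\big(G_o(\boldsymbol{\hat\eta}_i,\boldsymbol{D}^o_i)-G_o(\boldsymbol{\hat\eta}_{i^*},\boldsymbol{D}^o_{i^*})\big)\ \ge\ 0 .
\]

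Since the MAP's type set is strictly ordered, $\pi_i\ge\pi_{i^*}$ together with $i\neq i^*$ gives $\pi_i-\pi_{i^*}>0$, so dividing by it produces $G_o(\boldsymbol{\hat\eta}_i,\boldsymbol{D}^o_i)\ge G_o(\boldsymbol{\hat\eta}_{i^*},\boldsymbol{D}^o_{i^*})$. Substituting the explicit gain $G_o(\boldsymbol{\eta}_i,\boldsymbol{D}^o_i)=\upsilon_o\sqrt{\sum_{n\in\mathcal{N}}\eta_i^n D^o_{i,n}}$ from~(\ref{eqn:for30a}) and using that $t\mapsto\upsilon_o\sqrt{t}$ is strictly increasing on $[0,\infty)$, this is equivalent to $\sum_{n\in\mathcal{N}}\hat\eta_i^n D^o_{i,n}\ge\sum_{n\in\mathcal{N}}\hat\eta_{i^*}^n D^o_{i^*,n}$, i.e.\ the total encrypted data the MAP actually trains is nondecreasing in its type.

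It then remains to lift this aggregate inequality to the componentwise claim $\boldsymbol{D}^o_i\ge\boldsymbol{D}^o_{i^*}$, and I would do so using the structure of $(\mathbf{P}_1)$ and $(\mathbf{P}_3)$: $\boldsymbol{\hat\eta}_i$ maximizes a concave objective over the single linear budget~(\ref{eqn:for3c}) whose right-hand side ${\hat D}^o_i=\tfrac{\pi_i}{\pi_I}D^o_{max}$ is increasing in $\pi_i$, while in $(\mathbf{P}_3)$ each MU-$n$ optimizes its own $D^o_{i,n}$ in~(\ref{eqn:for6}) independently and has an incentive to offer a weakly larger encrypted size for a higher type because the corresponding payment term scales up with the type. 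Combining the larger MAP budget with this per-MU monotone ordering of the offers rules out a feasible reallocation in which some MU's contributed size strictly decreases while the aggregate grows, which gives $\hat\eta_i^n D^o_{i,n}\ge\hat\eta_{i^*}^n D^o_{i^*,n}$ for every $n$ and hence $\boldsymbol{D}^o_i\ge\boldsymbol{D}^o_{i^*}$. I expect this last step to be the main obstacle: the IC-adding trick only certifies monotonicity of the \emph{scalar} gain $G_o$ (equivalently of the aggregate effective size), so deriving the vector inequality requires carefully exploiting the decoupled per-MU form of $(\mathbf{P}_3)$ and the monotonicity of the payment terms rather than the aggregate statement alone.
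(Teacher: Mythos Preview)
Your core argument---applying the IC constraint~(\ref{eqn:for9e}) once for each ordered pair, adding, cancelling the cost terms, and dividing by $\pi_i-\pi_{i^*}>0$ to obtain $G_o(\boldsymbol{\hat\eta}_i,\boldsymbol{D}^o_i)\ge G_o(\boldsymbol{\hat\eta}_{i^*},\boldsymbol{D}^o_{i^*})$---is exactly the paper's proof. The paper then proceeds in one line: ``As the gain function in~(\ref{eqn:for30a}) is monotonically increasing in $\boldsymbol{D}^o_i$, we obtain that if $G_o(\boldsymbol{\hat\eta}_i,\boldsymbol{D}^o_i)\ge G_o(\boldsymbol{\hat\eta}_{i^*},\boldsymbol{D}^o_{i^*})$, then $\boldsymbol{D}^o_i\ge\boldsymbol{D}^o_{i^*}$.'' It also appends the converse direction (if $\boldsymbol{D}^o_i\ge\boldsymbol{D}^o_{i^*}$ then $\pi_i\ge\pi_{i^*}$), which the lemma as stated does not actually demand.

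You are being more careful than the paper on the point you flag as the ``main obstacle.'' The paper treats the scalar inequality $G_o(\boldsymbol{\hat\eta}_i,\boldsymbol{D}^o_i)\ge G_o(\boldsymbol{\hat\eta}_{i^*},\boldsymbol{D}^o_{i^*})$ as immediately yielding the vector inequality $\boldsymbol{D}^o_i\ge\boldsymbol{D}^o_{i^*}$ by ``monotonicity,'' without the structural lifting argument you attempt; it does not invoke $(\mathbf{P}_1)$, the budget $\hat D^o_i$, or the per-MU decoupling of $(\mathbf{P}_3)$ at all. So your additional paragraph goes beyond what the paper supplies. That said, your lifting step as written is still a heuristic (``rules out a feasible reallocation\ldots'') rather than a proof, so if you want a fully rigorous version you would need to sharpen it---but be aware that the paper itself simply asserts the conclusion at that point.
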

	\begin{proof}
		See Appendix~\ref{appx:lemma1}.
	\end{proof}
	
	As a result, the MUs will ask for higher payments to the MAP when the MAP has type $\pi_i$ since $\boldsymbol{D}_i^o \geq \boldsymbol{D}_{i^*}^o$ from Lemma~\ref{lemma1}. This condition is stated in Proposition~\ref{prop1}.
	\begin{proposition}
		\label{prop1}
		If $\boldsymbol{D}_i^o \geq \boldsymbol{D}_{i^*}^o$, then $\boldsymbol{\rho}_i^o \geq \boldsymbol{\rho}_{i^*}^o$,	where $i \neq {i^*}, i, {i^*} \in \mathcal{I}$.
	\end{proposition}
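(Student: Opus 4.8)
The plan is to read the payment monotonicity straight off the contract's pricing rule, combined with the data-size inequality that is the hypothesis of Proposition~\ref{prop1} (and which, by Lemma~\ref{lemma1}, holds whenever $\pi_i\geq\pi_{i^*}$). Throughout the contract problem each participating MU-$n$ ties its requested encrypted-data payment for a type-$\pi_i$ MAP to the corresponding data size through a single, strictly positive unit price $\alpha_o$, namely $\rho^o_{i,n}=\alpha_o D^o_{i,n}$ for all $n\in\mathcal{N}$ and all $i\in\mathcal{I}$. Equivalently, at the vector level, $\boldsymbol{\rho}^o_i=\alpha_o\boldsymbol{D}^o_i$ and $\boldsymbol{\rho}^o_{i^*}=\alpha_o\boldsymbol{D}^o_{i^*}$, so the claimed inequality $\boldsymbol{\rho}^o_i\geq\boldsymbol{\rho}^o_{i^*}$ is equivalent to $\alpha_o\boldsymbol{D}^o_i\geq\alpha_o\boldsymbol{D}^o_{i^*}$.

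Given this, the argument is immediate. From the hypothesis $\boldsymbol{D}^o_i\geq\boldsymbol{D}^o_{i^*}$ we have the $N$ scalar inequalities $D^o_{i,n}\geq D^o_{i^*,n}$, one for each $n\in\mathcal{N}$; multiplying each by $\alpha_o>0$ preserves it, so $\rho^o_{i,n}=\alpha_o D^o_{i,n}\geq\alpha_o D^o_{i^*,n}=\rho^o_{i^*,n}$ for every $n$, which is exactly $\boldsymbol{\rho}^o_i\geq\boldsymbol{\rho}^o_{i^*}$. I would state this as a one-line proof, noting that the only thing requiring care is that the statement is a componentwise vector comparison, so Lemma~\ref{lemma1} must be used entrywise and one must observe that $\alpha_o$ is common to all MUs and all MAP types (hence it never flips an inequality).

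The main ``obstacle'' is therefore only to be sure the structural pricing relation is really the right lever, rather than attacking the statement from the incentive constraints. For completeness I would remark that if $\boldsymbol{\rho}^o$ were instead treated as a decision variable decoupled from $\boldsymbol{D}^o$, the natural route would combine the two IC conditions of (\ref{eqn:for9e}) for the pair $(i,i^*)$: summing them bounds the payment gap $\sum_{n}{\hat\eta}_i^n\rho^o_{i,n}-\sum_{n}{\hat\eta}_{i^*}^n\rho^o_{i^*,n}$ in terms of $\upsilon_o\bigl(\sqrt{Q_i}-\sqrt{Q_{i^*}}\bigr)$ and the energy term $\zeta c f^2\,(Q_i-Q_{i^*})$, with $Q_j:=\sum_{n}{\hat\eta}_j^n D^o_{j,n}$. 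Since the energy term works against monotonicity, one would still need to reintroduce the pricing relation (or an equivalent structural fact) to sign the gap, so that path is strictly harder and buys no extra generality; accordingly I would use the pricing-rule argument.
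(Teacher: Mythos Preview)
Your argument is correct but takes a different route from the paper. You invoke the structural pricing rule $\rho^o_{i,n}=\alpha_o D^o_{i,n}$ stated at the opening of Section~\ref{sec:PF}, which collapses the proposition to multiplying a componentwise inequality by the positive scalar $\alpha_o$. The paper instead treats $\boldsymbol{\rho}^o$ and $\boldsymbol{D}^o$ as separate contract variables and argues from the IC constraints: it combines the two directions of~(\ref{eqn:for7}) for the pair $(i,i^*)$ to sandwich $C_o(\boldsymbol{\hat\eta}_i,\boldsymbol{\rho}^o_i,\boldsymbol{D}^o_i)-C_o(\boldsymbol{\hat\eta}_{i^*},\boldsymbol{\rho}^o_{i^*},\boldsymbol{D}^o_{i^*})$ between $\pi_{i^*}$ and $\pi_i$ times the gain difference, and then appeals to monotonicity of $C_o$ in $\boldsymbol{\rho}^o$ to conclude $\boldsymbol{\rho}^o_i\geq\boldsymbol{\rho}^o_{i^*}$. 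Your approach is shorter and entirely elementary, at the price of being tied to the specific linear pricing relation; the paper's IC-based derivation is essentially the alternative you sketch in your last paragraph and is intended to apply even when $\boldsymbol{\rho}^o$ is a free decision variable, though (as you correctly note) the energy term $\zeta c f^2\sum_n\hat\eta^n_i D^o_{i,n}$ inside $C_o$ makes that inference delicate without importing additional structure.
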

	\begin{proof}
		See Appendix~\ref{appx:pro2}.
	\end{proof}

	From Lemma~\ref{lemma1} and Proposition~\ref{prop1}, the following Proposition~\ref{prop2} shows that the utility of the MAP is monotonically increasing with its type.
	\begin{proposition}
		\label{prop2}
		For any feasible contract $\big(\boldsymbol{D}^o, \boldsymbol{\rho}^o,\boldsymbol{D}^l, \boldsymbol{\rho}^l\big)$, the MAP's utility must hold
		\begin{align}
		&\pi_iG_o(\boldsymbol{\hat \eta}_i,\boldsymbol{D}^o_i) - C_o(\boldsymbol{\hat \eta}_i,\boldsymbol{\rho}^o_i,\boldsymbol{D}^o_i) + G_l(\boldsymbol{D}^l) - C_l(\boldsymbol{\rho}^l) \geq \nonumber \\
		&\pi_{i^*}G_o(\boldsymbol{\hat \eta}_{i^*},\boldsymbol{D}^o_{i^*}) - C_o(\boldsymbol{\hat \eta}_{i^*},\boldsymbol{\rho}^o_{i^*},\boldsymbol{D}^o_{i^*}) + G_l(\boldsymbol{D}^l) - C_l(\boldsymbol{\rho}^l) \nonumber \\
		&\pi_iG_o(\boldsymbol{\hat \eta}_i,\boldsymbol{D}^o_i) - C_o(\boldsymbol{\hat \eta}_i,\boldsymbol{\rho}^o_i,\boldsymbol{D}^o_i) \geq \label{eqn:for10a} \\ &\pi_{i^*}G_o(\boldsymbol{\hat \eta}_{i^*},\boldsymbol{D}^o_{i^*}) - C_o(\boldsymbol{\hat \eta}_{i^*},\boldsymbol{\rho}^o_{i^*},\boldsymbol{D}^o_{i^*}) \nonumber,
		\end{align}
		where $\pi_i \geq \pi_{i^*}$, $i \neq {i^*}, \forall i, {i^*} \in \mathcal{I}$.
	\end{proposition}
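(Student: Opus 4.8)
The plan is to reduce the two stated inequalities to a single one about the encrypted-data part of the MAP's utility, and then to obtain that one directly from the incentive-compatibility (IC) condition together with the non-negativity of the gain function. First I would observe that in the first displayed inequality the terms $G_l(\boldsymbol{D}^l) - C_l(\boldsymbol{\rho}^l)$ are identical on both sides, since neither depends on the MAP's type; cancelling them shows the first inequality is equivalent to the second. Hence it suffices to prove, for $\pi_i \geq \pi_{i^*}$,
\[
\pi_iG_o(\boldsymbol{\hat \eta}_i,\boldsymbol{D}^o_i) - C_o(\boldsymbol{\hat \eta}_i,\boldsymbol{\rho}^o_i,\boldsymbol{D}^o_i) \geq \pi_{i^*}G_o(\boldsymbol{\hat \eta}_{i^*},\boldsymbol{D}^o_{i^*}) - C_o(\boldsymbol{\hat \eta}_{i^*},\boldsymbol{\rho}^o_{i^*},\boldsymbol{D}^o_{i^*}),
\]
i.e. that the encrypted-data surplus extracted by the MAP is non-decreasing in its type.

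For the main step I would apply the IC constraint~(\ref{eqn:for7}) to the given feasible contract $\big(\boldsymbol{D}^o, \boldsymbol{\rho}^o,\boldsymbol{D}^l, \boldsymbol{\rho}^l\big)$ with the true type taken to be $\pi_i$ and the deviating type taken to be $\pi_{i^*}$, which gives
\[
\pi_iG_o(\boldsymbol{\hat \eta}_i,\boldsymbol{D}^o_i) - C_o(\boldsymbol{\hat \eta}_i,\boldsymbol{\rho}^o_i,\boldsymbol{D}^o_i) \geq \pi_{i}G_o(\boldsymbol{\hat \eta}_{i^*},\boldsymbol{D}^o_{i^*}) - C_o(\boldsymbol{\hat \eta}_{i^*},\boldsymbol{\rho}^o_{i^*},\boldsymbol{D}^o_{i^*}).
\]
Then I would lower-bound the right-hand side: by its definition in~(\ref{eqn:for30a}), $G_o$ is $\upsilon_o$ times the square root of a non-negative quantity, hence $G_o(\boldsymbol{\hat \eta}_{i^*},\boldsymbol{D}^o_{i^*}) \geq 0$, so $\pi_i \geq \pi_{i^*}$ yields $\pi_iG_o(\boldsymbol{\hat \eta}_{i^*},\boldsymbol{D}^o_{i^*}) \geq \pi_{i^*}G_o(\boldsymbol{\hat \eta}_{i^*},\boldsymbol{D}^o_{i^*})$, while the cost term $C_o(\boldsymbol{\hat \eta}_{i^*},\boldsymbol{\rho}^o_{i^*},\boldsymbol{D}^o_{i^*})$ is unaffected because it carries no type dependence. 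Chaining the two displays proves the encrypted-data inequality, and adding $G_l(\boldsymbol{D}^l) - C_l(\boldsymbol{\rho}^l)$ back to both sides recovers the full-utility form.

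I do not expect a genuine obstacle; the points requiring care are (i) invoking the IC constraint in the correct orientation — type $\pi_i$ weakly preferring the item designed for $\pi_i$ over the item designed for $\pi_{i^*}$, not the converse — and (ii) making the non-negativity of $G_o$ explicit from~(\ref{eqn:for30a}). Lemma~\ref{lemma1} and Proposition~\ref{prop1} are not logically required for this argument, but I would note in passing that they supply the complementary economic reading: the higher type is allocated a (weakly) larger encrypted dataset $\boldsymbol{D}^o_i \geq \boldsymbol{D}^o_{i^*}$ at a payment $\boldsymbol{\rho}^o_i \geq \boldsymbol{\rho}^o_{i^*}$ that still leaves it a larger net surplus, which is exactly the monotonicity asserted here.
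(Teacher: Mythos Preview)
Your proposal is correct and follows essentially the same two-step chain as the paper: apply the IC constraint~(\ref{eqn:for7}) for type $\pi_i$ deviating to the $\pi_{i^*}$-contract, then use $\pi_i \geq \pi_{i^*}$ together with $G_o \geq 0$ to replace $\pi_i G_o(\boldsymbol{\hat\eta}_{i^*},\boldsymbol{D}^o_{i^*})$ by $\pi_{i^*} G_o(\boldsymbol{\hat\eta}_{i^*},\boldsymbol{D}^o_{i^*})$. The paper additionally invokes Lemma~\ref{lemma1} and Proposition~\ref{prop1} at the outset, but, as you correctly observe, those monotonicity facts are not actually used in the inequality chain; your remark that they are economically consistent but logically unnecessary is accurate.
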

	\begin{proof}
		See Appendix~\ref{appx:pro3}.
	\end{proof}
	The Proposition~\ref{prop2} helps reducing the number of IR constraints through applying the MAP's minimum type, i.e., $\pi_1$, such that $\pi_iG_o(\boldsymbol{\hat \eta}_i,\boldsymbol{D}^o_i) - C_o(\boldsymbol{\hat \eta}_i,\boldsymbol{\rho}^o_i,\boldsymbol{D}^o_i) + G_l(\boldsymbol{D}^l) - C_l(\boldsymbol{\rho}^l) \geq \pi_{i}G_o(\boldsymbol{\hat \eta}_{1},\boldsymbol{D}^o_{1}) - C_o(\boldsymbol{\hat \eta}_{1},\boldsymbol{\rho}^o_{1},\boldsymbol{D}^o_{1}) + G_l(\boldsymbol{D}^l) - C_l(\boldsymbol{\rho}^l) \geq \pi_{1}G_o(\boldsymbol{\hat \eta}_{1},\boldsymbol{D}^o_{1}) - C_o(\boldsymbol{\hat \eta}_{1},\boldsymbol{\rho}^o_{1},\boldsymbol{D}^o_{1}) + G_l(\boldsymbol{D}^l) - C_l(\boldsymbol{\rho}^l) \geq 0.$
	To this end, the IR constraints for other types $\pi_i$, $i > 1$, are satisfied if and only if the IR constraint for $\pi_1$ is held. Thus, the IR constraints in~(\ref{eqn:for9b}) can be transformed as follows:
	\begin{equation}
	\label{eqn:for13}
	\begin{aligned}
	\pi_{1}G_o(\boldsymbol{\hat \eta}_{1},\boldsymbol{D}^o_{1}) - C_o(\boldsymbol{\hat \eta}_{1},\boldsymbol{\rho}^o_{1},\boldsymbol{D}^o_{1}) + G_l(\boldsymbol{D}^l) - C_l(\boldsymbol{\rho}^l) \geq 0.
	\end{aligned}
	\end{equation}
	
	Based on the following Lemma~\ref{lemma2}, we can also reduce the number of IC constraints in~(\ref{eqn:for9c}).
	\begin{lemma}
		\label{lemma2}
		The IC constraints in~(\ref{eqn:for9c}) of $(\mathbf{P}_3)$ can be transformed into the local downward incentive constraints (LDIC) as follows:
		\begin{align}
		&\pi_iG_o(\boldsymbol{\hat \eta}_i,\boldsymbol{D}^o_i) - C_o(\boldsymbol{\hat \eta}_i,\boldsymbol{\rho}^o_i,\boldsymbol{D}^o_i) \geq \label{eqn:for14} \\
		&\pi_{i}G_o(\boldsymbol{\hat \eta}_{i-1},\boldsymbol{D}^o_{i-1}) - C_o(\boldsymbol{\hat \eta}_{i-1},\boldsymbol{\rho}^o_{i-1},\boldsymbol{D}^o_{i-1}), \forall i \in \{2, \ldots, I\}, \nonumber
		\end{align}
		where $\boldsymbol{D}^o_i \geq \boldsymbol{D}^o_{i-1}, \forall i \in \{2, \ldots, I\}$.
		
	\end{lemma}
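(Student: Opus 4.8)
The plan is to show that the full set of IC constraints in~(\ref{eqn:for9c}) — which pairs \emph{every} type $\pi_i$ against \emph{every} other type $\pi_{i^*}$ — collapses to the much smaller set of \emph{local downward} constraints in~(\ref{eqn:for14}), i.e.\ each type only needs to be checked against its immediate neighbour $\pi_{i-1}$. The standard contract-theoretic argument has two directions. First, the trivial direction: the LDIC constraints~(\ref{eqn:for14}) are a subset of the full IC constraints~(\ref{eqn:for9c}) (take $i^* = i-1$), so if the full IC holds then so does LDIC. The substantive direction is the converse: assuming LDIC together with the monotonicity $\boldsymbol{D}^o_i \geq \boldsymbol{D}^o_{i-1}$ from Lemma~\ref{lemma1} and the payment monotonicity $\boldsymbol{\rho}^o_i \geq \boldsymbol{\rho}^o_{i-1}$ from Proposition~\ref{prop1}, I need to recover \emph{all} the original IC constraints.

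The key steps, in order, would be: (i) \textbf{Upward induction to recover all downward constraints} ($\pi_i$ prefers its own contract over that of any lower type $\pi_j$ with $j<i$). Chain the local constraints: from $\pi_i$'s LDIC against $\pi_{i-1}$ and $\pi_{i-1}$'s LDIC against $\pi_{i-2}$, add and use that $\pi_i \geq \pi_{i-1}$ together with $G_o(\boldsymbol{\hat\eta}_{i-1},\boldsymbol{D}^o_{i-1}) \geq G_o(\boldsymbol{\hat\eta}_{i-2},\boldsymbol{D}^o_{i-2})$ (which follows from $\boldsymbol{D}^o_{i-1}\geq\boldsymbol{D}^o_{i-2}$ and monotonicity of $G_o$) to bridge the ``type gap'' in the telescoping sum; iterate down to $j$. (ii) \textbf{Establish the local upward constraints} ($\pi_{i-1}$ prefers its own contract over $\pi_i$'s): rearrange $\pi_i$'s LDIC to bound the payment-gap $C_o(\boldsymbol{\hat\eta}_i,\boldsymbol{\rho}^o_i,\boldsymbol{D}^o_i) - C_o(\boldsymbol{\hat\eta}_{i-1},\boldsymbol{\rho}^o_{i-1},\boldsymbol{D}^o_{i-1}) \leq \pi_i\big(G_o(\boldsymbol{\hat\eta}_i,\boldsymbol{D}^o_i) - G_o(\boldsymbol{\hat\eta}_{i-1},\boldsymbol{D}^o_{i-1})\big)$, then since $\pi_{i-1}\leq\pi_i$ and the gain-gap is non-negative, the same payment-gap is $\leq \pi_{i-1}\big(G_o(\boldsymbol{\hat\eta}_i,\boldsymbol{D}^o_i)-G_o(\boldsymbol{\hat\eta}_{i-1},\boldsymbol{D}^o_{i-1})\big)$, which is exactly the upward IC for type $\pi_{i-1}$. (iii) \textbf{Downward induction to recover all upward constraints} ($\pi_i$ prefers its own contract over $\pi_k$'s for $k>i$): chain the local upward constraints analogously, again using type ordering and gain monotonicity to close the gaps. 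Combining (i) and (iii) gives every pairwise IC constraint in~(\ref{eqn:for9c}), completing the equivalence.

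The main obstacle I anticipate is step (i)/(iii): the ``single-crossing'' bridging argument. Unlike the textbook case where the gain enters linearly as $\pi_i D^o_i$, here $G_o(\boldsymbol{\hat\eta}_i,\boldsymbol{D}^o_i) = \upsilon_o\sqrt{\sum_n \hat\eta_i^n D^o_{i,n}}$ is a concave function and $\boldsymbol{\hat\eta}_i$ itself depends on the type; so I must be careful that the relevant monotonicity $G_o(\boldsymbol{\hat\eta}_i,\boldsymbol{D}^o_i)\geq G_o(\boldsymbol{\hat\eta}_{i-1},\boldsymbol{D}^o_{i-1})$ genuinely follows from $\boldsymbol{D}^o_i\geq\boldsymbol{D}^o_{i-1}$ (Lemma~\ref{lemma1}) and from the structure of the allocation $\hat\eta_i^n = $ proportion of $\hat D^o_i = \frac{\pi_i}{\pi_I}D^o_{max}$, i.e.\ that a larger budget and larger offered sizes cannot decrease the aggregate $\sum_n \hat\eta_i^n D^o_{i,n}$. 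Once that monotonicity of the composed gain in the type is pinned down, the telescoping/induction is routine; the last step is simply to note that the local-and-monotone conditions reproduce the original constraint set, so the transformed problem is equivalent to $(\mathbf{P}_3)$.
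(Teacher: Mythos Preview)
Your overall architecture---reduce the full IC set to local constraints and then telescope---matches the paper's proof, and your step~(i) is essentially identical to the paper's chaining argument for the downward constraints. The difficulty is in your step~(ii), where the inequality manipulation is backwards. From the LDIC you correctly get
\[
C_o(\boldsymbol{\hat\eta}_i,\boldsymbol{\rho}^o_i,\boldsymbol{D}^o_i) - C_o(\boldsymbol{\hat\eta}_{i-1},\boldsymbol{\rho}^o_{i-1},\boldsymbol{D}^o_{i-1}) \;\leq\; \pi_i\big(G_o(\boldsymbol{\hat\eta}_i,\boldsymbol{D}^o_i) - G_o(\boldsymbol{\hat\eta}_{i-1},\boldsymbol{D}^o_{i-1})\big),
\]
but since $\pi_{i-1}\leq\pi_i$ and the gain-gap is non-negative, $\pi_{i-1}\,\Delta G \leq \pi_i\,\Delta G$, so you \emph{cannot} conclude the cost-gap is $\leq \pi_{i-1}\,\Delta G$; the bound got weaker, not stronger. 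Worse, the local upward IC for type $\pi_{i-1}$ is actually the \emph{reverse} inequality, $C_o(i)-C_o(i-1)\geq \pi_{i-1}\,\Delta G$, so what you wrote is neither derivable from LDIC nor the statement you need. In general LUIC does not follow from LDIC and monotonicity alone (it would follow if LDIC \emph{binds}, which is an optimality argument, not a feasibility one).

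The paper avoids this trap by not attempting to manufacture LUIC out of LDIC. It treats the LUICs as already available---they are, after all, a subset of the original IC constraints~(\ref{eqn:for9c})---and then runs the same telescoping argument in the upward direction (``Similarly, we can show that all the UICs hold considering the LUICs and Lemma~\ref{lemma1}''). So the correct route for your step~(ii) is simply to observe that LUIC is part of the hypothesis (the full IC set), and then your step~(iii) goes through exactly as you outlined; the reduction to LDIC alone in the transformed problem is justified because, together with monotonicity, LDIC already pins down the DICs, and the UIC side is handled symmetrically from the LUICs that were present in the original constraint set.
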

	\begin{proof}
		See Appendix~\ref{appx:lemma2}.
	\end{proof}
	Equation~(\ref{eqn:for14}) indicates that if the IC constraint for type $\pi_{i-1}$ holds, then all other IC constraints are also satisfied as long as the conditions in Lemma~\ref{lemma1} hold. Finally, using (\ref{eqn:for13}) and (\ref{eqn:for14}), the optimization problem $(\mathbf{P}_3)$ can be transformed into the problem $(\mathbf{P}_4)$ as follows:
	\begin{equation}
	\label{eqn:for9rev}
	\begin{aligned}
	(\mathbf{P}_4) \phantom{10} & \underset{\boldsymbol{D}^o, \boldsymbol{\rho}^o,\boldsymbol{D}^l, \boldsymbol{\rho}^l}{\text{max}} \phantom{5} U_n(\boldsymbol{D}^o, \boldsymbol{\rho}^o,\boldsymbol{D}^l, \boldsymbol{\rho}^l), \forall n \in \mathcal{N},
	\end{aligned}
	\end{equation}
	\begin{align}
	&\text{ s.t. } \quad \text{ 
		(\ref{eqn:for9a})-(\ref{eqn:for9c}), and}, \nonumber \\ 
	& \pi_{1}G_o(\boldsymbol{\hat \eta}_{1},\boldsymbol{D}^o_{1}) - C_o(\boldsymbol{\hat \eta}_{1},\boldsymbol{\rho}^o_{1},\boldsymbol{D}^o_{1}) + G_l(\boldsymbol{D}^l) - C_l(\boldsymbol{\rho}^l) \geq 0, \label{eqn:for9arev} \\
	&\pi_iG_o(\boldsymbol{\hat \eta}_i,\boldsymbol{D}^o_i) - C_o(\boldsymbol{\hat \eta}_i,\boldsymbol{\rho}^o_i,\boldsymbol{D}^o_i) \geq \label{eqn:for9brev} \\
	&\pi_{i}G_o(\boldsymbol{\hat \eta}_{i-1},\boldsymbol{D}^o_{i-1}) - C_o(\boldsymbol{\hat \eta}_{i-1},\boldsymbol{\rho}^o_{i-1},\boldsymbol{D}^o_{i-1}),  \forall i \in \{2, \ldots, I\}, \nonumber \\
	& \boldsymbol{D}^o_i \geq \boldsymbol{D}^o_{i-1}, \forall i \in \{2, \ldots, I\}. \label{eqn:for9crev}
	\end{align}
	This $(\mathbf{P}_4)$ has the complexity $O(I)$ as stated in Proposition~\ref{lemma0b}.
	\begin{proposition}
		\label{lemma0b}
		The computational complexity of solving $(\mathbf{P}_4)$ is $O(I)$. 
	\end{proposition}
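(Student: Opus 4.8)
The plan is to follow the same accounting convention used for Proposition~\ref{lemma0a}, where the cost of solving the contract program is measured by the number of constraints (equivalently, of decision variables) that a convex solver must handle, and to show that every constraint family surviving the transformation into $(\mathbf{P}_4)$ grows at most linearly in $I$ once the number of selected MUs $N$ is fixed by the selection stage of Section~\ref{sec:MU_select}.

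First I would tally the constraints carried over unchanged from $(\mathbf{P}_3)$: the capacity constraint~(\ref{eqn:for9a}) contributes one inequality per MAP type, i.e.\ $I$ of them; the local-training bound~(\ref{eqn:for9b}) contributes one per MU, i.e.\ $N$ of them (no dependence on $I$); and the data-availability constraint~(\ref{eqn:for9c}) contributes one per type--MU pair, i.e.\ $NI$ of them, so with $N$ treated as a constant this block is $O(I)$. Next I would invoke Proposition~\ref{prop2}, which makes the IR constraint for the lowest type $\pi_1$ imply all the others, so the $I$ IR constraints collapse to the single constraint~(\ref{eqn:for9arev}); and Lemma~\ref{lemma2}, which replaces the $I(I-1)$ IC constraints of $(\mathbf{P}_3)$ by the $I-1$ local downward incentive constraints~(\ref{eqn:for9brev}) together with the $I-1$ monotonicity conditions~(\ref{eqn:for9crev}). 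Finally, a direct count of the decision vector $(\boldsymbol{D}^o,\boldsymbol{\rho}^o,\boldsymbol{D}^l,\boldsymbol{\rho}^l)$ gives $2NI+2N=O(I)$ variables. Summing all contributions, $(\mathbf{P}_4)$ has $O(I)$ constraints and $O(I)$ variables, hence $O(I)$ solving complexity, which proves the claim and gives the promised reduction from the $O(I^2)$ of $(\mathbf{P}_3)$.

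The main obstacle I anticipate is not the arithmetic of the count but justifying that the transformation is \emph{lossless and exhaustive}: one must be sure that discarding the redundant IR and IC constraints does not silently reintroduce some other constraint family (for instance, the upward incentive constraints) that would again scale quadratically. This is precisely what the chain Lemma~\ref{lemma1} $\Rightarrow$ Proposition~\ref{prop1} $\Rightarrow$ Proposition~\ref{prop2} (for the IR side) and Lemma~\ref{lemma2} (for the IC side) is designed to guarantee, so the step reduces to citing those results and verifying that together they account for every constraint of $(\mathbf{P}_3)$. A secondary point is to keep the complexity convention identical to that of Proposition~\ref{lemma0a}; since $(\mathbf{P}_4)$ retains the convexity of $(\mathbf{P}_3)$ (concave objective, linear/convex constraints), any standard per-constraint solver cost applies verbatim, so no extra work is needed there.
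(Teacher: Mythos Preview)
Your proposal is correct and follows essentially the same approach as the paper: both arguments count the constraint families of $(\mathbf{P}_4)$---$I$ encrypted-dataset constraints, $N$ local-dataset constraints, $NI$ total-dataset constraints, one reduced IR constraint, and $2(I-1)$ constraints from the LDIC and monotonicity conditions---and conclude $O(I)$ with $N$ held constant. You add a variable count and a discussion of why the IR/IC reduction is exhaustive, which the paper omits, but the core accounting is identical.
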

	\begin{proof}
		See Appendix~\ref{appx:pro4}.
	\end{proof}
	
	\subsection{Proposed FL Contract Iterative Algorithm}

	To obtain the optimal contracts $\Big(\boldsymbol{\hat D}^o, \boldsymbol{\hat \rho}^o,\boldsymbol{\hat D}^l, \boldsymbol{\hat \rho}^l\Big)$ from problem $(\mathbf{P}_4)$, we implement an iterative process as shown in Algorithm~\ref{ICEA}. In this case,
	the participating MUs in $\mathcal{N}$ first offer the initial contracts prior to the iterative process, i.e., $\Big(\boldsymbol{D}^{o,(\tau)}_n, \boldsymbol{\rho}^{o,(\tau)}_n,\boldsymbol{D}^{l,(\tau)}_n, \boldsymbol{\rho}^{l,(\tau)}_n\Big)$, where $\boldsymbol{D}^{o,(\tau)}_n = [D^{o,(\tau)}_{1,n},\ldots,D^{o,(\tau)}_{i,n},\ldots,D^{o,(\tau)}_{I,n}]$, $\boldsymbol{\rho}^{o,(\tau)}_n = [\rho^{o,(\tau)}_{1,n},\ldots,\rho^{o,(\tau)}_{i,n},\ldots,\rho^{o,(\tau)}_{I,n}]$, and $\tau=0$. Using the initial contracts, the MAP can find the optimal encrypted dataset proportion values $\boldsymbol{\hat \eta}^{(\tau)}$ that maximize the objective function in $(\mathbf{P}_1)$. Then, the MAP can find a new contract for the MU-$n$, $n \in \mathcal{N}$, iteratively using the $\boldsymbol{\hat \eta}^{(\tau)}$ and other MUs' current contracts $\Big(\boldsymbol{D}^{o,(\tau)}_{-n}, \boldsymbol{\rho}^{o,(\tau)}_{-n},\boldsymbol{D}^{l,(\tau)}_{-n}, \boldsymbol{\rho}^{l,(\tau)}_{-n}\Big)$~\cite{Zhu:2015}, aiming at maximizing the objective function in $(\mathbf{P}_4)$ at iteration $\tau+1$. 
	
	The above process continues iteratively and then terminates when the gaps between the expected utilities of MU-$n$, $\forall n \in \mathcal{N}$, at iteration $\tau$ and $\tau+1$ are equal or less than the optimality tolerance $\sigma$. In this way, the algorithm converges and the equilibrium contract solution can be found (as described in Theorems~\ref{theorem_equi1} and \ref{theorem_equi2}). Alternatively, considering the equilibrium contract solution of other MUs $\Big(\boldsymbol{\hat D}_{-n}^o, \boldsymbol{\hat \rho}_{-n}^o,\boldsymbol{\hat D}_{-n}^l, \boldsymbol{\hat \rho}_{-n}^l\Big)$, the expected utility of the MU-$n$ at the equilibrium contract solution $\Big(\boldsymbol{\hat D}_n^o, \boldsymbol{\hat \rho}_n^o,\boldsymbol{\hat D}_n^l, \boldsymbol{\hat \rho}_n^l\Big)$ will produce the highest value compared with the ones using all other contract solutions. As such, all the MU-$n$, $\forall n \in \mathcal{N}$, cannot enhance their expected utilities through deviating from its equilibrium solution unilaterally. This condition is formally described in Definition~\ref{DefV3}. 
	
	\begin{definition}{}\label{DefV3}
		Equilibrium contract solution for $(\mathbf{P}_4)$: The optimal contracts $\Big(\boldsymbol{\hat D}^o, \boldsymbol{\hat \rho}^o,\boldsymbol{\hat D}^l, \boldsymbol{\hat \rho}^l\Big)$ are the equilibrium solution of the $(\mathbf{P}_4)$ if and only if the conditions
		\begin{align}
		&U_n\Big(\boldsymbol{\hat D}_n^o, \boldsymbol{\hat \rho}_n^o,\boldsymbol{\hat D}_n^l, \boldsymbol{\hat \rho}_n^l,\boldsymbol{\hat D}_{-n}^o, \boldsymbol{\hat \rho}_{-n}^o,\boldsymbol{\hat D}_{-n}^l, \boldsymbol{\hat \rho}_{-n}^l\Big) \geq \nonumber \\
		&U_n\Big(\boldsymbol{D}_n^o, \boldsymbol{\rho}_n^o,\boldsymbol{D}_n^l, \boldsymbol{\rho}_n^l,\boldsymbol{\hat D}_{-n}^o, \boldsymbol{\hat \rho}_{-n}^o,\boldsymbol{\hat D}_{-n}^l, \boldsymbol{\hat \rho}_{-n}^l\Big), \label{eqn:for20}
		\end{align}
		$\forall n \in \mathcal{N}$, satisfy and the optimal contracts $\Big(\boldsymbol{\hat D}^o, \boldsymbol{\hat \rho}^o,\boldsymbol{\hat D}^l, \boldsymbol{\hat \rho}^l\Big)$ still hold the constraints (\ref{eqn:for9arev})-(\ref{eqn:for9crev}).
	\end{definition}
	
	\subsection{Convergence, Equilibrium, and Complexity Analysis of the Proposed FL Contract Algorithm}
	
	In this section, we first can show that the Algorithm~\ref{ICEA} converges to the equilibrium contract solution by adopting the best response method~\cite{Zhu:2015}. In this case, we can define the best response of MU-$n$ at iteration $\tau+1$ using $\Big(\boldsymbol{D}^{o,(\tau)}_{-n}, \boldsymbol{\rho}^{o,(\tau)}_{-n},\boldsymbol{D}^{l,(\tau)}_{-n}, \boldsymbol{\rho}^{l,(\tau)}_{-n}\Big)$ as follows:
	\begin{align}
	&\Theta^{(\tau+1)}_n\Big(\boldsymbol{D}^{o,(\tau)}_{-n}, \boldsymbol{\rho}^{o,(\tau)}_{-n},\boldsymbol{D}^{l,(\tau)}_{-n}, \boldsymbol{\rho}^{l,(\tau)}_{-n}\Big) = \nonumber \\ &\underset{\{\boldsymbol{D}^{o,\text{new}}_n, \boldsymbol{\rho}^{o,\text{new}}_n,\boldsymbol{D}^{l,\text{new}}_n, \boldsymbol{\rho}^{l,\text{new}}_n\} \in \mathbb{Q}_n}{\arg \max} U_{n}(.) \label{eqn:for11a},
	\end{align}
	where
	\begingroup\makeatletter\def\f@size{9.2}\check@mathfonts
	\def\maketag@@@#1{\hbox{\m@th\normalsize\normalfont#1}}%
	\begin{align}
	&U_{n}(.) = \label{eqn:for11b} \\
	&U_{n}\Big(\boldsymbol{D}^{o,\text{new}}_n, \boldsymbol{\rho}^{o,\text{new}}_n,\boldsymbol{D}^{l,\text{new}}_n, \boldsymbol{\rho}^{l,\text{new}}_n,\boldsymbol{D}^{o,(\tau)}_{-n}, \boldsymbol{\rho}^{o,(\tau)}_{-n},\boldsymbol{D}^{l,(\tau)}_{-n}, \boldsymbol{\rho}^{l,(\tau)}_{-n}\Big) \nonumber,
	\end{align}\endgroup
	and $\mathbb{Q}_n$ is the non-empty contract space~\cite{Fraysse:1993} for MU-$n$ with $\mathbb{Q} = {\prod}_{n \in \mathcal{N}}\mathbb{Q}_n$. According to the Algorithm~\ref{ICEA}, we can apply the current contract $\Big(\boldsymbol{D}^{o,\text{new}}_n, \boldsymbol{\rho}^{o,\text{new}}_n,\boldsymbol{D}^{l,\text{new}}_n, \boldsymbol{\rho}^{l,\text{new}}_n\Big) \in \Theta^{(\tau+1)}_n$ to be the new contract at $\tau + 1$, i.e., $\Big(\boldsymbol{D}^{o,(\tau+1)}_n, \boldsymbol{\rho}^{o,(\tau+1)}_n,\boldsymbol{D}^{l,(\tau+1)}_n, \boldsymbol{\rho}^{l,(\tau+1)}_n\Big)$, when the following condition exists
	\begingroup\makeatletter\def\f@size{9.1}\check@mathfonts
	\def\maketag@@@#1{\hbox{\m@th\normalsize\normalfont#1}}%
	\begin{align}
	&\bigg[U_{n}\Big(\boldsymbol{D}^{o,\text{new}}_n, \boldsymbol{\rho}^{o,\text{new}}_n,\boldsymbol{D}^{l,\text{new}}_n, \boldsymbol{\rho}^{l,\text{new}}_n,\boldsymbol{D}^{o,(\tau)}_{-n}, \boldsymbol{\rho}^{o,(\tau)}_{-n},\boldsymbol{D}^{l,(\tau)}_{-n}, \boldsymbol{\rho}^{l,(\tau)}_{-n}\Big) \nonumber \\ 
	&- U_{n}\Big(\boldsymbol{D}^{o,(\tau)}_n, \boldsymbol{\rho}^{o,(\tau)}_n,\boldsymbol{D}^{l,(\tau)}_n, \boldsymbol{\rho}^{l,(\tau)}_n,\boldsymbol{D}^{o,(\tau)}_{-n}, \boldsymbol{\rho}^{o,(\tau)}_{-n},\boldsymbol{D}^{l,(\tau)}_{-n}, \boldsymbol{\rho}^{l,(\tau)}_{-n}\Big)\bigg] \nonumber\\
	&> \sigma. \label{eqn:for11c}
	\end{align}\endgroup
	The algorithm stops when the condition in~(\ref{eqn:for11c}) does not hold for all MU-$n$, $\forall n \in \mathcal{N}$. For that, the algorithm converges as stated in the following Theorem~\ref{theorem_equi1}. 
	
	\begin{theorem}\label{theorem_equi1}
		The Algorithm~\ref{ICEA} converges under the optimality tolerance $\sigma$.
	\end{theorem}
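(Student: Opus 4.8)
The plan is to invoke the classical monotone‑convergence argument for best‑response dynamics, which requires two ingredients: the best‑response map $\Theta_n^{(\tau+1)}$ in (\ref{eqn:for11a}) must be well defined, and each MU's expected utility must be bounded above. First I would check that the feasible set $\mathbb{Q}_n$ carved out by the linear constraints (\ref{eqn:for9a})--(\ref{eqn:for9c}), the reduced IR/IC constraints (\ref{eqn:for9arev})--(\ref{eqn:for9crev}), and the box constraints $0\le D^l_n\le D_n$, $0\le D^o_{i,n}\le D_n$, is nonempty (the initial contracts are feasible by construction) and compact, while $U_n$ in (\ref{eqn:for6}) is continuous; hence the maximum in (\ref{eqn:for11a}) is attained and $\Theta_n^{(\tau+1)}$ is nonempty. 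Second, since $D^o_{i,n}\le D_n$, $D^l_n\le D_n$, $\rho^o_{i,n}=\alpha_o D^o_{i,n}$, $\rho^l_n=\alpha_l D^l_n$, $0\le\hat\eta_i^n\le 1$, the privacy/transmission/energy cost terms are nonnegative, and $\sum_i\phi_i=1$, the utility is bounded above by the finite constant $\bar U_n=(\alpha_o+\alpha_l)D_n$, independent of the iteration index.

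Next I would establish monotonicity. Within a single outer iteration, i.e. for a fixed $\hat{\boldsymbol\eta}^{(\tau)}$ returned by $(\mathbf{P}_1)$, the function $U_n$ depends only on MU-$n$'s own contract entries $(\boldsymbol{D}^o_n,\boldsymbol{\rho}^o_n,D^l_n,\rho^l_n)$, so a best‑response update of MU-$n$ leaves every other MU-$m$'s utility unchanged and, by the acceptance rule (\ref{eqn:for11c}), strictly increases $U_n$ by more than $\sigma$ whenever an update is accepted. Hence the scalar $\sum_{n\in\mathcal N}U_n$ is non‑decreasing along the inner sweeps and grows by more than $\sigma$ at every accepted update; being bounded above by $\sum_n\bar U_n$, it can admit only finitely many accepted updates. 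Once no update is accepted for any $n$, the stopping test in the theorem statement is exactly met, i.e. the gap between the expected utilities at $\tau$ and $\tau+1$ is at most $\sigma$ for every MU, so the algorithm halts after finitely many steps, which is the asserted convergence.

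The main obstacle is the coupling introduced by the proportion vector $\hat{\boldsymbol\eta}$: across outer iterations the MAP re‑solves $(\mathbf{P}_1)$ with the updated contracts, which perturbs each $\hat\eta_i^n$ and therefore each $U_n$, so the clean ``$U_n$ moves only when MU-$n$ moves'' picture fails at the outer‑iteration boundary. To close this gap I would take the monotone‑and‑bounded quantity to be $\sum_{n}U_n$ evaluated at the start of each inner sweep, after $\hat{\boldsymbol\eta}^{(\tau)}$ has been fixed, so that (\ref{eqn:for11c}) is applied only with that fixed $\hat{\boldsymbol\eta}^{(\tau)}$; the finiteness of accepted $\sigma$‑improvements then forces the existence of a finite $\tau$ beyond which no MU can improve its utility by more than $\sigma$, which is precisely the termination condition. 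A secondary detail to make rigorous is continuity of the map from contracts to $\hat{\boldsymbol\eta}$ (needed only to guarantee $U_n$ is continuous in all arguments), which follows from the strict concavity of $U^i_{MAP}$ in $\boldsymbol\eta_i$ and the linearity of the constraints of $(\mathbf{P}_1)$.
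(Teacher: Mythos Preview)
Your route is genuinely different from the paper's. The paper does \emph{not} use a potential-function argument; it argues that the maximal selection $\hat\Theta_n(\cdot)=\max\Theta_n(\cdot)$ of the best-response correspondence is \emph{order-monotone} in the other players' contracts, via a supermodularity-style contradiction (if $\hat\Theta_n$ decreased while $(\boldsymbol D_{-n},\boldsymbol\rho_{-n})$ increased, two best-response inequalities combine to force $\hat\Theta_n$ back up). From monotonicity of the iterates in a bounded lattice the paper then reads off eventual stabilisation under the tolerance $\sigma$. Your argument instead exploits the separability of $U_n$ in (\ref{eqn:for6}) for fixed $\hat{\boldsymbol\eta}$ to turn the inner loop into an exact potential game with potential $\sum_n U_n$; this is more elementary and would give a clean finite-termination bound $\lfloor(\sum_n\bar U_n)/\sigma\rfloor$ on accepted updates whenever $\hat{\boldsymbol\eta}$ is truly frozen.

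The gap you have not yet closed is precisely the one you flag. Your proposed fix---taking the potential to be $\sum_n U_n$ evaluated at the start of each sweep with the freshly computed $\hat{\boldsymbol\eta}^{(\tau)}$---does not yield a single monotone scalar sequence across outer iterations: you obtain $\sum_n U_n(\cdot;\hat{\boldsymbol\eta}^{(\tau)})$ increasing within sweep $\tau$, but after re-solving $(\mathbf P_1)$ the very definition of the potential changes to $\sum_n U_n(\cdot;\hat{\boldsymbol\eta}^{(\tau+1)})$, and nothing you have written controls the drop incurred at that reset. Finiteness of $\sigma$-improvements \emph{within} each sweep says nothing about termination of the \emph{outer} loop (each sweep performs at most $N$ updates anyway). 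To make your approach go through you would need either (i) a potential that also absorbs the MAP's step---e.g.\ show that the map $(\boldsymbol D^o,\boldsymbol\rho^o,\boldsymbol D^l,\boldsymbol\rho^l)\mapsto\hat{\boldsymbol\eta}$ composed with $\sum_n U_n$ is itself non-decreasing along accepted updates---or (ii) an argument that the contract iterates are monotone in the product order so that $\hat{\boldsymbol\eta}^{(\tau)}$ stabilises; the latter is essentially what the paper's order-theoretic proof supplies. Your continuity remark on $\hat{\boldsymbol\eta}$ is correct but does not by itself prevent oscillation.
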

	\begin{proof}
		See Appendix~\ref{appx:theorem1}.
	\end{proof}
	Upon proving that the Algorithm~\ref{ICEA} converges under $\sigma$, we need to guarantee that the Algorithm~\ref{ICEA} also converges to the equilibrium contract solution $\Big(\boldsymbol{\hat D}^o, \boldsymbol{\hat \rho}^o,\boldsymbol{\hat D}^l, \boldsymbol{\hat \rho}^l\Big)$. In particular, we first analyze that the equilibrium solution exists by finding a fixed point in a set-valued function $\Theta$, $\Theta: \mathbb{Q} \rightarrow 2^\mathbb{Q}$, i.e., 
	\begin{equation}
	\label{eqn:for11d}
	\begin{aligned}
	\Theta = \Big[\Theta_n\Big(\boldsymbol{D}_{-n}^o, \boldsymbol{\rho}_{-n}^o,\boldsymbol{D}_{-n}^l, \boldsymbol{\rho}_{-n}^l\Big),\Theta_{-n}\Big(\boldsymbol{D}_{n}^o, \boldsymbol{\rho}_{n}^o,\boldsymbol{ D}_{n}^l, \boldsymbol{\rho}_{n}^l\Big)\Big].
	\end{aligned}
	\end{equation} 
	The finding of this fixed point is equivalent to the equilibrium solution~\cite{Bernheim:1986, Fraysse:1993}, where all the MUs in $\mathcal{N}$ obtain the maximum expected utilities where $\Big(\boldsymbol{\hat D}^o, \boldsymbol{\hat \rho}^o,\boldsymbol{\hat D}^l, \boldsymbol{\hat \rho}^l\Big)$ are found. As a result, the Algorithm~\ref{ICEA} converges to the equilibrium contract solution as formally stated in Theorem~\ref{theorem_equi2}.
	
		\setlength{\textfloatsep}{5pt}
	\begin{algorithm}[h]
		\caption{Proposed FL Contract Iterative Algorithm} \label{ICEA}
		{\fontsize{9}{10}\selectfont
			\begin{algorithmic}[1] 
				\STATE Initialize $\tau = 0$ and $\sigma$
				
				\STATE All MU-$n$ send initial contracts $\Big(\boldsymbol{D}^{o,(\tau)}_n, \boldsymbol{\rho}^{o,(\tau)}_n,\boldsymbol{D}^{l,(\tau)}_n, \boldsymbol{\rho}^{l,(\tau)}_n\Big), \forall n \in \mathcal{N}$, to the MAP
				
				\REPEAT
				
				\STATE Find $\boldsymbol{\hat \eta}^{(\tau)}$ values which maximize $(\mathbf{P}_1)$ using $\Big(\boldsymbol{D}^{o,(\tau)}, \boldsymbol{\rho}^{o,(\tau)},\boldsymbol{D}^{l,(\tau)}, \boldsymbol{\rho}^{l,(\tau)}\Big)$
				
				\FOR{$\forall n \in \mathcal{N}$}
				
				\STATE Produce a new contract $\Big(\boldsymbol{D}^{o,\text{new}}_n, \boldsymbol{\rho}^{o,\text{new}}_n,\boldsymbol{D}^{l,\text{new}}_n, \boldsymbol{\rho}^{l,\text{new}}_n\Big)$, which maximizes $(\mathbf{P}_4)$ given $\boldsymbol{\hat \eta}^{(\tau)}$ and $\Big(\boldsymbol{D}^{o,(\tau)}_{-n}, \boldsymbol{\rho}^{o,(\tau)}_{-n},\boldsymbol{D}^{l,(\tau)}_{-n}, \boldsymbol{\rho}^{l,(\tau)}_{-n}\Big)$
				
				\IF{$\text{the condition in~(\ref{eqn:for11c}) is satisfied}$} 
				
				\STATE Set $\Big(\boldsymbol{D}^{o,(\tau+1)}_n, \boldsymbol{\rho}^{o,(\tau+1)}_n,\boldsymbol{D}^{l,(\tau+1)}_n, \boldsymbol{\rho}^{l,(\tau+1)}_n\Big) = \Big(\boldsymbol{D}^{o,\text{new}}_n, \boldsymbol{\rho}^{o,\text{new}}_n,\boldsymbol{D}^{l,\text{new}}_n, \boldsymbol{\rho}^{l,\text{new}}_n\Big)$
				
				\ELSE
				
				\STATE Set $\Big(\boldsymbol{D}^{o,(\tau+1)}_n, \boldsymbol{\rho}^{o,(\tau+1)}_n,\boldsymbol{D}^{l,(\tau+1)}_n, \boldsymbol{\rho}^{l,(\tau+1)}_n\Big) = \Big(\boldsymbol{D}^{o,(\tau)}_n, \boldsymbol{\rho}^{o,(\tau)}_n,\boldsymbol{D}^{l,(\tau)}_n, \boldsymbol{\rho}^{l,(\tau)}_n\Big)$
				
				\ENDIF
				
				\ENDFOR
				
				\STATE $\tau = \tau +1$
				
				\UNTIL \\$U_{n}\Big(\boldsymbol{D}^{o,(\tau)}_n, \boldsymbol{\rho}^{o,(\tau)}_n,\boldsymbol{D}^{l,(\tau)}_n, \boldsymbol{\rho}^{l,(\tau)}_n,\boldsymbol{D}^{o,(\tau)}_{-n}, \boldsymbol{\rho}^{o,(\tau)}_{-n},\boldsymbol{D}^{l,(\tau)}_{-n}, \boldsymbol{\rho}^{l,(\tau)}_{-n}\Big)$ \\ $,\forall n \in \mathcal{N}$, remain unchanged
				
				\STATE Obtain the optimal contracts $\Big(\boldsymbol{\hat D}^o, \boldsymbol{\hat \rho}^o,\boldsymbol{\hat D}^l, \boldsymbol{\hat \rho}^l\Big)$
				
		\end{algorithmic}}
	\end{algorithm}
	
	\begin{theorem}\label{theorem_equi2}
		The best response iterative process in Algorithm~\ref{ICEA} converges to its equilibrium contract $\Big(\boldsymbol{\hat D}^o, \boldsymbol{\hat \rho}^o,\boldsymbol{\hat D}^l, \boldsymbol{\hat \rho}^l\Big)$ if and only if $\Big(\boldsymbol{\hat D}^o, \boldsymbol{\hat \rho}^o,\boldsymbol{\hat D}^l, \boldsymbol{\hat \rho}^l\Big)$ is a fixed point in $\Theta$.
	\end{theorem}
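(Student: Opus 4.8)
The plan is to prove Theorem~\ref{theorem_equi2} as a characterization of the limit point of Algorithm~\ref{ICEA} via a standard fixed-point argument for the best-response correspondence $\Theta$. First I would establish the "only if" direction: suppose the iterative process converges to some contract profile $\Big(\boldsymbol{\hat D}^o, \boldsymbol{\hat \rho}^o,\boldsymbol{\hat D}^l, \boldsymbol{\hat \rho}^l\Big)$. By Theorem~\ref{theorem_equi1} the sequence of per-MU utilities is monotone and bounded, hence convergent, and by the stopping rule in~(\ref{eqn:for11c}) the increments vanish in the limit, so no MU-$n$ can strictly improve its expected utility by switching to any contract in $\mathbb{Q}_n$ while the others hold $\Big(\boldsymbol{\hat D}_{-n}^o, \boldsymbol{\hat \rho}_{-n}^o,\boldsymbol{\hat D}_{-n}^l, \boldsymbol{\hat \rho}_{-n}^l\Big)$. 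By the definition of $\Theta^{(\tau+1)}_n$ in~(\ref{eqn:for11a}), this means $\Big(\boldsymbol{\hat D}_n^o, \boldsymbol{\hat \rho}_n^o,\boldsymbol{\hat D}_n^l, \boldsymbol{\hat \rho}_n^l\Big) \in \Theta_n\Big(\boldsymbol{\hat D}_{-n}^o, \boldsymbol{\hat \rho}_{-n}^o,\boldsymbol{\hat D}_{-n}^l, \boldsymbol{\hat \rho}_{-n}^l\Big)$ for every $n$, i.e., stacking over $n$, the profile lies in $\Theta$ evaluated at itself and is therefore a fixed point; equivalently, it satisfies Definition~\ref{DefV3} (the equilibrium inequality~(\ref{eqn:for20})), and feasibility of the limit follows from closedness of the constraint set~(\ref{eqn:for9arev})-(\ref{eqn:for9crev}).

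For the "if" direction, I would start from a fixed point $\Big(\boldsymbol{\hat D}^o, \boldsymbol{\hat \rho}^o,\boldsymbol{\hat D}^l, \boldsymbol{\hat \rho}^l\Big) \in \Theta\Big(\boldsymbol{\hat D}^o, \boldsymbol{\hat \rho}^o,\boldsymbol{\hat D}^l, \boldsymbol{\hat \rho}^l\Big)$ and show it is a stationary/absorbing point of Algorithm~\ref{ICEA}: if the profile at iteration $\tau$ equals this fixed point, then for each $n$ the best-response set $\Theta^{(\tau+1)}_n$ already contains the current $\Big(\boldsymbol{\hat D}_n^o, \boldsymbol{\hat \rho}_n^o,\boldsymbol{\hat D}_n^l, \boldsymbol{\hat \rho}_n^l\Big)$, so the strict-improvement test~(\ref{eqn:for11c}) fails and the algorithm keeps the contract unchanged for all $n$, hence the loop terminates with exactly this profile. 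Combined with the "only if" part and Theorem~\ref{theorem_equi1} (which guarantees the sequence does converge), this shows that whatever the algorithm converges to must be a fixed point of $\Theta$, and conversely every fixed point is a possible — indeed terminal — output. I would also invoke the standard existence guarantee (Kakutani/Glicksberg-type) cited via~\cite{Bernheim:1986,Fraysse:1993} to note $\Theta$ admits at least one fixed point on the non-empty contract space $\mathbb{Q} = {\prod}_{n \in \mathcal{N}}\mathbb{Q}_n$, so the characterization is non-vacuous.

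The main obstacle I expect is not the fixed-point logic itself but verifying the hypotheses that make the correspondence $\Theta$ well-behaved enough for the argument: namely that each $\mathbb{Q}_n$ is non-empty, convex, and compact (this is where the transformed constraints~(\ref{eqn:for9a})-(\ref{eqn:for9c}) and~(\ref{eqn:for9arev})-(\ref{eqn:for9crev}) from Lemma~\ref{lemma1}, Lemma~\ref{lemma2}, and Propositions~\ref{prop1}-\ref{prop2} are used to guarantee a closed, bounded feasible region), that $U_n(\cdot)$ is continuous and quasi-concave in MU-$n$'s own block so that $\Theta_n$ is non-empty-valued, convex-valued, and upper hemicontinuous, and that the coupling through $\boldsymbol{\hat\eta}^{(\tau)}$ — which is itself the argmax of the MAP's convex program $(\mathbf{P}_1)$ and hence a continuous (single-valued, by strict concavity of $G_o$) function of the offered contracts — preserves these properties. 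Once these regularity facts are in place, equating "limit of best-response dynamics" with "fixed point of $\Theta$" is routine, so I would keep the proof focused on stating those structural properties and citing the monotone-convergence conclusion of Theorem~\ref{theorem_equi1}, then close with the two-direction equivalence above.
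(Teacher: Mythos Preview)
Your two-direction argument for the equivalence is sound and matches the logical skeleton the paper uses: a limit of the best-response dynamics is a point at which no MU can strictly improve, hence a fixed point of $\Theta$, and conversely any fixed point is absorbing for the update rule~(\ref{eqn:for11c}). The substantive difference lies in how you establish that $\Theta$ actually has a fixed point. You propose a Kakutani/Glicksberg route, checking compactness and convexity of each $\mathbb{Q}_n$, continuity and quasi-concavity of $U_n$ in its own block, and continuity of the MAP's allocation $\boldsymbol{\hat\eta}$ as the argmax of $(\mathbf{P}_1)$ via a maximum-theorem argument. The paper instead recycles the monotonicity of the selection $\hat\Theta_n$ already proved inside Theorem~\ref{theorem_equi1}: it runs a Tarski-style argument on the ordered contract space, taking $\mathbb{Q}^*=\{x\in\mathbb{Q}:\hat\Theta(x)\geq x\}$, using that $\hat\Theta$ is increasing to show $\hat\Theta(\max\mathbb{Q}^*)\in\mathbb{Q}^*$, and concluding $\max\mathbb{Q}^*$ is a fixed point. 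This order-theoretic route is shorter here because the increasing property is already in hand and it sidesteps exactly the regularity checks (quasi-concavity of $U_n$, upper hemicontinuity of $\Theta_n$, single-valuedness of $\boldsymbol{\hat\eta}$) that you correctly flag as the main burden in your plan. Your approach is the more standard game-theoretic one and would additionally give convex-valuedness of the best-response correspondence, but the paper's approach is the more economical one given what Theorem~\ref{theorem_equi1} already delivers.
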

	\begin{proof}
		See Appendix~\ref{appx:theorem2}.
	\end{proof}
	Next, the complexity of Algorithm~\ref{ICEA} can be observed by considering $N$ MUs and the iterative process to find the optimal
	contract for each MU in $\mathcal{N}$. In this case, the complexity can be bounded above by polynomial complexity $\log(N) + e^\epsilon +O(1/N)$, where $\epsilon$ is the Euler constant, as formally derived in Theorem~\ref{theorem3}.
	\begin{theorem}\label{theorem3} 
		Algorithm~\ref{ICEA} has polynomial complexity $\log(N) + e^\epsilon +O(1/N)$. 
	\end{theorem}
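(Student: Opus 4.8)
The plan is to bound the total running time of Algorithm~\ref{ICEA} by accounting separately for (i) the cost of a single outer iteration and (ii) the number of outer iterations performed before condition~(\ref{eqn:for11c}) fails for every MU, and then to combine the two into one asymptotic expression in $N$. I would set up the bookkeeping so that the only term depending delicately on $N$ is the iteration count, and then extract the claimed expansion from a partial harmonic sum.

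First I would fix the per-iteration cost. Each pass of the repeat loop performs one solve of the convex program $(\mathbf{P}_1)$ in Step~4 — which, since its objective is concave-minus-linear and its constraints are linear, is solvable in time polynomial in the problem dimensions and, crucially, independent of the outer-iteration index — and then, in the inner for-loop, one solve of $(\mathbf{P}_4)$ for each MU-$n$. By Proposition~\ref{lemma0b} each such solve costs $O(I)$, so the inner loop costs $O(NI)$ and one outer iteration is dominated by the convex solve plus an $O(NI)$ term. This reduces everything to bounding the number of outer iterations.

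Next I would bound the iteration count. Using the non-empty, compact contract space $\mathbb{Q}_n$ (with $\mathbb{Q}=\prod_{n\in\mathcal{N}}\mathbb{Q}_n$) and the continuity of $U_n$, each $U_n$ is bounded above on $\mathbb{Q}$; by Theorem~\ref{theorem_equi1} every accepted update raises the updated MU's expected utility by more than $\sigma$, and by Theorem~\ref{theorem_equi2} the iterates converge to the fixed point $\big(\boldsymbol{\hat D}^o, \boldsymbol{\hat \rho}^o,\boldsymbol{\hat D}^l, \boldsymbol{\hat \rho}^l\big)$ of the best-response map $\Theta$. Because $\Theta$ is non-expansive on $\mathbb{Q}$, the marginal utility gain obtained when the $n$-th MU is processed within a pass decays like $1/n$ relative to the first, so the number of passes in which MU-$n$ still produces an update satisfying~(\ref{eqn:for11c}) contributes a term of order $1/n$ to the running total. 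Summing over $n=1,\ldots,N$ gives the partial harmonic sum $\sum_{n=1}^{N}\frac{1}{n}$, whose Euler--Maclaurin expansion is $\log N + \epsilon + \frac{1}{2N} + O(1/N^2)$; carrying the additive constant $\epsilon$ through the exponential accounting of the nested loops yields the stated bound $\log(N) + e^{\epsilon} + O(1/N)$.

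The main obstacle is the rigorous justification of the $1/n$-decay step, i.e., converting the non-expansiveness and convergence of $\Theta$ used in Theorems~\ref{theorem_equi1} and~\ref{theorem_equi2} into a quantitative, per-MU bound on how many times~(\ref{eqn:for11c}) can trigger. I would handle this with an amortized potential-function argument on the aggregate gap $\sum_{n\in\mathcal{N}}\big(U_n^{\star}-U_n^{(\tau)}\big)$, where $U_n^{\star}$ is the equilibrium utility: each accepted update decreases this potential by more than $\sigma$, and the sequential Gauss--Seidel order of the inner loop together with compactness of $\mathbb{Q}$ forces the decreases attributable to the $n$-th block to scale as $1/n$. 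The remaining pieces — the Euler--Maclaurin estimate of the harmonic sum and folding in the $O(NI)$ per-iteration cost against the $O(1/N)$ tail — are routine.
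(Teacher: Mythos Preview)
Your proposal has a genuine gap: the mechanism by which you obtain the constant $e^{\epsilon}$ is a non-sequitur, and the underlying model is not the one the paper analyzes. You run a deterministic potential-function argument: each accepted update drops the aggregate gap by more than $\sigma$, and you then assert that the decreases attributable to the $n$-th MU ``scale as $1/n$'' because $\Theta$ is non-expansive. But non-expansiveness of $\Theta$ is never established in the paper, and even if it were, it would not by itself force the per-MU acceptance counts to follow a $1/n$ profile; a non-expansive fixed-point iteration can stall geometrically, linearly, or in many other ways depending on the geometry of $\mathbb{Q}$. Moreover, a deterministic $\sigma$-improvement argument necessarily produces a bound that scales like $1/\sigma$, yet the claimed bound is independent of $\sigma$, which already signals that your potential approach cannot reach it. Finally, you write that ``carrying the additive constant $\epsilon$ through the exponential accounting of the nested loops'' turns $\log N + \epsilon + O(1/N)$ into $\log N + e^{\epsilon} + O(1/N)$; this step is simply unjustified and numerically wrong as stated.

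The paper's proof is an average-case analysis in the sense of Durand and Gaujal~\cite{Durand:2016}, not a worst-case deterministic one. It models the normalized candidate utility $u\in[0,1]$ and treats the best-response search over $P$ strategies probabilistically: the probability that none of the $P-1$ alternatives beats $u$ is $u^{P-1}$, and when one does, the counter of ``MUs that have passed without change'' resets to $1$. This gives a Markov chain on $(u,n^\star)$; the expected remaining steps $V(u,n^\star)$ satisfy a recursion that collapses to a first-order ODE in $u$ for $V(u,1)$. Solving with the boundary data and bounding $V(0,1)$ by splitting the integral at $1-1/N$ is what produces both the $\log N$ term and the $e^{\epsilon}$ term, the latter arising from $\frac{1}{N}\exp\big(\sum_{n=1}^{N-1}\frac{1}{n}\big)\to e^{\epsilon}$. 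If you want to repair your attempt, you need to replace the potential argument by this stochastic-recursion/ODE analysis; the harmonic sum enters inside an exponential, not additively.
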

	\begin{proof}
		See Appendix~\ref{appx:theorem3}.
	\end{proof} 

	\section{Federated Learning with Straggling Mitigation and Privacy-Awareness Implementation}
	\label{sec:FLS}

	Using the vectors of optimal encrypted dataset size $\boldsymbol{\hat D}^o$ and optimal local dataset size $\boldsymbol{\hat D}^l$ of the selected MUs in $\mathcal{N}$ according to the optimal contracts $\Big(\boldsymbol{\hat D}^o, \boldsymbol{\hat \rho}^o,\boldsymbol{\hat D}^l, \boldsymbol{\hat \rho}^l\Big)$, we then can execute the entire FL process at the MAP and MUs. 
	Specifically, we apply a deep learning approach utilizing CNN along with bi-directional LSTM (BLSTM) for a classification prediction model. These two approaches are proved to perform well for time-series problems~\cite{Zhang2:2018}. 
	Let $S^o_i = (\mathbf{F}^o_i, \mathbf{L}^o_i)$ with optimal size $\sum_{n \in \mathcal{N}} {\hat D}_{i,n}^o$ denote the encrypted dataset of all MUs in $\mathcal{N}$ at the MAP with type $\pi_i$, where $\mathbf{F}^o_i$ and $\mathbf{L}^o_i$ are the training feature and training label matrices of encrypted dataset at the MAP with type $\pi_i$, respectively. Additionally, we denote $S^l_n = (\mathbf{F}^l_n, \mathbf{L}^l_n)$ with optimal size ${\hat D}_n^l$ to be the local dataset at each MU-$n$, where $\mathbf{F}^l_n$ and $\mathbf{L}^l_n$ are the training feature and label matrices of local dataset at each MU-$n$, respectively. 
	
	For the CNN, we have $\mathbf{F}^{o,a}_i$ and $\mathbf{F}^{l,a}_n$ such that $\mathbf{F}^{o,1}_i = \mathbf{F}^{o}_i$ and $\mathbf{F}^{l,1}_n = \mathbf{F}^{l}_n$, respectively, where $a$ is the training layer, $a \in [1,2,\ldots,a_{max}]$. For each convolutional layer, the training outputs $\mathbf{F}^{o,(a+1)}_i$ and $\mathbf{F}^{l,(a+1)}_n$ can be respectively produced as expressed by~\cite{Zhang2:2018}
	\begin{equation}
	\label{eqn4a}
	\begin{aligned}
	&\mathbf{F}^{o,(a+1)}_i = \alpha_i^a \Big(\sum \mathbf{F}^{o,a}_i \ast \mathbf{Q}\Big), \text{ and } \\ &\mathbf{F}^{l,(a+1)}_n = \alpha_n^a \Big(\sum \mathbf{F}^{l,a}_i \ast \mathbf{Q}\Big),
	\end{aligned}
	\end{equation} 
	where $\mathbf{Q}$ is the squared kernel matrix with a certain number of filters and $\alpha_i^a(.), \alpha_n^a(.)$ are the activation functions~\cite{Zhang2:2018}. These outputs are then fed into a max pooling layer to obtain $\mathbf{F}^{o,(a+2)}_i = \omega \Big(\mathbf{F}^{o,(a+1)}_i\Big) \text{ and } \mathbf{F}^{l,(a+2)}_n = \omega \Big(\mathbf{F}^{l,(a+1)}_n\Big)$,
	where $\omega$ is the pooling function representing the maximum element values of $\mathbf{F}^{o,(a+1)}_i$ and $\mathbf{F}^{l,(a+1)}_n$ based on the pre-defined pooling size. The above process between convolutional and max pooling layers is repeated for certain times, and then the training outputs from the last max pooling layer can be considered as the inputs of the BLSTM layer, i.e., $\mathbf{F}^{o,a^*}_i$ and $\mathbf{F}^{l,a^*}_n$. Using $\mathbf{F}^{o,a^*}_i$, $\mathbf{F}^{l,a^*}_n$, and the BLSTM method in~\cite{Yu:2015}, we can generate the output matrices $\mathbf{F}^{o,(a^*+1)}_i$ and $\mathbf{F}^{l,(a^*+1)}_n$.

	Upon passing $\mathbf{F}^{o,(a^*+1)}_i$ and $\mathbf{F}^{l,(a^*+1)}_n$ to a dropout layer, and obtaining $\mathbf{F}^{o,(a_{drop})}_i$ and $\mathbf{F}^{l,(a_{drop})}_n$, both matrices can enter the fully connected layers to obtain the final output matrix of layer $a_{max}$, i.e., $\mathbf{L}^{o,a_{max}}_i$ at the MAP and $\mathbf{L}^{l,(a_{max})}_n$ at each MU-$n$, which can be expressed by~\cite{Zhang2:2018}
	\begin{equation}
	\label{eqn4d5}
	\begin{aligned}
	&\mathbf{L}^{o,a_{max}}_i = \alpha_i^{a_{max}} \Big(\mathbf{F}^{o,(a_{max}-1)}_i\mathbf{\Upsilon}_i^{a_{max}}\Big), \text{ and } \\
	&\mathbf{L}^{l,a_{max}}_n = \alpha_n^{a_{max}} \Big(\mathbf{F}^{l,(a_{max}-1)}_n\mathbf{\Upsilon}_n^{a_{max}}\Big),
	\end{aligned}
	\end{equation} 
	where $\alpha_i^{a_{max}}(.)$ and $\alpha_n^{a_{max}}(.)$ are the \emph{softmax} activation functions used to provide a probability distribution for the classification results~\cite{Zhang2:2018}. Moreover, $\mathbf{F}^{o,(a_{max}-1)}_i$ and $\mathbf{F}^{l,(a_{max}-1)}_i$ are the output matrices from the previous fully connected layer $a_{max}-1$, while $\mathbf{\Upsilon}_i^{a_{max}}$ and $\mathbf{\Upsilon}_n^{a_{max}}$ are the weight matrices at the final layer $a_{max}$. 
	
	Based on $\mathbf{L}^o_i$, $\mathbf{L}^l_n$, $\mathbf{L}^{o,a_{max}}_i$, and $\mathbf{L}^{l,a_{max}}_n$, the loss functions of the MAP and each MU-$n$, $n \in \mathcal{N}$, at learning round $\theta$ can be obtained through using a squared Frobenius norm, i.e.,
	\begin{equation}
	\label{eqn4e1}
	\begin{aligned}
	\varphi_i\Big(\mathbf{\tilde \Upsilon}^{(\theta)}\Big) &= \frac{1}{2\sum_{n \in \mathcal{N}} {\hat D}_{i,n}^o}\Big\|\mathbf{L}_i^{o,a_{max}} - \mathbf{L}^o_i\Big\|_F^2 \\&= \frac{1}{2\sum_{n \in \mathcal{N}} {\hat D}_{i,n}^o}\sum_{j=1}^{\sum_{n \in \mathcal{N}} {\hat D}_{i,n}^o}\Big({\hat \ell}_{i,j} - \ell_{i,j}\Big)^2,
	\end{aligned}
	\end{equation}
	and
	\begin{equation}
	\label{eqn4e2}
	\begin{aligned}
	\varphi_n\Big(\mathbf{\Upsilon}^{(\theta)}\Big) &= \frac{1}{2{\hat D}_{n}^l}\Big\|\mathbf{L}_n^{l,a_{max}} - \mathbf{L}^l_n\Big\|_F^2 \\&= \frac{1}{2{\hat D}_{n}^l}\sum_{j=1}^{{\hat D}_{n}^l}\Big({\hat \ell}_{n,j} - \ell_{n,j}\Big)^2,
	\end{aligned}
	\end{equation}
	respectively, where $\mathbf{\tilde \Upsilon}^{(\theta)}$ and $\mathbf{\Upsilon}^{(\theta)}$ are the encrypted and decrypted global model matrices (i.e., $\mathbf{\Upsilon}^{(\theta)} = \emph{\mbox{Dec}}(g_{sk},\mathbf{\tilde \Upsilon}^{(\theta)})$), respectively, containing weights for the BLSTM and fully connected layers at round $\theta$, $\ell_{i,j}$ and $\ell_{n,j}$ are the sample points of ground-truth label data $\mathbf{L}^o_i$ and $\mathbf{L}^l_n$, respectively, while ${\hat \ell}_{i,j}$ and ${\hat \ell}_{n,j}$ are the elements of predicted label data $\mathbf{L}^{o,a_{max}}_i$, and $\mathbf{L}^{l,a_{max}}_n$, respectively. 
	
	From (\ref{eqn4e1}) and (\ref{eqn4e2}), the encrypted gradients at the MAP and local gradient at each MU-$n$ can be respectively expressed by	
	\begin{equation}
	\label{eqn3c}
	\begin{aligned}
	\nabla \mathbf{\tilde \Upsilon}^{(\theta)}_i = \frac{\partial \varphi_i\Big(\mathbf{\tilde \Upsilon}^{(\theta)}\Big)}{\partial \mathbf{\tilde \Upsilon}^{(\theta)}}, \text{ and } \nabla \mathbf{\Upsilon}^{(\theta)}_n = \frac{\partial \varphi_n\Big(\mathbf{\Upsilon}^{(\theta)}\Big)}{\partial \mathbf{\Upsilon}^{(\theta)}}.
	\end{aligned}
	\end{equation}
	Suppose that the local loss gradient at MU-$n$ when we use baseline FL without encrypted training process can be simplified as
	\begin{equation}
	\label{eqn11}
	\begin{aligned}
	\nabla \mathbf{\hat \Upsilon}^{(\theta)}_n &= \frac{\partial \varphi_n\Big(\mathbf{\Upsilon}^{(\theta)}\Big)}{\partial \mathbf{\Upsilon}^{(\theta)}} = \frac{\partial\Bigg[\frac{1}{2{D}_{n}}\Big\|\mathbf{\hat L}_n - \mathbf{L}_n\Big\|_F^2\Bigg]}{\partial \mathbf{\Upsilon}^{(\theta)}}
	\\&= \frac{1}{{D}_{n}}\mathbf{F}_n^T\Big(\mathbf{F}_n\mathbf{\Upsilon}^{(\theta)} - \mathbf{L}_n\Big),
	\end{aligned}
	\end{equation}
	where $\mathbf{F}_n$, $\mathbf{L}_n$, and $\mathbf{\hat L}_n$ are the training feature data, ground truth label data, and predicted label data at MU-$n$, respectively. When the encrypted training is used, the encrypted gradient of encrypted dataset at the MAP with type $\pi_i$ in (\ref{eqn3c}) can be derived by
	\begin{equation}
	\label{eqn11a}
	\begin{aligned}
	\nabla \mathbf{\tilde \Upsilon}^{(\theta)}_i &= \frac{1}{\sum_{n \in \mathcal{N}} {\hat D}_{i,n}^o}(\mathbf{F}_{i}^o)^T\Big(\mathbf{F}_{i}^o\mathbf{\tilde \Upsilon}^{(\theta)} - \mathbf{L}_{i}^o\Big).
	\end{aligned}
	\end{equation}
	Likewise, the local gradient of the local dataset at each MU-$n$ considering the encrypted training can be written as
	\begin{equation}
	\label{eqn11b}
	\begin{aligned}
	\nabla \mathbf{\Upsilon}^{(\theta)}_n &= \frac{1}{{\hat D}_{n}^l}(\mathbf{F}_{n}^l)^T\Big(\mathbf{F}_{n}^l\mathbf{\Upsilon}^{(\theta)} - \mathbf{L}_{n}^l\Big).
	\end{aligned}
	\end{equation}
	Using (\ref{eqn3c}), each MU-$n$ can encrypt $\nabla \mathbf{\Upsilon}^{(\theta)}_n$ into $\nabla \mathbf{\tilde \Upsilon}^{(\theta)}_n = \emph{\mbox{Enc}}(g_{pk},\nabla \mathbf{\Upsilon}^{(\theta)}_n)$ and share its $\nabla \mathbf{\tilde \Upsilon}^{(\theta)}_n$ to the MAP for the encrypted model aggregation such that the MAP has
	\begin{equation}
	\label{eqn3e}
	\begin{aligned}
	\nabla \mathbf{\tilde\Upsilon}^{(\theta)}_{\mathcal{N}} &= \sum_{n \in \mathcal{N}}{\hat D}_{n}^l\nabla \mathbf{\tilde\Upsilon}^{(\theta)}_n.
	\end{aligned}
	\end{equation}
	As the MAP also has $\nabla \mathbf{\tilde\Upsilon}^{(\theta)}_i$ from the encrypted training, the encrypted global gradient can be updated as follows:
	\begin{equation}
	\label{eqn3f}
	\begin{aligned}
	\nabla\mathbf{\tilde\Upsilon}^{(\theta)} = \frac{1}{\sum_{n \in \mathcal{N}}{\hat D}_n^l + \sum_{n \in \mathcal{N}} {\hat D}_{i,n}^o}\Big(\mathbf{\nabla \tilde\Upsilon}^{(\theta)}_{\mathcal{N}} + \mathbf{\nabla \tilde\Upsilon}^{(\theta)}_i\Big).
	\end{aligned}
	\end{equation}
	Hence, the MAP can renew the encrypted global model for the next learning round as described by
	\begin{equation}
	\label{eqn3g}
	\begin{aligned}
	\mathbf{\tilde\Upsilon}^{(\theta+1)} = \mathbf{\tilde\Upsilon}^{(\theta)} -  \lambda^{(\theta+1)}\Omega\Big(\nabla\mathbf{\tilde\Upsilon}^{(\theta)}\Big),
	\end{aligned}
	\end{equation}
	where $\lambda^{(\theta+1)}$ is the encrypted learning step function of the adaptive learning rate \emph{Adam} optimizer~\cite{Kingma:2015}. 
	Furthermore, $\Omega\Big(\nabla\mathbf{\tilde\Upsilon}^{(\theta)}\Big)$ indicates the local update rules as a function of encrypted gradient $\mathbf{\nabla\tilde\Upsilon}^{(\theta)}$ described in~\cite{Kingma:2015}. 
	To this end, we can also determine the global loss function at learning round $\theta+1$ as
	\begin{equation}
	\label{eqn3i2}
	\begin{aligned}
	\varphi\Big(\mathbf{\tilde\Upsilon}^{(\theta+1)}\Big) = \frac{1}{N+1}\Bigg(\varphi_i\Big(\mathbf{\tilde\Upsilon}^{(\theta)}\Big) + \sum_{n \in \mathcal{N}(t)}\varphi_n\Big(\mathbf{\tilde\Upsilon}^{(\theta)}\Big) \Bigg).
	\end{aligned}
	\end{equation}
	The above global process terminates when the global loss converges or the learning rounds achieve a given threshold $\theta_{\emph{\mbox{th}}}$, and thus the final encrypted global model $\mathbf{\tilde\Upsilon}^*$ and the final global loss $\varphi^*(\mathbf{\tilde\Upsilon}^*)$ are produced. The whole process of proposed FL-based framework is summarized in Algorithm~\ref{DDL-PC} and the convergence analysis is discussed in~Appendix~\ref{appx:theorem4}.

	\begin{algorithm}[t]
		{\fontsize{9}{10}\selectfont
			\caption{Proposed FL-Based Framework Algorithm} \label{DDL-PC}
			
			\begin{algorithmic}[1] 
				
				\STATE Set $\theta_{th}$, $\mathbf{\tilde\Upsilon}^{(0)}$, and $\theta=0$
				
				\STATE The MAP determines MUs in $\mathcal{N} \subset \mathcal{M}$ using (\ref{eqn:for4})
				
				\STATE Perform Algorithm~\ref{ICEA} for all MUs in $\mathcal{N}$ to obtain optimal contracts $\Big(\boldsymbol{\hat D}^o, \boldsymbol{\hat \rho}^o,\boldsymbol{\hat D}^l, \boldsymbol{\hat \rho}^l\Big)$
				
				\FOR{$\forall n \in \mathcal{N}$}
				
				\STATE Implement the data encryption with the optimal size ${\hat D}_{i,n}^o$ considering the MAP with type $\pi_i$
				
				\STATE Send the encrypted dataset to the MAP
				
				\STATE Set $\mathbf{F}_n^l$ and $\mathbf{L}_n^l$ based on the optimal size ${\hat D}_n^l$
				
				\ENDFOR
				
				\STATE The MAP combines the received encrypted datasets into $S_i^o$
				
				\STATE The MAP sets $\mathbf{F}_i^o$ and $\mathbf{L}_i^o$ from $S_i^o$
				
				\WHILE{$\theta \leq \theta_{th} \text{ {\bf and} } \varphi\Big(\mathbf{\tilde\Upsilon}^{(\theta)}\Big) \text{ does not converge }$}
				
				\FOR{$\forall n \in \mathcal{N}$}
				
				\STATE Compute $\mathbf{L}_n^{l,a_{max}}$ using $\mathbf{F}_n^l$ and $\mathbf{\Upsilon}^{(\theta)}$
				
				%
				
				\STATE Decrypt $\nabla \mathbf{\tilde\Upsilon}^{(\theta)}$ into $\nabla \mathbf{\Upsilon}^{(\theta)}$
				
				\STATE Find $\varphi_n\Big(\mathbf{\Upsilon}^{(\theta)}\Big)$ and $\nabla \mathbf{\Upsilon}^{(\theta)}_n$
				
				\STATE Encrypt $\nabla \mathbf{\Upsilon}^{(\theta)}_n$ into $\nabla \mathbf{\tilde\Upsilon}^{(\theta)}_n$ and send it to the MAP
				
				\ENDFOR
				
				\STATE The MAP calculates $\mathbf{L}_i^{o,a_{max}}$ using $\mathbf{F}_i^o$ and $\mathbf{\Upsilon}^{(\theta)}$
				
				
				
				\STATE The MAP obtains $\varphi_i\Big(\mathbf{\tilde\Upsilon}^{(\theta)}\Big)$ and $\nabla \mathbf{\tilde\Upsilon}^{(\theta)}_i$
				
				\STATE Aggregate $\nabla \mathbf{\tilde\Upsilon}^{(\theta)}_n, \forall n \in \mathcal{N}$, using (\ref{eqn3e}) and the encrypted global gradient $\nabla\mathbf{\tilde\Upsilon}^{(\theta)}$ using (\ref{eqn3f})
				
				\STATE Update the encrypted global model $\mathbf{\tilde\Upsilon}^{(\theta+1)}$ using (\ref{eqn3g})
				
				\STATE Obtain the global loss $\varphi\Big(\mathbf{\tilde\Upsilon}^{(\theta+1)}\Big)$ using (\ref{eqn3i2})
				
				\STATE $\theta = \theta + 1$
				
				\ENDWHILE
				
				\STATE Finalize the encrypted global model $\mathbf{\tilde\Upsilon}^*$ and global loss $\varphi^*(\mathbf{\tilde\Upsilon}^*)$
				%
				
		\end{algorithmic}}
	\end{algorithm}

	\section{Performance Evaluation}
	\label{sec:PE}
	
	\subsection{Dataset Pre-processing}
	
	We investigate the performance of the proposed FL-based framework utilizing an actual HAR dataset in 2019~\cite{WISDM:2019}. This dataset contains 15M raw sensor samples divided into accelerometer and gyroscope data of smartphones and smartwatches from MUs. To evaluate the efficiency of the FL-based framework, we first combine all $x, y, z$ direction features from data of all smartphones and smartwatches into a single dataset to obtain 12 features with 1M samples (by removing samples with incomplete features). Then, we extract the first six activity labels from total 18 labels including walking, jogging, stairs, sitting, standing, and typing. Using the sampling rate 20 Hz, we convert the raw data into 10-second time-series data to obtain $\sim$ 5K time-series samples with $200$ time periods for each sample.
	
	\subsection{Experiment Setup}
	
	For the contract optimization, we use one agent, i.e., the MAP, and $N$ principals (referring to $N$ selected MUs). Specifically, we set 10 types of the MAP with uniform distribution of the types. Considering the number of HAR samples prior to time-series conversion, we set $D_{max}^o = 5 \times 10^5$ samples. We set $\upsilon_o$ and $\upsilon_l$ at $0.125$ and $3$, respectively, as well as $\alpha_o$ and $\alpha_l$ at $0.001$ and $0.005$, respectively. We also denote $\beta=1$ and $\gamma=0.0001$. We use $\zeta_n = 0.5 \times 10^{-26}, \forall n \in \mathcal{N}$~\cite{Kim:2020}, and $f_n = 2$GHz for practical CPU frequency of today's smartphone. Moreover, we use an average transmission rate $293$ Mbps of 802.11ac WLAN with bandwidth $80$ MHz~\cite{Cisco:2018}. Given that each sample occupies $187$ bytes, we use $c_n = 44880$ cycles/sample considering $30$ cycles/bit. Furthermore, we assume that the cycles per bit at the MAP is much lower than that of the participating MUs due to its much faster computing capability~\cite{Prakash:2021,Kang:2019}. We then compare our proposed contract solution with other baseline and information-symmetry methods. For the baseline method, each MU can offer the proportional size of encrypted dataset to the MAP without using contract policy~\cite{Saputra:2020}. For the information-symmetry method, each MU completely knows the MAP's true type and other MUs' contracts to obtain the optimal contract policy. In this way, we consider this method as the upper bound solution.
	
	To implement the proposed FL process, we utilize \emph{TensorFlow CPU 2.2} containing \emph{TensorFlow Federated} in a shared cluster. Specifically, we consider 100 active MUs which are then reduced (using the MU selection in Section~\ref{sec:MU_select}) to 10 best MUs for the FL process to compare with the conventional FL algorithm (conv-FL), i.e., the MAP waits for a specified number of participating MUs at each round to upload their encrypted local models without using encrypted data training, and proposed framework. For the proposed FL, each MU obtains the size of its optimal encrypted dataset based on the MAP's type. In particular, 10\%, 30\%, and 50\% encrypted datasets of the whole dataset (from all MUs in $\mathcal{N}$), i.e., Enc 10\%, Enc 30\%, and Enc 50\%, represent the MAP with type 2, 6, and 10, respectively. Next, we split the pre-processed dataset into 80\% training set and 20\% testing set. To represent a practical scenario where each mobile device may capture different activities from its MU, we consider a non-i.i.d dataset distribution scenario. As such, we first sort the training set according to the training labels in ascending order. Then, a certain number of samples from the sorted training set is equally distributed to the selected MUs sequentially. For the CNN, we use three-stacked 1D-convolutional layers with 64 filters and 3, 5, and 11 kernel sizes for each corresponding layer. We also utilize a BLSTM and two hidden DNN layers with 64 neurons for each layer followed by a dropout layer with fraction rate 0.25. For the optimizer, we use the Adam optimizer with initial step size 0.01~\cite{Saputra:2020}.
	
	\subsection{Proposed FL Contract Performance}
	
	\subsubsection{IR and IC constraint feasibility of the MAP} 
	
	\begin{figure}[t]
		\begin{center}
			$\begin{array}{cc} 
			\epsfxsize=1.69 in \epsffile{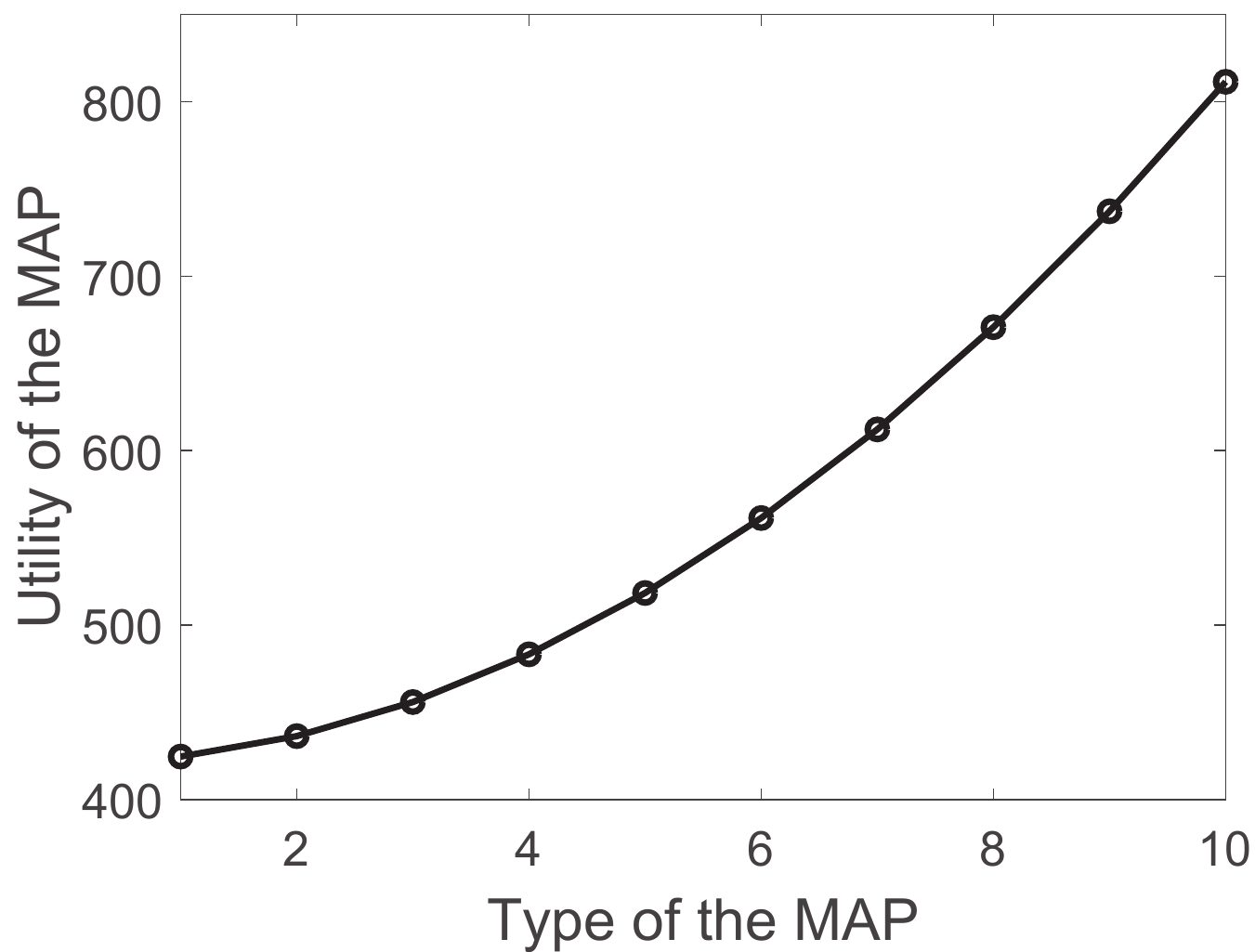} & 
			\hspace*{-0.3cm}
			\epsfxsize=1.69 in \epsffile{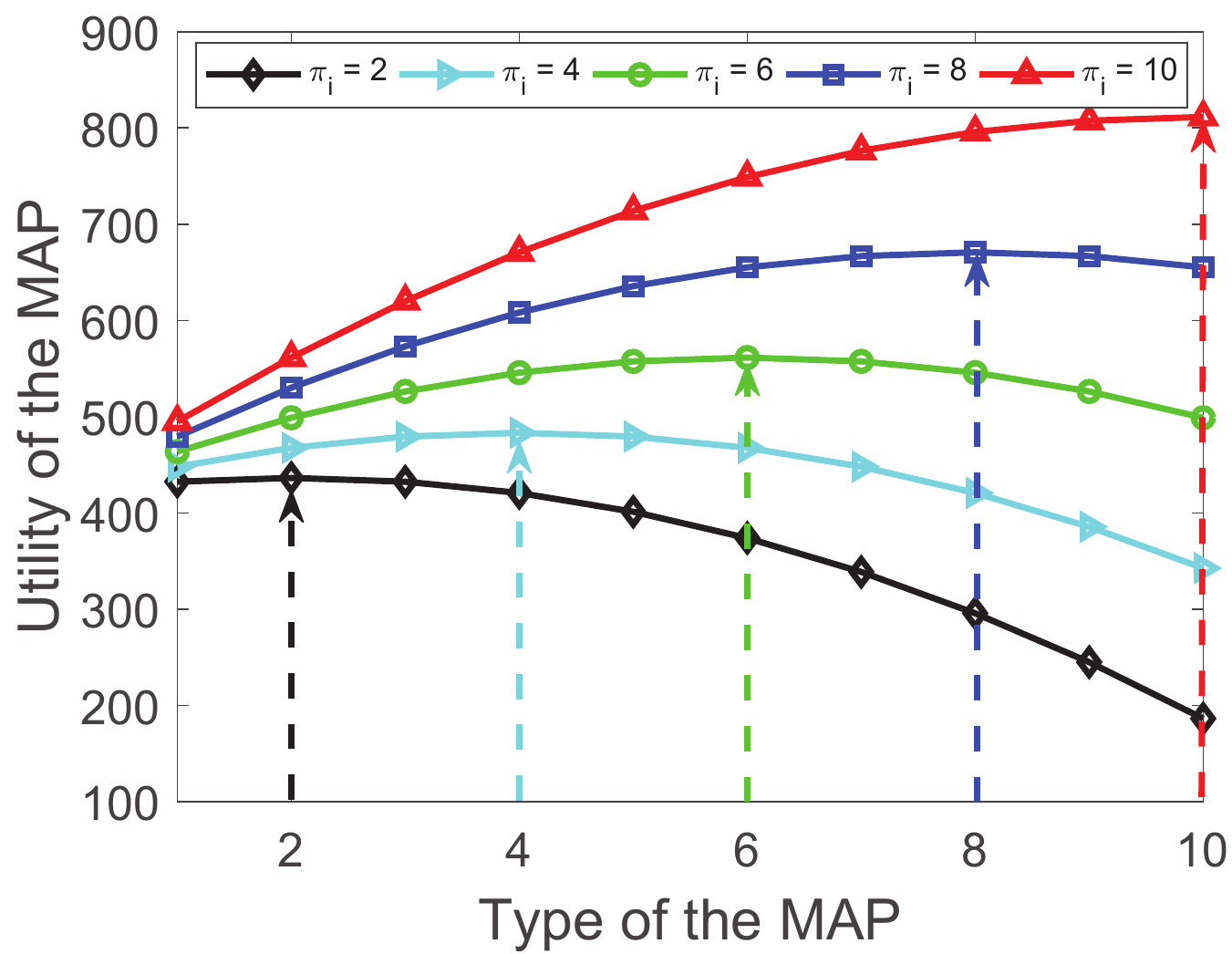} \\ [0.1cm]
			\text{\footnotesize (a) IR constraint} & \text{\footnotesize (b) IC constraint} \\ [0.2cm]
			\vspace*{-0cm}
			\end{array}$
			\vspace{-1.5\baselineskip}
			\caption{The contract feasibility of the MAP.}
			\label{fig:IC_IR_const}
		\end{center}
	\end{figure} 

	\begin{figure*}[t]
		\begin{center}
			$\begin{array}{ccc} 
			\epsfxsize=2.14 in \epsffile{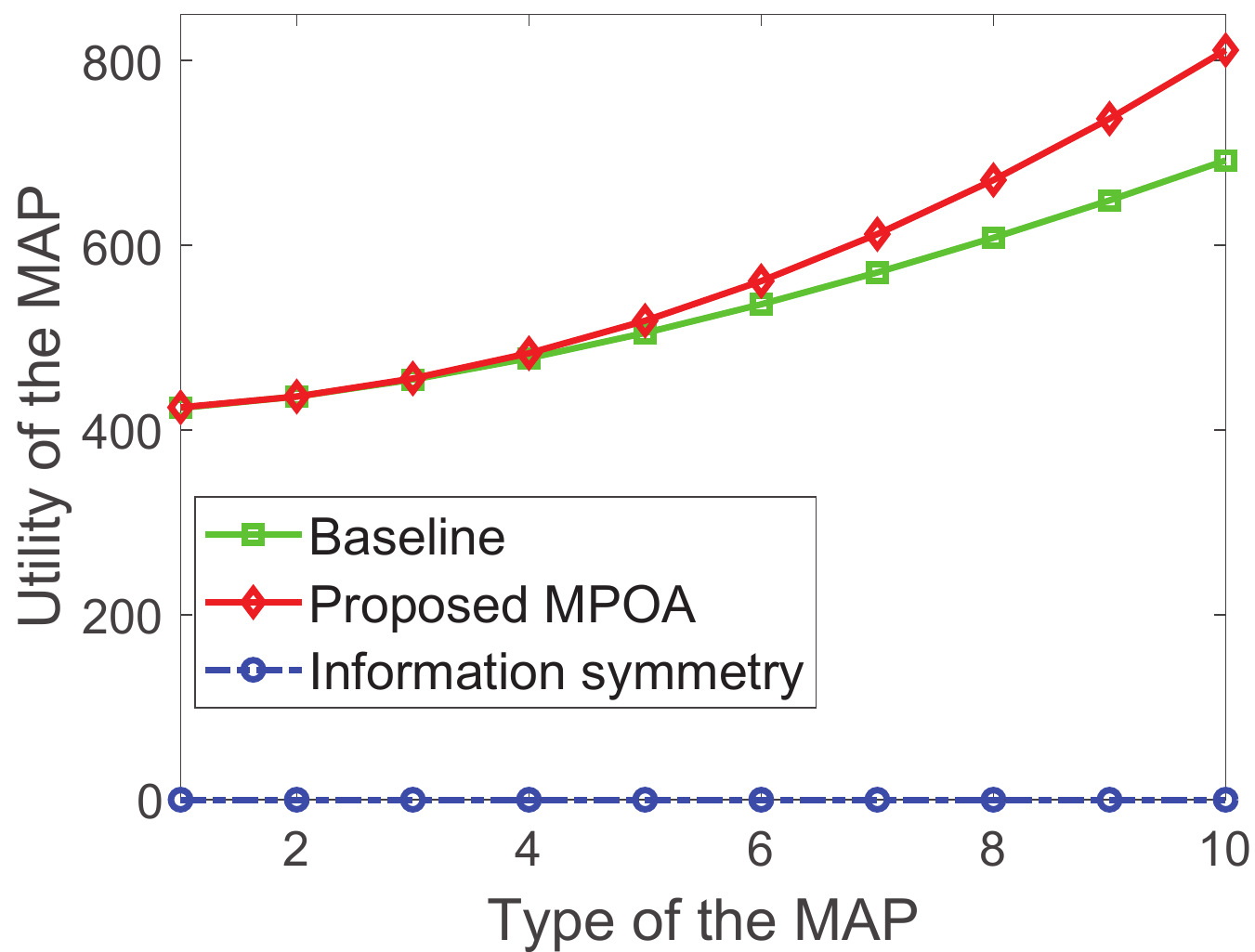} & 
			\hspace*{-.3cm}
			\epsfxsize=2.14 in \epsffile{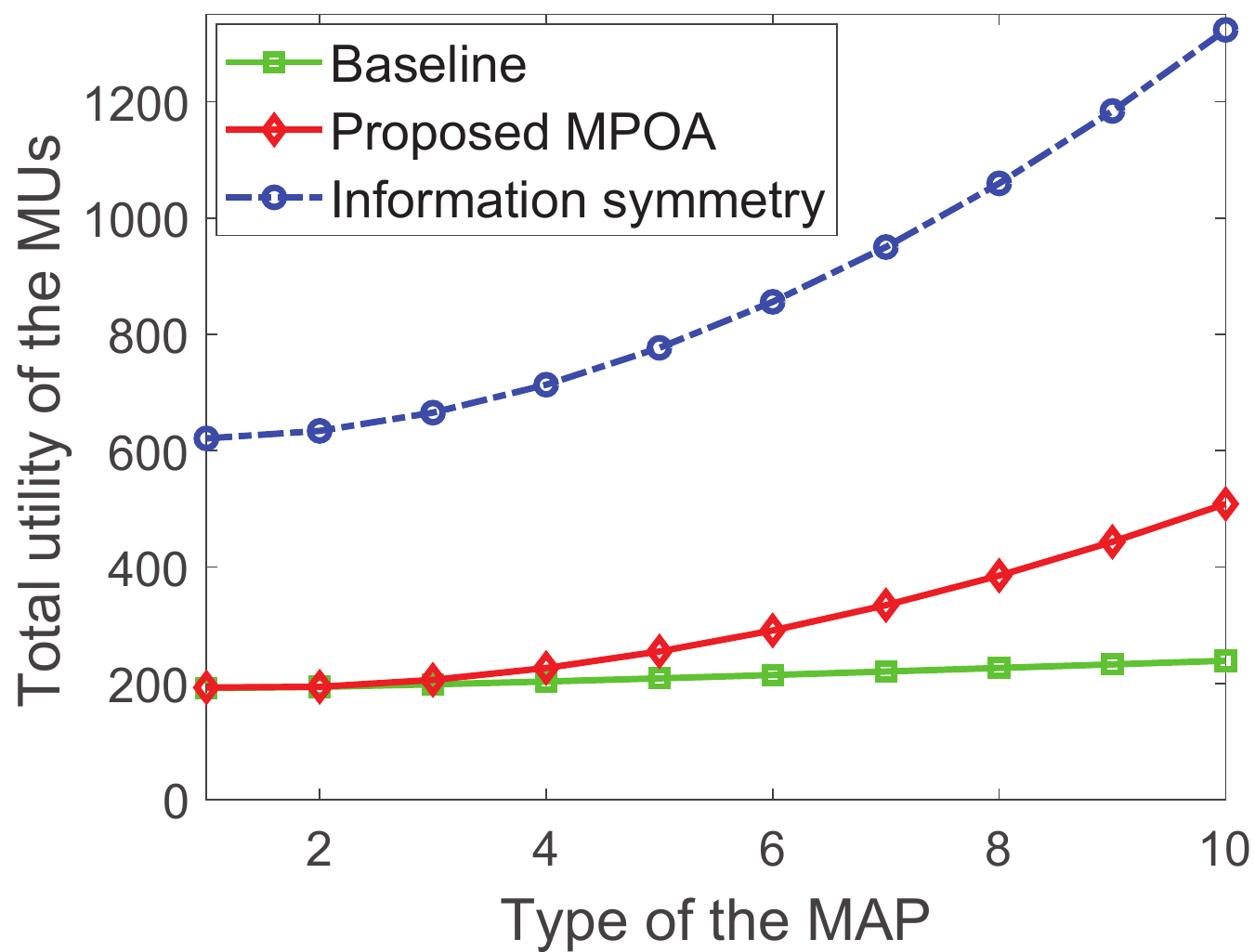} &
			\hspace*{-.3cm}
			\epsfxsize=2.14 in \epsffile{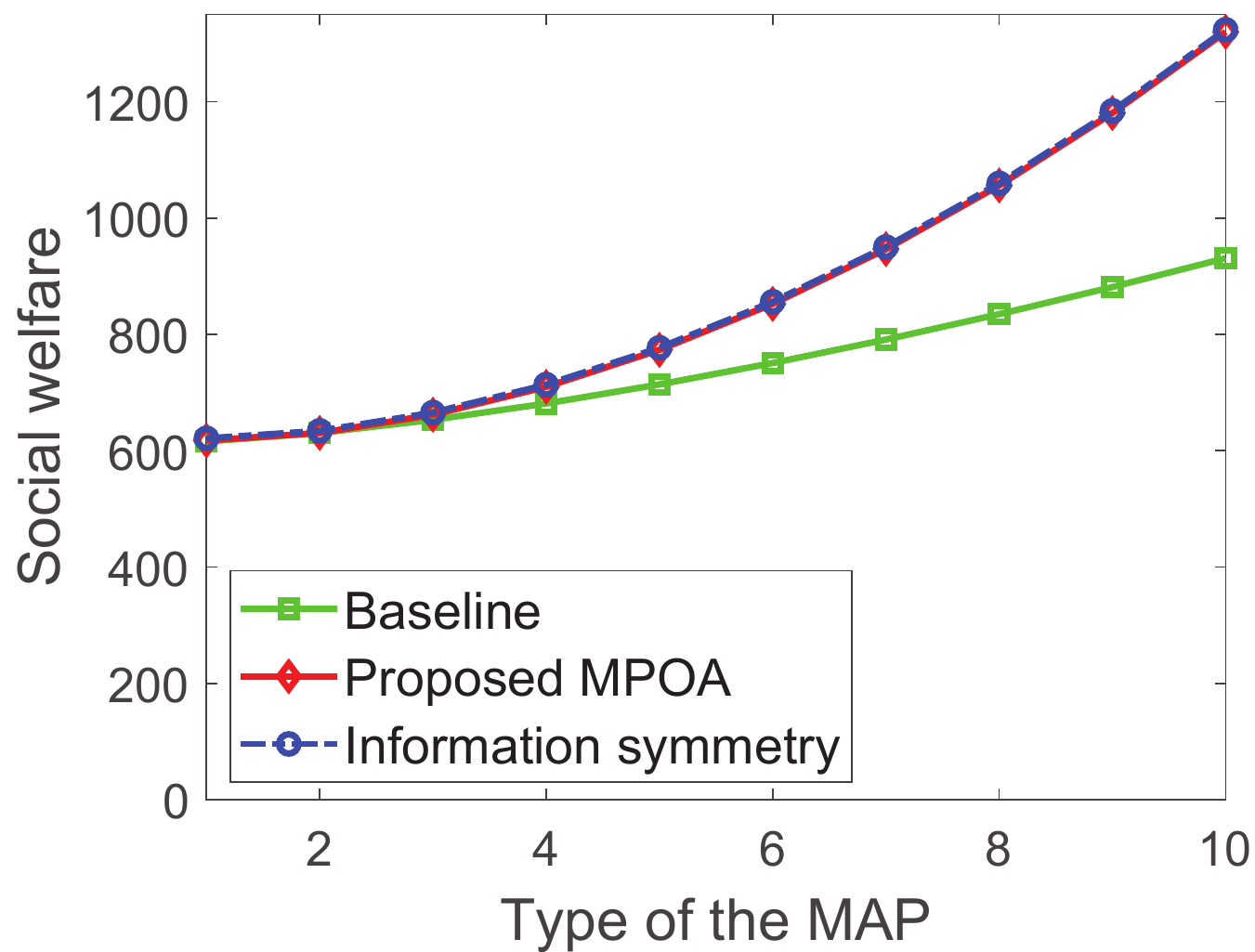} \\ [0.1cm]
			\text{\footnotesize (a) Utility of the MAP} & \text{\footnotesize (b) Total utility of participating MUs} & \text{\footnotesize (c) Social welfare}\\ [0.2cm]
			\vspace*{-0cm}
			\end{array}$
			\vspace{-1.5\baselineskip}
			\caption{The learning contract performance vs type of the MAP.}
			\label{fig:utility}
		\end{center}
	\end{figure*} 

	\begin{figure*}[t]
		\begin{center}
			$\begin{array}{ccc} 
			\epsfxsize=2.12 in \epsffile{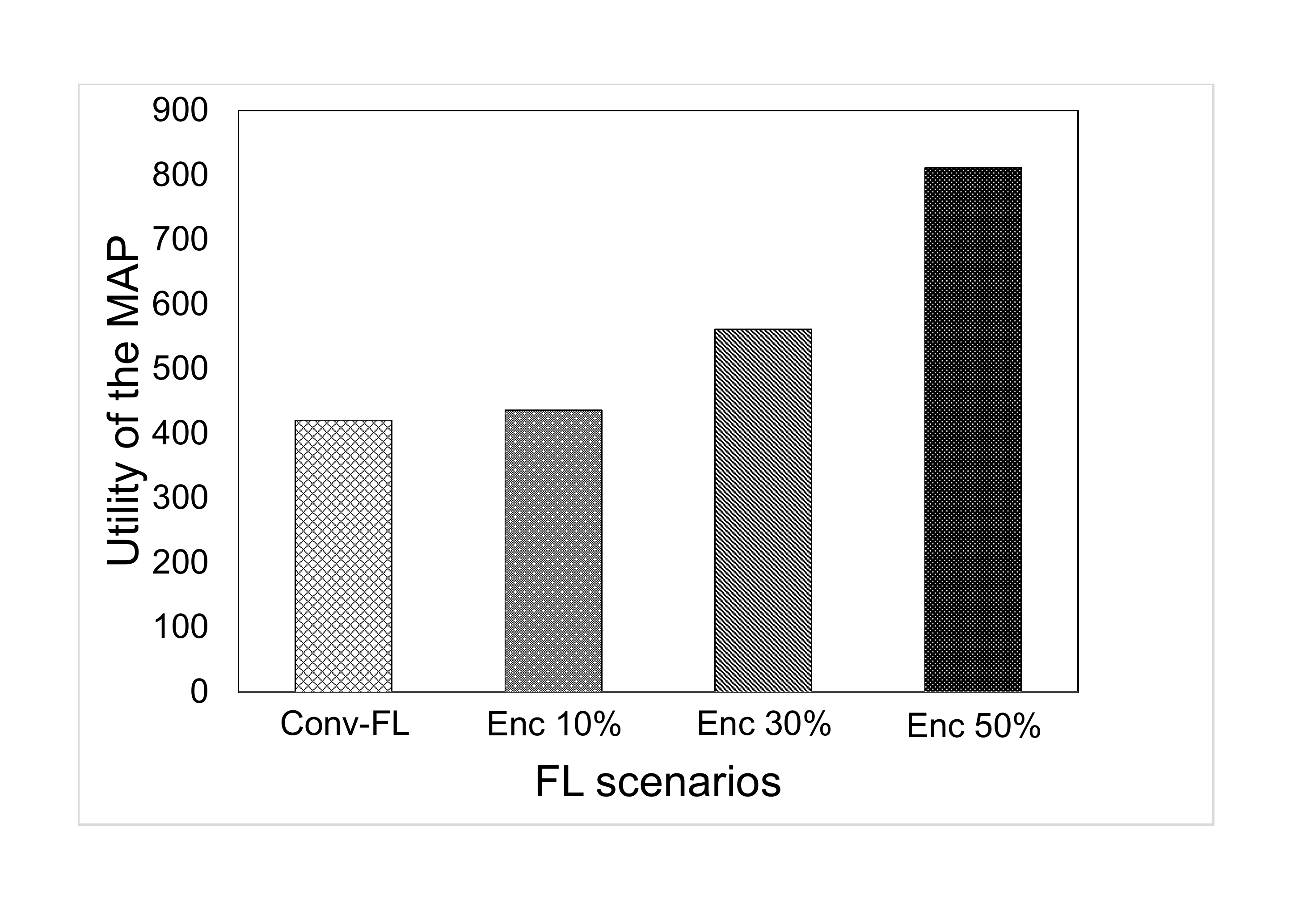} & 
			\hspace*{-.1cm}
			\epsfxsize=2.12 in \epsffile{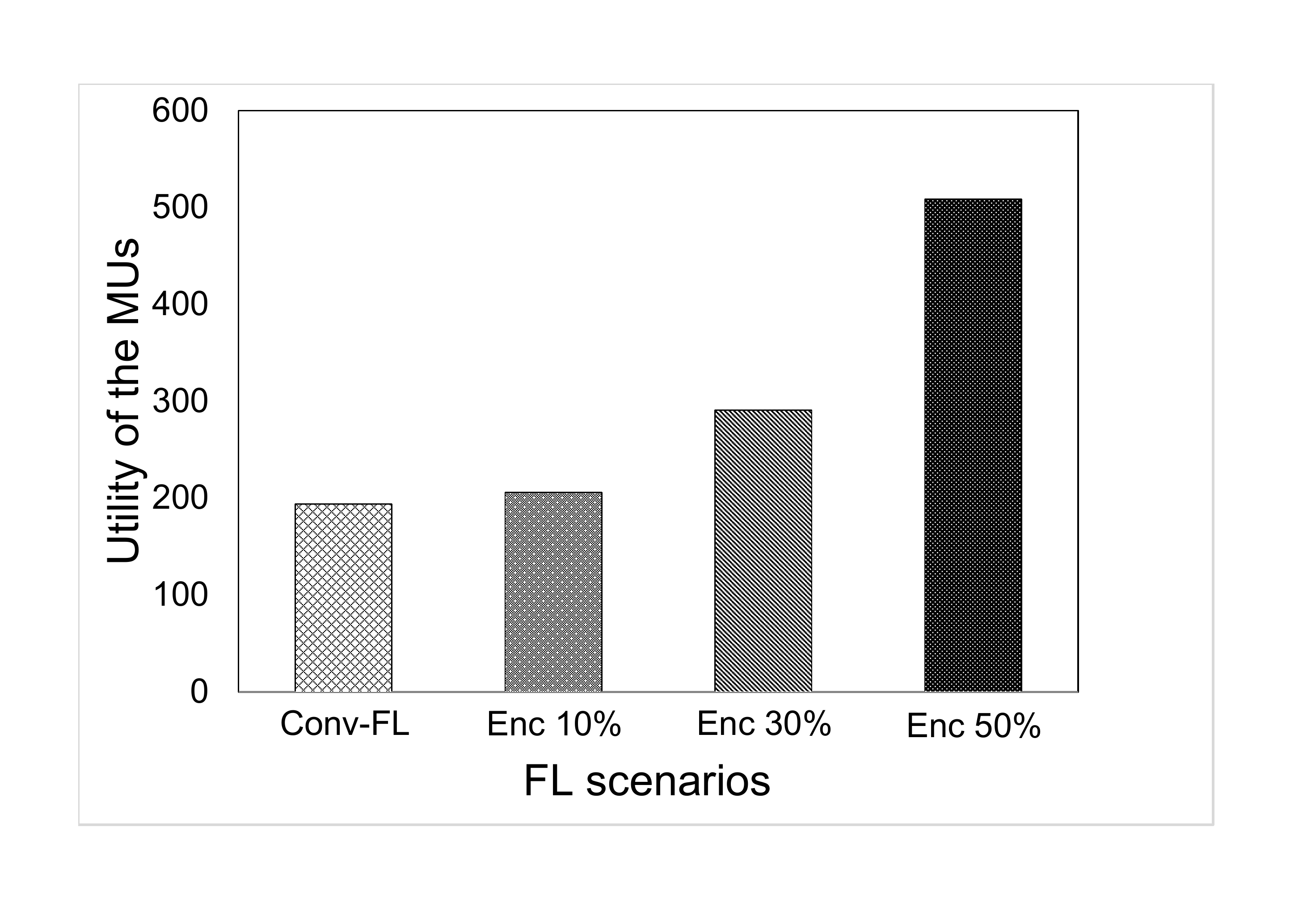} &
			\hspace*{-.1cm}
			\epsfxsize=2.14 in \epsffile{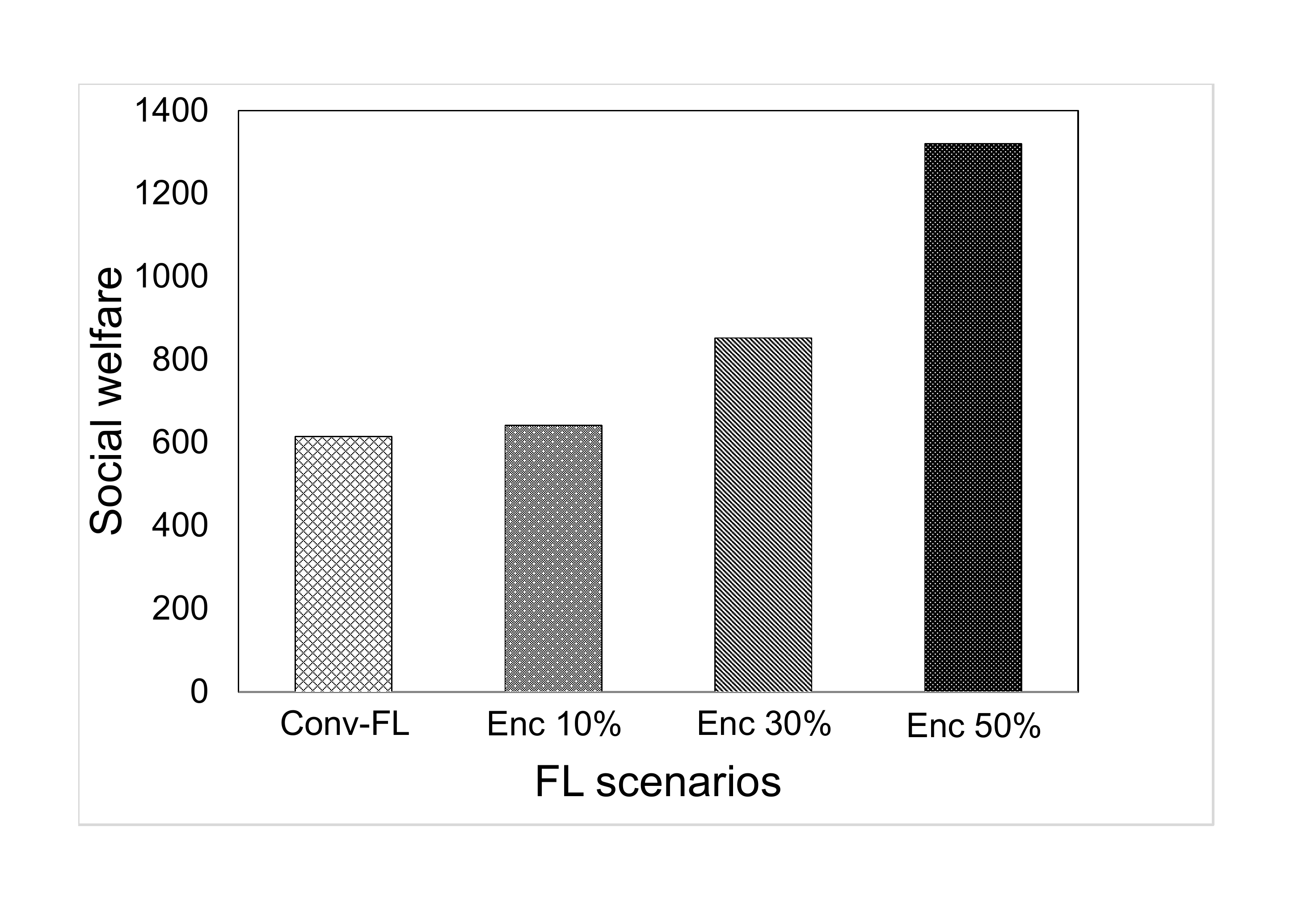} \\ [0.1cm]
			\text{\footnotesize (a) Utility of the MAP} & \text{\footnotesize (b) Total utility of participating MUs} & \text{\footnotesize (c) Social welfare}\\ [0.2cm]
			\vspace*{-0cm}
			\end{array}$
			\vspace{-1.5\baselineskip}
			\caption{The learning contract performance for various FL scenarios.}
			\label{fig:utility2}
		\end{center}
	\end{figure*} 
	
	We first demonstrate in Fig.~\ref{fig:IC_IR_const}(a) that the MAP always obtains a non-negative utility to ensure the IR constraint validity as mathematically formulated in~(\ref{eqn:for7a}). This utility follows a monotonic increasing function with respect to the MAP's type because the MAP is willing to train more encrypted datasets due to higher benefit for the MAP in terms of the global model accuracy. Moreover, the MAP can also produce the highest utility when it applies the right contract for its true type, i.e., the IC constraints in~(\ref{eqn:for7a}) are satisfied, as shown in Fig.~\ref{fig:IC_IR_const}(b). For example, the MAP with types 2, 4, 6, 8, and 10 will generate the highest utility when corresponding contracts for the respective types are applied. Since both IR and IC constraints are satisfied, we can obtain the feasible contracts for all participating MUs to maximize their utilities in the proposed FL process. 
	
	\subsubsection{The utility of the MAP/participating MUs and the social welfare} 
	
	\begin{figure*}[t]
		\begin{center}
			$\begin{array}{ccc} 
			\epsfxsize=2 in \epsffile{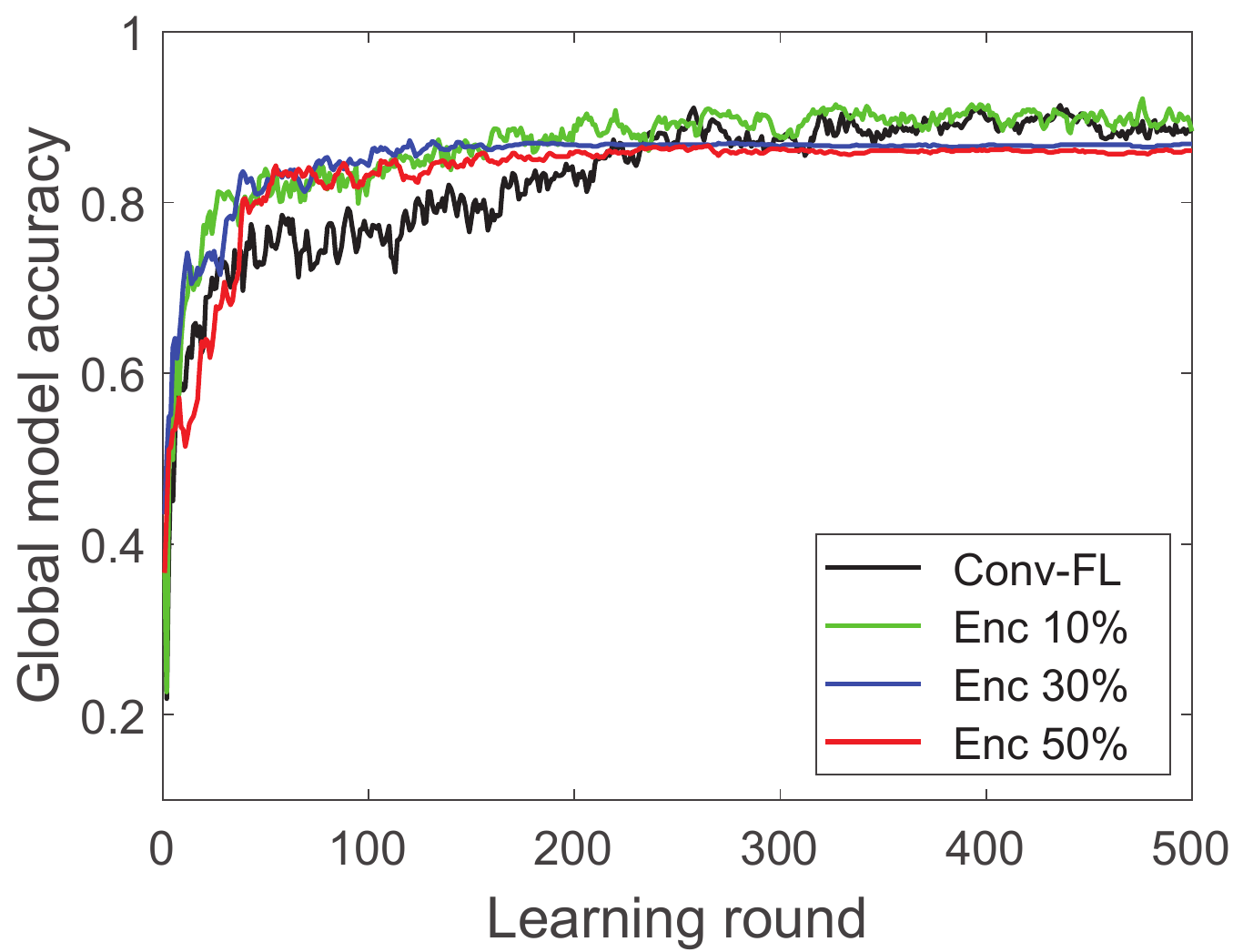} &
			\hspace*{-.3cm}
			\epsfxsize=2 in \epsffile{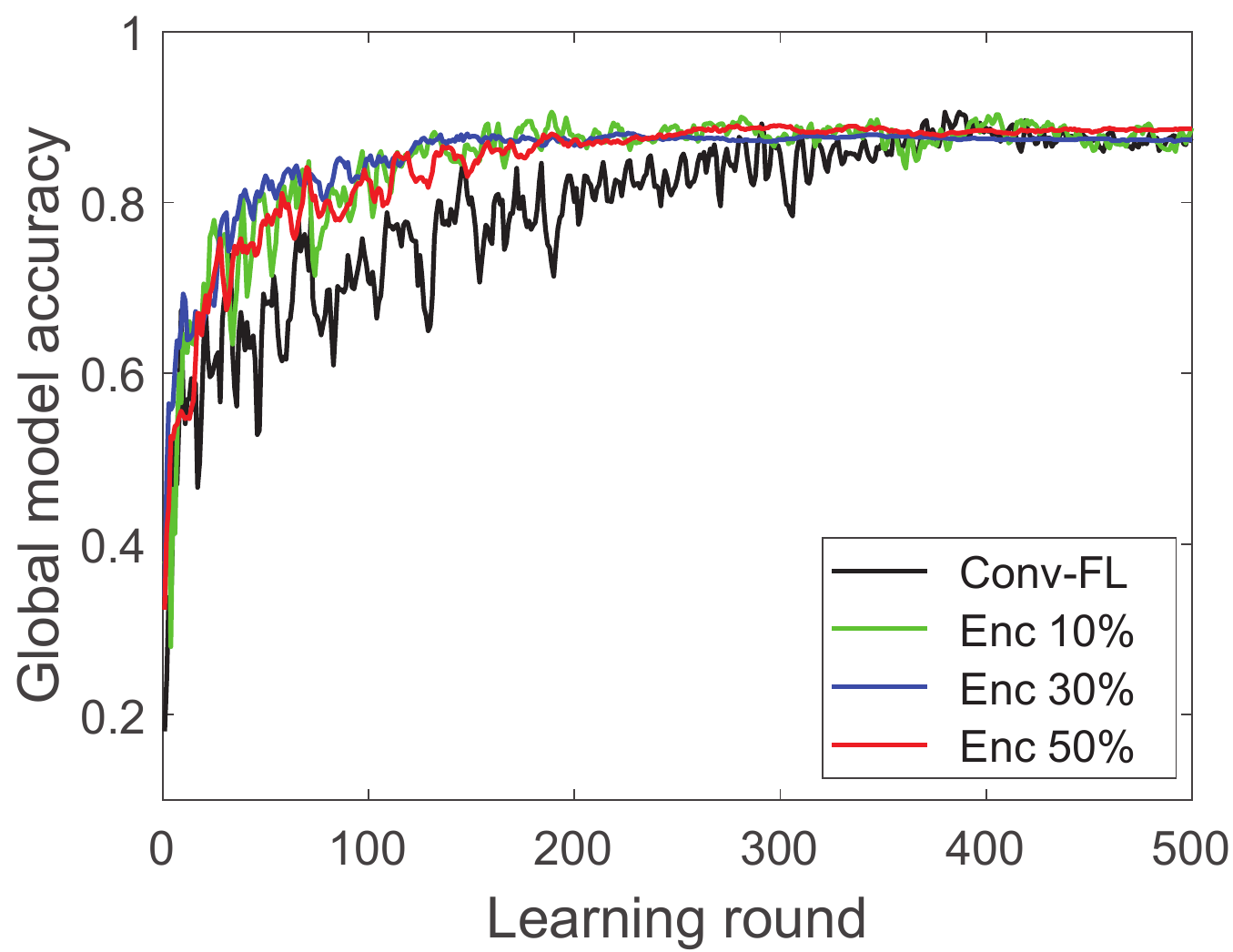} &
			\hspace*{-.3cm}
			\epsfxsize=2 in \epsffile{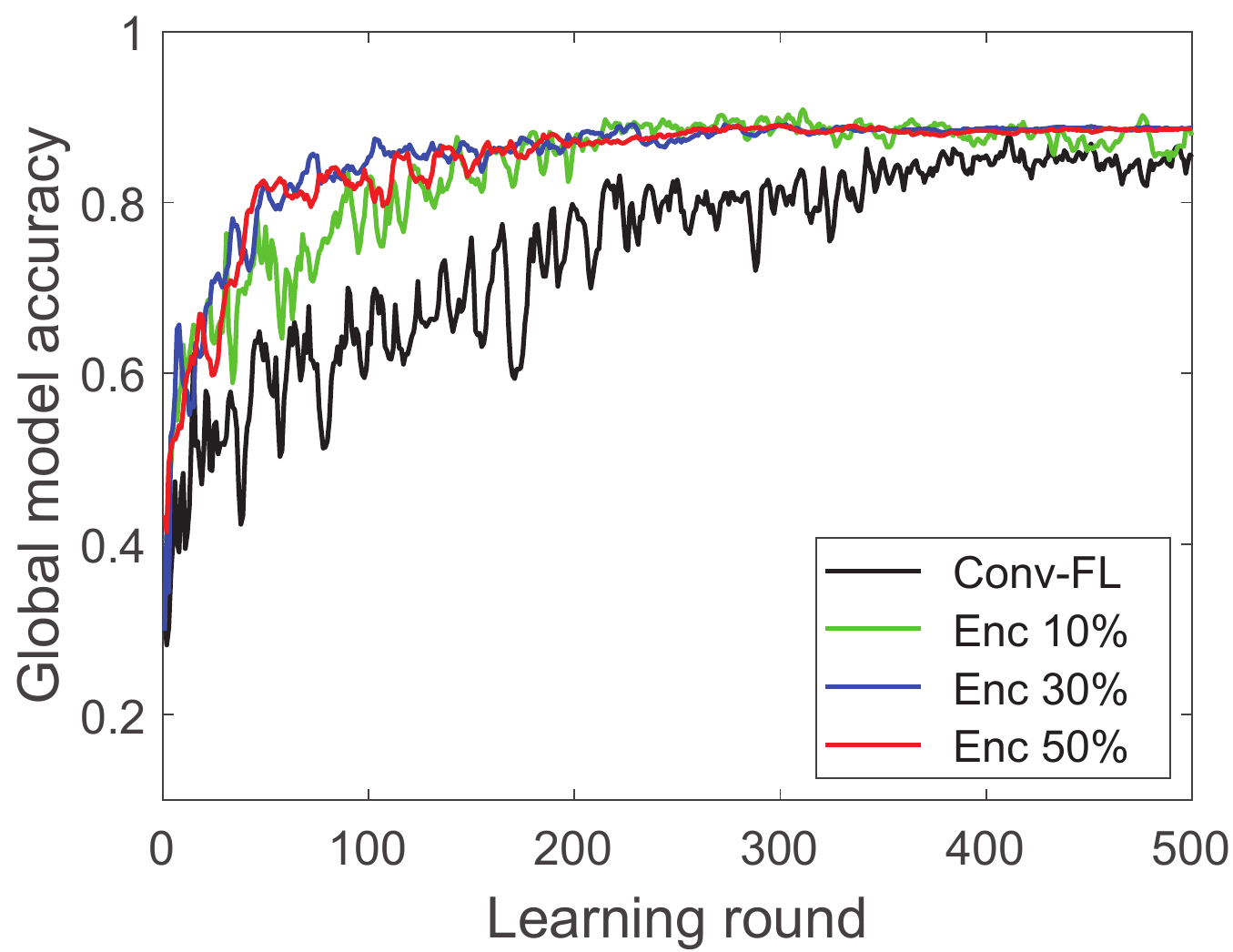} \\ [-0.1cm]
			\text{\footnotesize (a) Accuracy using 10 MUs} & \text{\footnotesize (b) Accuracy using 7 MUs} & \text{\footnotesize (c) Accuracy using 5 MUs}  \\ [0.2cm]
			\end{array}$
			\vspace*{-0cm}
			$\begin{array}{ccc} 
			\epsfxsize=2 in \epsffile{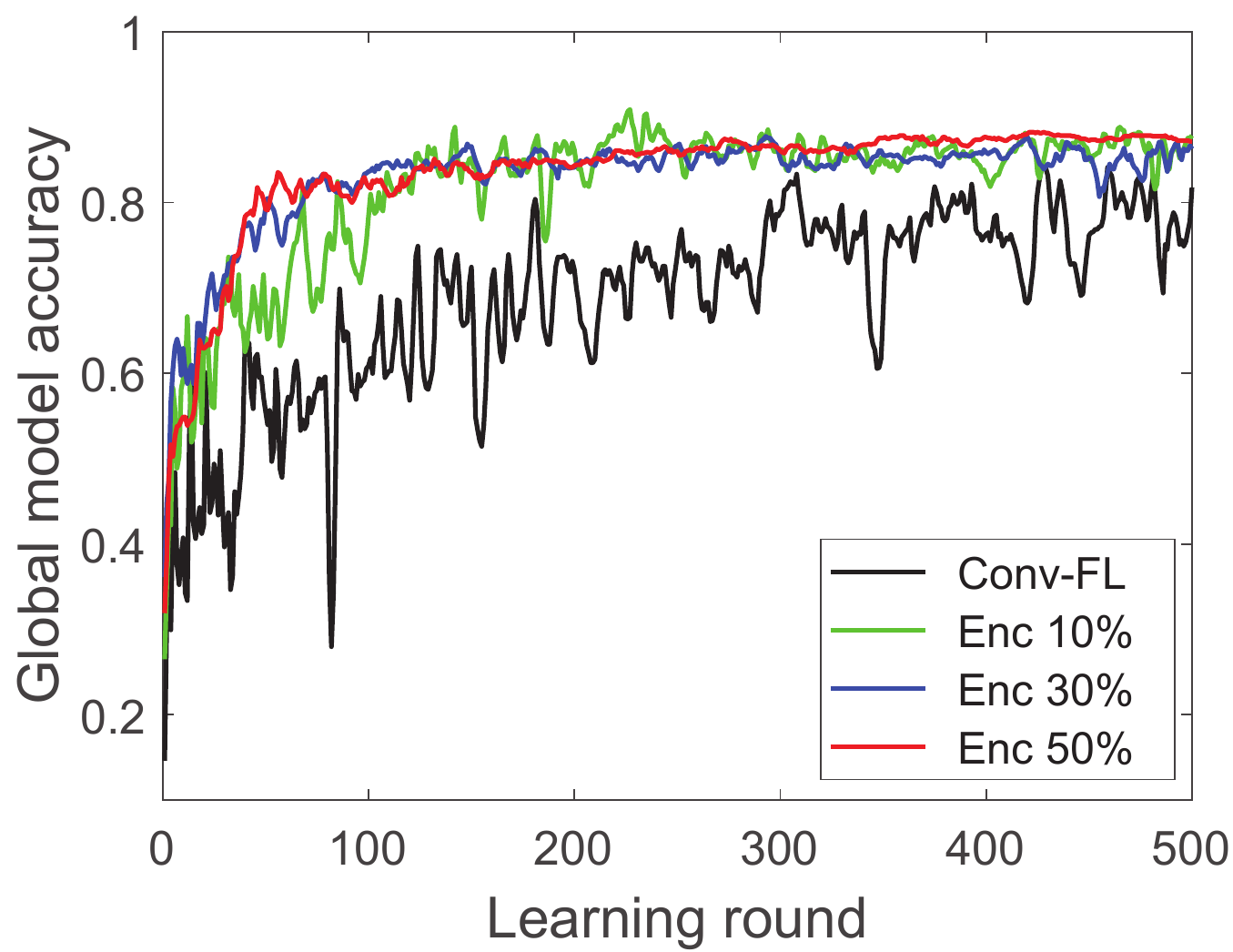} &
			\hspace*{-.3cm}
			\epsfxsize=2 in \epsffile{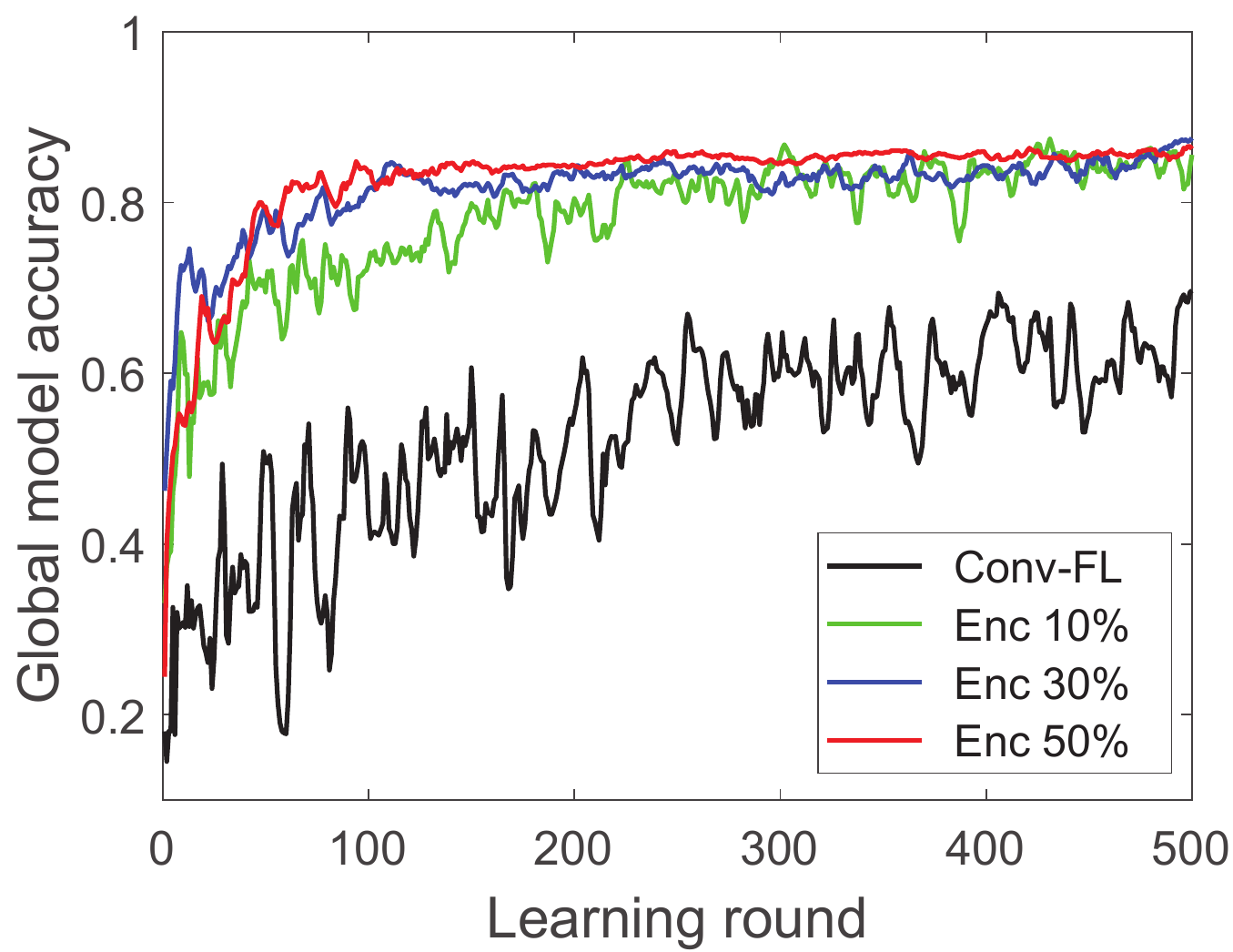} &
			\hspace*{-.3cm}
			\epsfxsize=2.3 in \epsffile{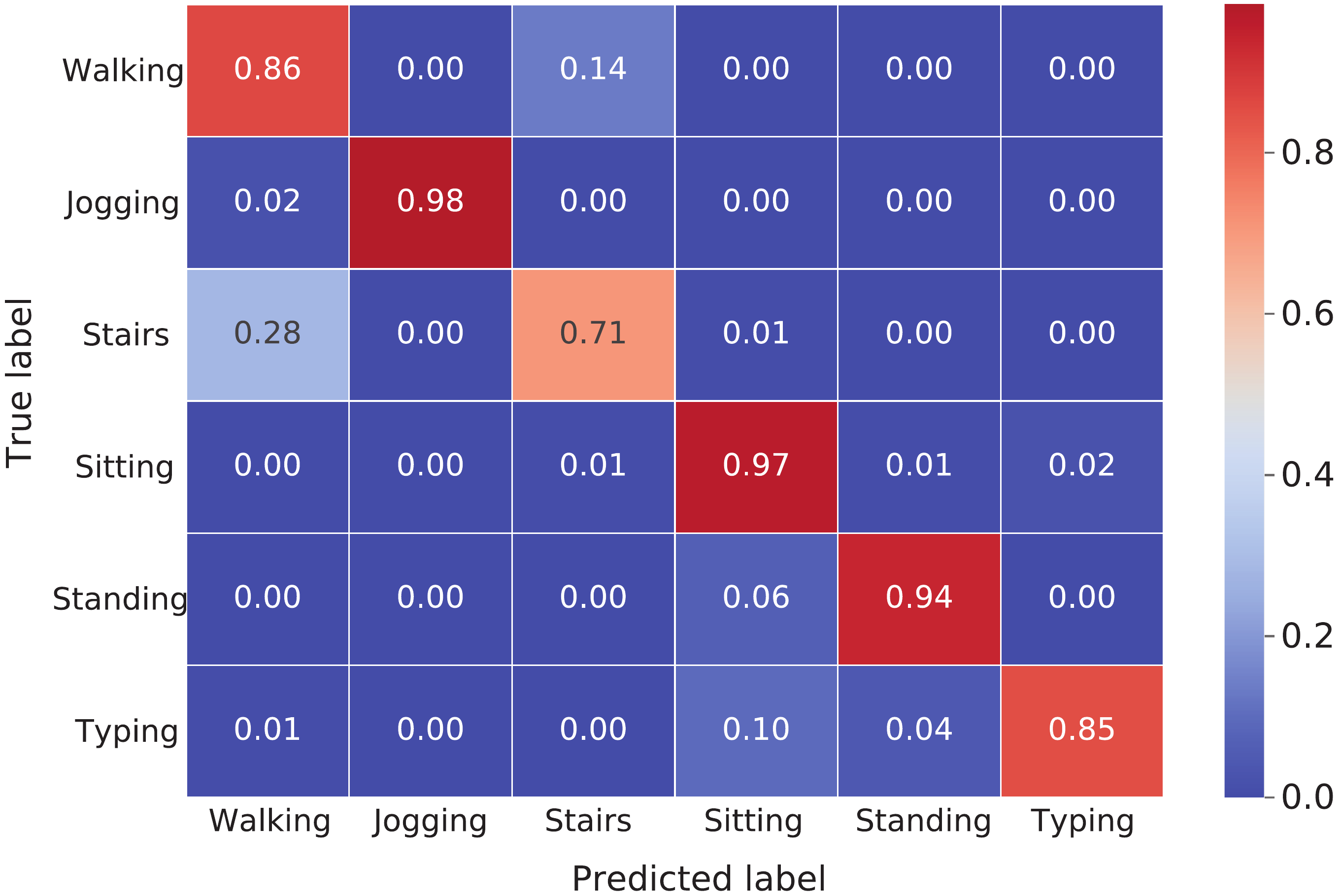}\\ [-0.1cm]
			\text{\footnotesize (d) Accuracy using 3 MUs} & \text{\footnotesize (e) Accuracy using 1 MU} & \text{\footnotesize (f) Confusion matrix} \\ [0.2cm]
			\end{array}$
			\vspace*{-0.3cm}
			\caption{The performance of different FL scenarios using various participating MUs.}
			\label{fig:accuracy_niid}
		\end{center}
	\end{figure*}
	
	We then compare the MAP's utility of the proposed framework with those of the baseline and information-symmetry methods. As can be observed in Fig.~\ref{fig:utility}(a), the proposed framework can achieve the highest utility for all types of the MAP especially when the MAP's type gets higher. Particularly, the proposed framework can obtain the utility up to 17\% higher than that of the baseline method. The reason is that the baseline method cannot optimize the encrypted dataset proportion from the MAP since each participating MU only can send the proportional amount of encrypted dataset under the MAP's current computing resources (to train the whole encrypted datasets from all participating MUs). Additionally, the information-symmetry method suffers from zero utility for all types because all participating MUs can collaborate together to obtain maximum utility through completely perceiving the current true type, i.e., the available computing resources, of the MAP. Consequently, as shown in Fig.~\ref{fig:utility}(b), the information-symmetry method can maximize the total utility of participating MUs. In this case, the proposed framework can still outperform the baseline method up to 113\% in terms of the total utility of participating MUs. An interesting point can be observed in Fig.~\ref{fig:utility}(c). Particularly, although the information-symmetry can obtain the total utility of the MUs up to 2.6 times higher than that of the proposed framework, our proposed framework can achieve the social welfare within 0.57\% of the information-symmetry method (which acts as the upper bound solution). Moreover, the proposed framework can obtain the social welfare up to 42\% higher than that of the baseline method. To this end, we can demonstrate that our proposed framework is applicable to implement in the proposed FL process through stabilizing the utility performance of the MAP and participating MUs efficiently. Then, we observe the contract performance of our FL-based framework compared with that of the conventional FL method. As shown in Fig.~\ref{fig:utility2}, the use of encrypted training in our FL-based framework, i.e., 10\%, 30\%, and 50\% encrypted datasets, can improve the utility of the MAP, the total utility of the participating MUs, and social welfare of the network up to 93\%, 162\%, and 114\%, respectively, compared with those of the conventional FL method. The reason is that the MUs can reduce remarkable training tasks through offloading their encrypted datasets to the MAP for the encrypted training, which then leads to better learning quality, especially when some MUs suffer from straggling computing resources. These results reflect that a higher proportion of encrypted dataset at the MAP may result in higher utilities and social welfare for both the participating MUs and the MAP at the expense of a higher privacy protection cost (which is further discussed in Section~\ref{subsec:tra-off}).  
	
	\subsection{Proposed FL Accuracy Performance}
	
	\subsubsection{Accuracy with various number of participating MUs}

		\begin{figure*}[t]
		\begin{center}
			$\begin{array}{ccc} 
			\epsfxsize=2 in \epsffile{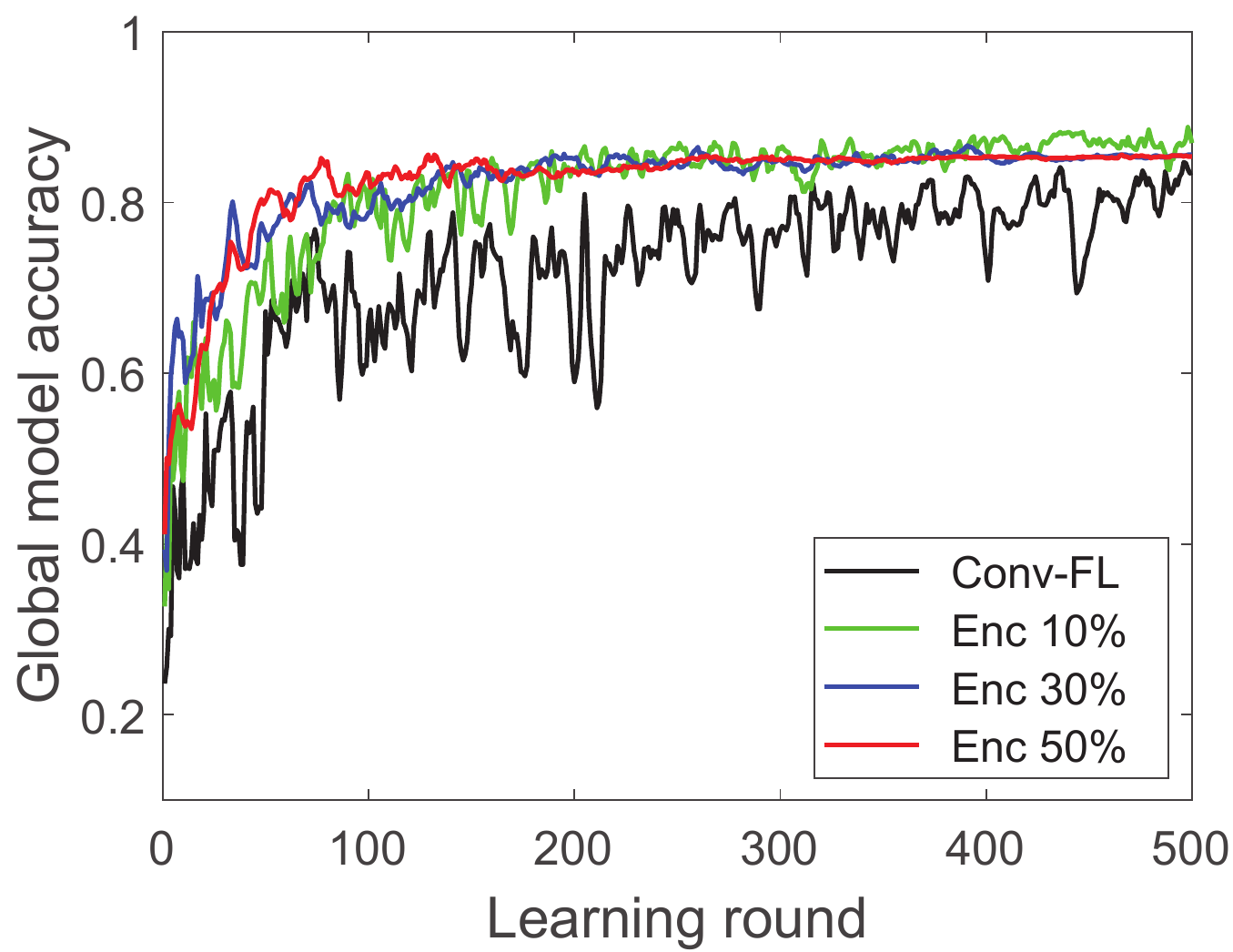} &
			\hspace*{-.3cm}
			\epsfxsize=2 in \epsffile{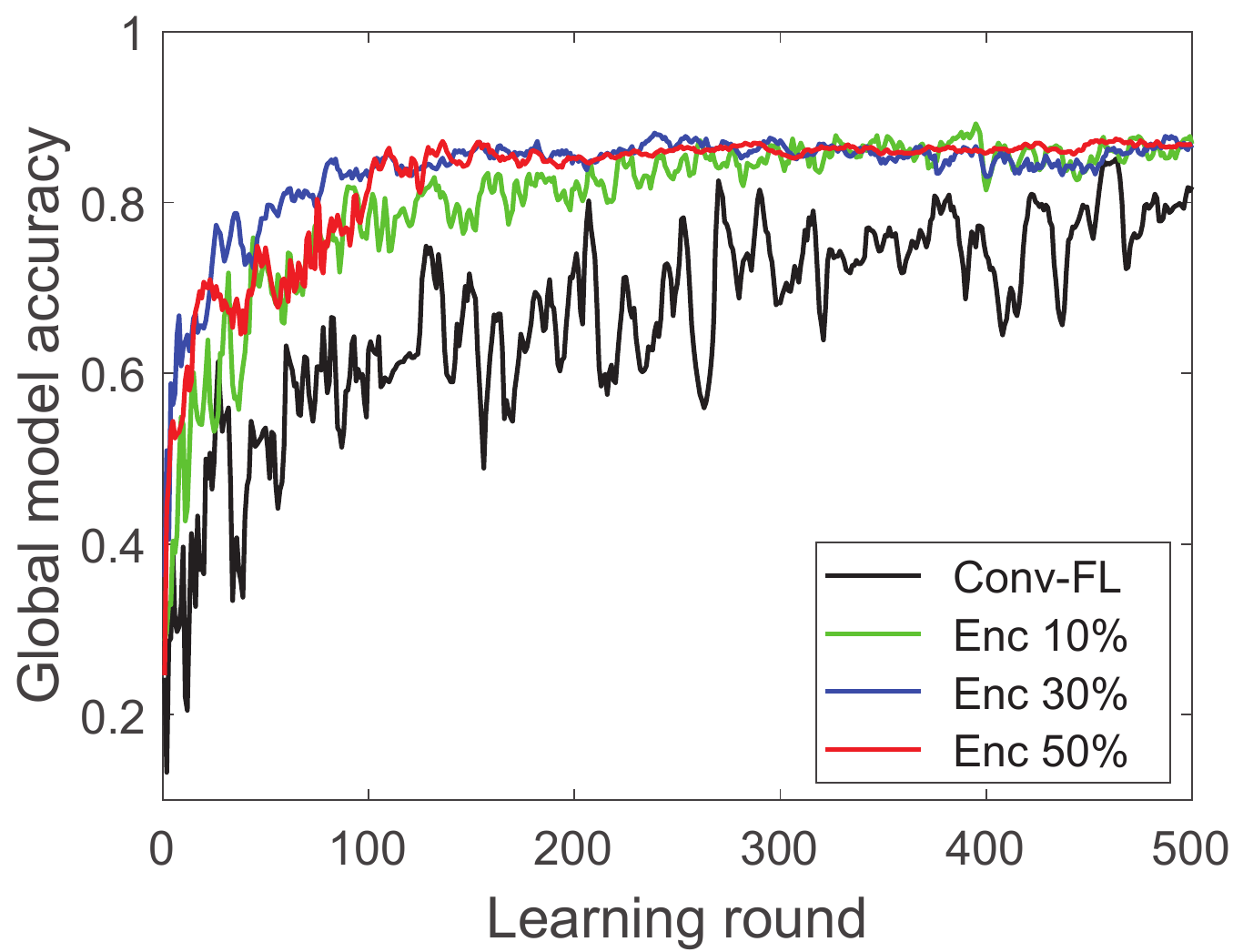} &
			\hspace*{-.3cm}
			\epsfxsize=2 in \epsffile{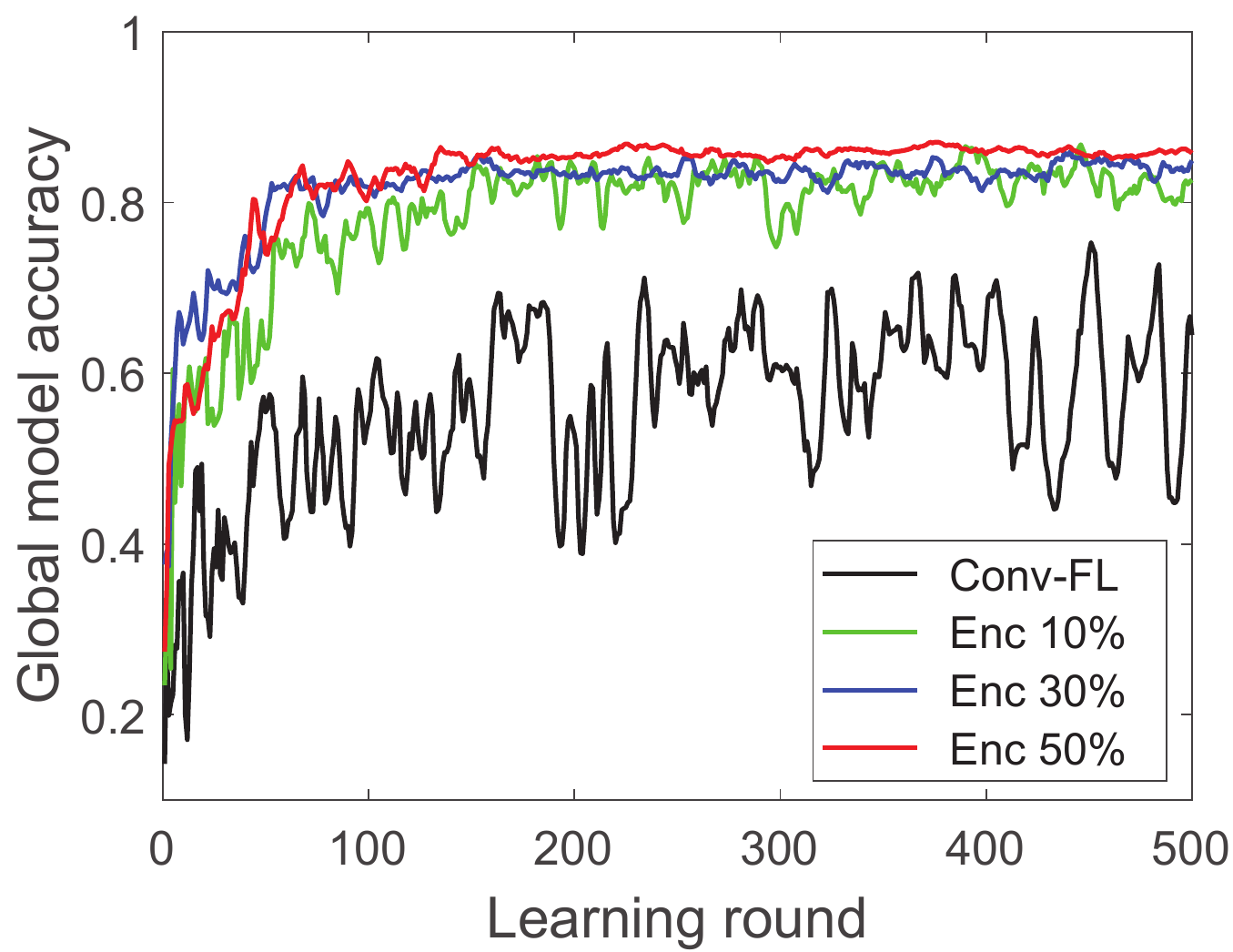} \\ [-0.1cm]
			\text{\footnotesize (a) 20\% straggling probability} & \text{\footnotesize (b) 50\% straggling probability} & \text{\footnotesize (c) 80\% straggling probability}  \\ [0.2cm]
			\end{array}$
			\vspace{-0.5\baselineskip}
			\caption{The performance of various FL scenarios using different straggling probabilities.}
			\vspace{-0.5em}
			\label{fig:accuracy_niid_straggling}
		\end{center}
	\end{figure*}
	
	
	In this section, we first analyze performance of the proposed FL with straggling mitigation and privacy-awareness when the MAP updates the encrypted global model upon successfully receiving the encrypted local models from a fixed number of participating MUs at each learning round. As shown in Fig.~\ref{fig:accuracy_niid}, the proposed FL generally can maintain its accuracy performance for all participating MU scenarios regardless the encrypted dataset proportion of the participating MUs. Specifically, when all local models from 10 participating MUs are applied to update the encrypted global model at each round in Fig.~\ref{fig:accuracy_niid}(a), the accuracy performance for the conv-FL and proposed FL approaches can achieve the best results due to the equal size of training samples at each participating MU (including at the MAP for the proposed FL with 10\% encrypted dataset). Although the conv-FL can eventually obtain the same final accuracy level at 88\% as the proposed FL with 10\% encrypted dataset proportion, the proposed FL provides more stable accuracy performance at each learning round and achieves the convergence rate 28.5\% faster than that of the conv-FL method. Note that the proposed FL with 30\% and 50\% encrypted datasets converge to 2\% lower accuracy level (i.e., at level 86\%) compared with that of the conv-FL due to high imbalanced size of non-i.i.d training samples at the participating MUs and the MAP, i.e., the number of training samples at the MAP is much larger than that at each MU. As such, the aggregation of encrypted local models from participating MUs and the MAP will slightly degrade final accuracy of the converged global model~\cite{Lim:2020}. 
	
	When less encrypted local models, i.e., 7 participating MUs to 1 participating MU, are used to update the encrypted global model (representing worse straggling issues), the proposed FL for all encrypted dataset proportions can maintain their accuracy performance at the same final accuracy level, thanks to the additional training at the MAP. Meanwhile, the performance of conv-FL suffers from the accuracy degradation due to insufficient number of non-i.i.d samples to boost the accuracy level. To this end, the proposed FL for all encrypted dataset proportions can obtain the accuracy gap with the conv-FL up to 1.72 times, 1.72 times, 2.9 times, and 4.6 times using 7 MUs, 5 MUs, 3 MUs, and 1 MU scenarios, respectively. Additionally, in Fig.~\ref{fig:accuracy_niid}(e), we can observe that the use of higher encrypted dataset proportion sent to the MAP, i.e., 50\% encrypted dataset, can bring more stable accuracy performance at each learning round and reach the convergence 70\% (70 rounds) and 127\% (140 rounds) faster than those of 10\% and 30\% encrypted dataset scenarios, respectively. This means that when more devices have straggling  problems (i.e., only very few participating MUs are used as the FL learners) at each learning round, the use of larger encrypted training at the MAP can greatly compensate the insufficient training samples at the MUs. In this case, within 500 learning rounds, the conv-FL even cannot reach the convergence due to high fluctuation of the accuracy performance. Then, the confusion matrix to show how good the prediction model towards each activity label for the proposed FL with 10\% encrypted dataset and 10 participating MUs can be seen in Fig.~\ref{fig:accuracy_niid}(f). Based on the above performances, it can be implied that a higher type of the MAP (which corresponds to a larger encrypted dataset proportion at the MAP) generally can speed up the convergence rate, improve global model accuracy, and provide more stable performance especially when a small fixed number of MUs participate in the FL process.
	
	\subsubsection{Accuracy with various straggling probabilities}

	To further show the superiority of proposed FL, we consider various straggling probabilities, i.e, the probabilities that participating MUs experience the straggling problems, e.g., due to low computing resources and/or bad communication resources, such that they cannot send the encrypted local models to the MAP at a certain learning round. Using fixed 5 participating MUs at each learning round, we consider 20\%, 50\%, and 80\% straggling probabilities. In this case, the higher the straggling probability is, the lower the number of participating MUs that can send the encrypted local models to the MAP at each round. As shown in Fig.~\ref{fig:accuracy_niid_straggling}, the proposed FL can preserve its accuracy approximately at level 86\% for all straggling probability scenarios. In contrast, the conv-FL cannot maintain its accuracy when the straggling probability gets worse. As such, in terms of the converged accuracy performance, the proposed FL can outperform the conv-FL up to 1.84 times, 3.35 times, and 3.44 times for 20\%, 50\%, and 80\% straggling probability scenarios, respectively. The interesting point is that the proposed FL with 50\% encrypted dataset can fully keep its accuracy at the same level with stable values for most of learning rounds, while the ones with 10\% and 30\% encrypted datasets degrade their accuracy performances when the straggling probability gets higher. The reason is that the higher encrypted training at the MAP can compensate the straggling problems at the participating MUs more due to its learning process improvement with respect to the larger size of encrypted dataset. 
	
	\begin{figure}[t]
		\begin{center}
			$\begin{array}{cc} 
			\epsfxsize=1.65 in \epsffile{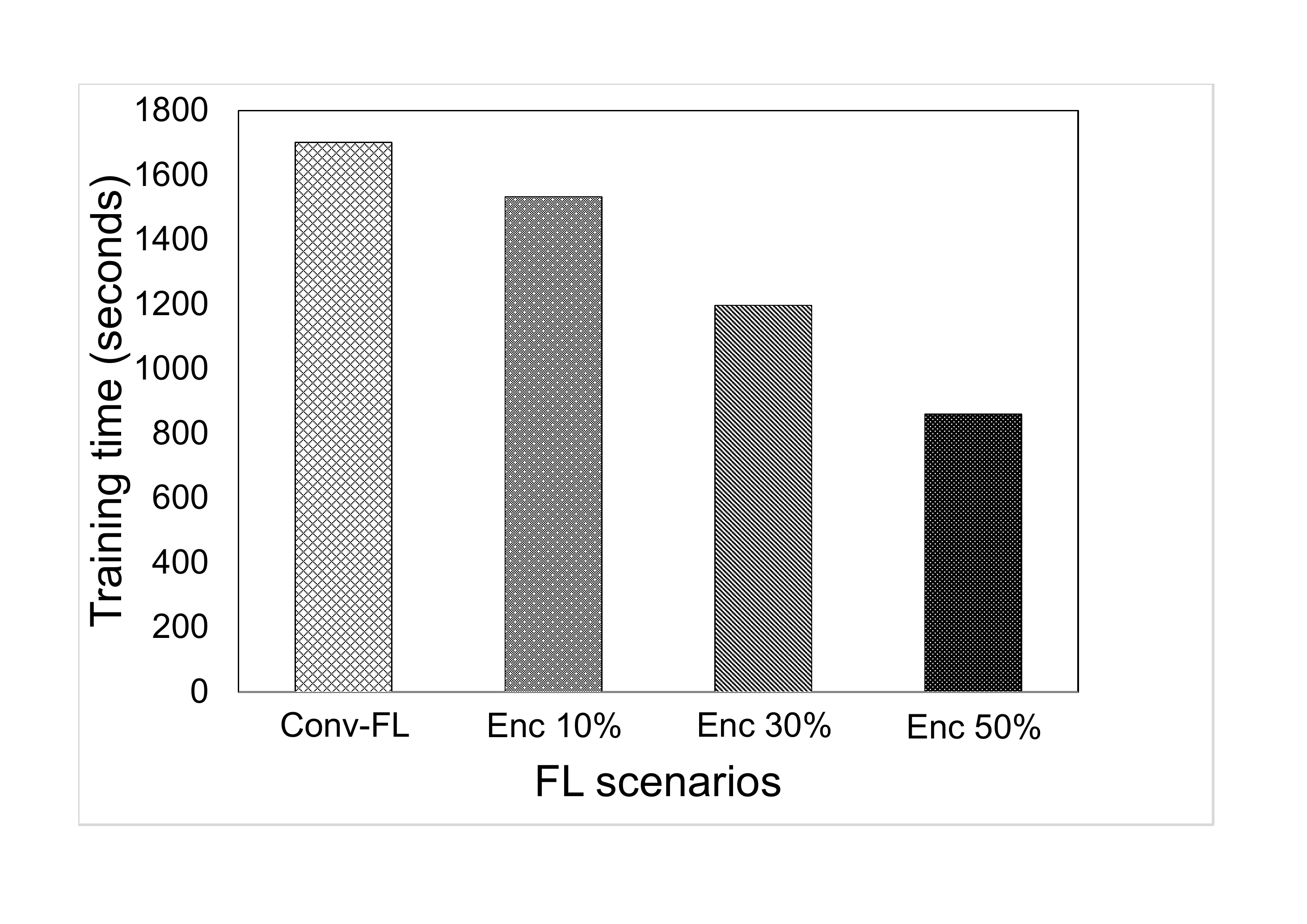} & 
			\hspace*{-.0cm}
			\epsfxsize=1.6 in \epsffile{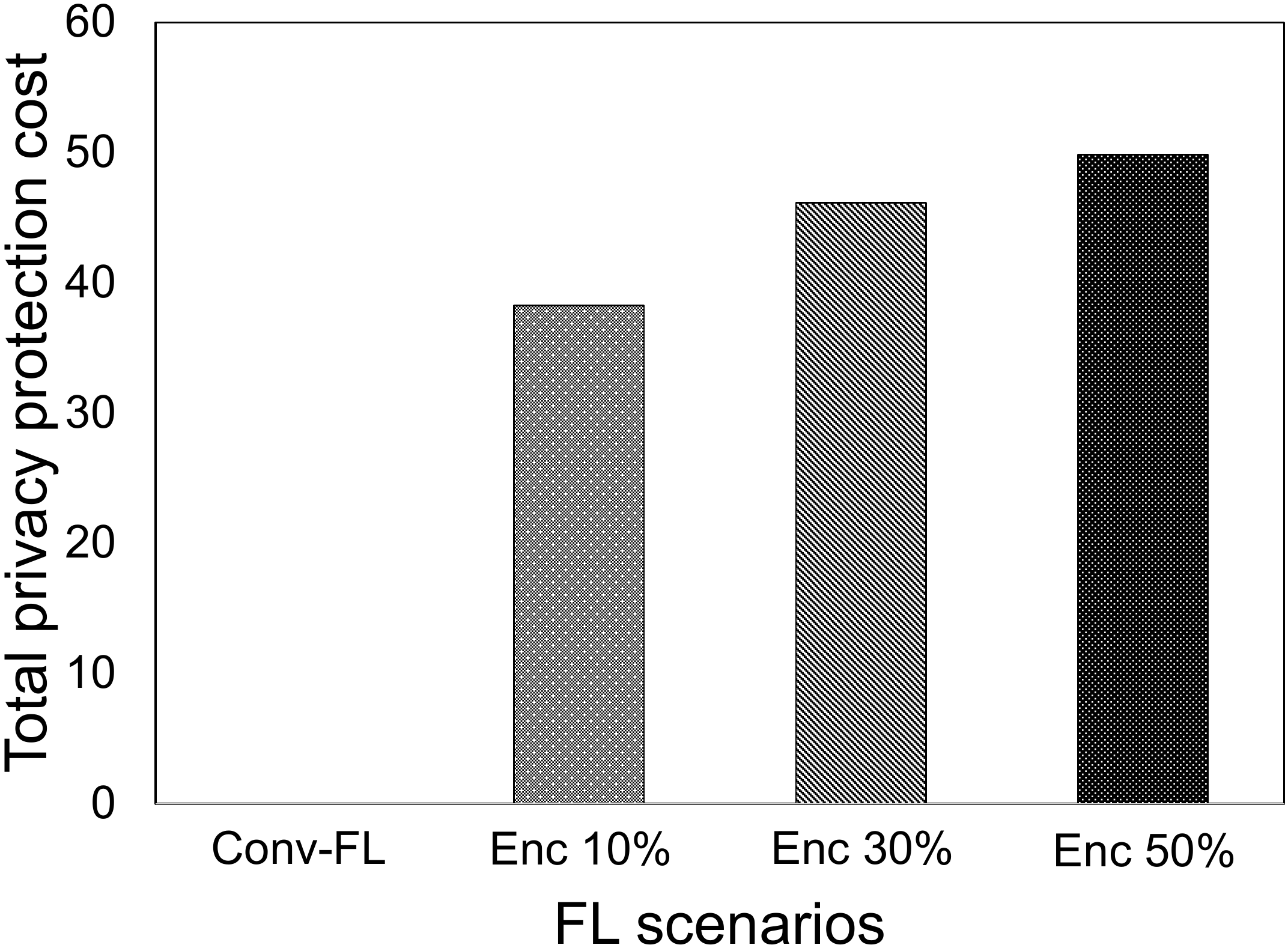} \\ [0.1cm]
			\text{\footnotesize (a) Training time} & \text{\footnotesize (b) Total privacy protection cost} \\ [0.2cm]
			\vspace*{-0cm}
			\end{array}$
			\vspace{-0.5\baselineskip}
			\caption{The training time to achieve 500 learning rounds and the total privacy protection cost of encrypted dataset sharing for various FL scenarios.}
			\label{fig:train_priv}
		\end{center}
	\end{figure} 
	
	%
	
	\subsection{Trade-off Between Training Time and Privacy Cost}
	\label{subsec:tra-off}
	
	In addition to the accuracy and convergence rate performance, the proposed FL outperforms the conv-FL in terms of the training time to complete 500 learning rounds. Specifically, as observed in Fig.~\ref{fig:train_priv}(a), the conv-FL requires the longest training time, i.e., 1702 seconds, to reach 500 learning rounds. Meanwhile, for the proposed FL, the larger the encrypted dataset proportion is trained at the MAP (i.e., the higher the type of the MAP is), the faster the training time to complete 500 learning rounds. This is due to the smaller number of local datasets to be trained at the participating MUs such that the encrypted local models can be aggregated faster at the MAP to update the encrypted global model. In this case, the proposed FL with 10\%, 30\%, 50\% encrypted datasets can reduce the training time up to 10\%, 30\%, and 49\%, respectively, compared with that of the conv-FL. The training time result contradicts with the total privacy protection cost of all participating MUs for sharing encrypted datasets to the MAP. As observed in Fig.~\ref{fig:train_priv}(b), the performance of total privacy cost follows a logarithmic function (as defined in (\ref{eqn:for2c})) with respect to the proportion of total uploaded encrypted dataset. As such, although more encrypted dataset offloading comes with higher privacy protection cost, the privacy of offloaded encrypted datasets can be still protected efficiently. This interesting trade-off provides an insightful information to selectively determine the best FL scenarios which can improve the learning quality, i.e., straggling problem reduction with higher global model accuracy and faster training time/convergence rate, while minimizing the privacy protection cost of encrypted dataset sharing in the FL process.
	
	\section{Conclusion}
	\label{sec:Conc}
	
	In this paper, we have proposed the novel FL-based framework with straggling mitigation and privacy-awareness for the mobile application service to maximize the utilities for the MAP and participating MUs while improving the global model accuracy with additional encrypted training. Particularly, we have designed the MU selection scheme to determine the best MUs based on the abstract of dataset and device information. Using the selected MUs, we have developed the MPOA-based FL contract problem under the MAP's common constraints (i.e., the IR and IC constraints), the competition among participating MUs, and unknown computing resources from the MAP. To obtain the optimal contracts containing the MUs' optimal sizes of encrypted and local datasets for the FL process, we have transformed the problem and implemented the iterative FL contract algorithm that can achieve the equilibrium solution for all the participating MUs. The experiments have demonstrated that our proposed framework can significantly improve the training time, prediction accuracy, utilities, and social welfare of the MAP and participating MUs (while considering the privacy protection cost of encrypted datasets) compared with those of other baseline methods.
	
	\ifCLASSOPTIONcaptionsoff
	\newpage
	\fi

	\vspace{0cm}

	\clearpage
	\appendices
	
	\section{Proof of Proposition 1}
	\label{appx:pro1}
	
	From constraints (\ref{eqn:for9a})-(\ref{eqn:for9e}), there exist $I$ trainable encrypted dataset constraints, $N$ trainable local dataset constraints, $I \times N$ total dataset constraints, $I$ IR constraints, and $I(I-1)$ IC constraints, where $N$ is a constant. 
	Hence, the problem $(\mathbf{P}_3)$ can be solved with the computational complexity $O(I^2)$.
	
	\section{Proof of Lemma 1}
	\label{appx:lemma1}
	
	We first show that if $\pi_i \geq \pi_{i^*}$, then $\boldsymbol{D}_i^o \geq \boldsymbol{D}_{i^*}^o$ using the IC constraint in~(\ref{eqn:for7}), i.e.,
	\begin{align}
	&\pi_iG_o(\boldsymbol{\hat \eta}_i,\boldsymbol{D}^o_i) - C_o(\boldsymbol{\hat \eta}_i,\boldsymbol{\rho}^o_i,\boldsymbol{D}^o_i) \geq \nonumber\\
	&\pi_{i}G_o(\boldsymbol{\hat \eta}_{i^*},\boldsymbol{D}^o_{i^*}) - C_o(\boldsymbol{\hat \eta}_{i^*},\boldsymbol{\rho}^o_{i^*},\boldsymbol{D}^o_{i^*}). \label{eqn:for7c}
	\end{align}
	Meanwhile, when the MAP has type $i^*$, we have
	\begin{align}
	&\pi_{i^*}G_o(\boldsymbol{\hat \eta}_{i^*},\boldsymbol{D}^o_{i^*}) - C_o(\boldsymbol{\hat \eta}_{i^*},\boldsymbol{\rho}^o_{i^*},\boldsymbol{D}^o_{i^*}) \geq \nonumber\\
	&\pi_{i^*}G_o(\boldsymbol{\hat \eta}_{i},\boldsymbol{D}^o_{i}) - C_o(\boldsymbol{\hat \eta}_{i},\boldsymbol{\rho}^o_{i},\boldsymbol{D}^o_{i}), \label{eqn:for7d}
	\end{align}
	where $i \neq {i^*}$ and $i, {i^*} \in \mathcal{I}$. From~(\ref{eqn:for7c}) and~(\ref{eqn:for7d}), we can derive
	\begin{equation}
	\label{eqn:for7e}
	\begin{aligned}
	&\pi_iG_o(\boldsymbol{\hat \eta}_i,\boldsymbol{D}^o_i) + \pi_{i^*}G_o(\boldsymbol{\hat \eta}_{i^*},\boldsymbol{D}^o_{i^*}) \geq \\ &\pi_{i}G_o(\boldsymbol{\hat \eta}_{i^*},\boldsymbol{D}^o_{i^*}) + \pi_{i^*}G_o(\boldsymbol{\hat \eta}_{i},\boldsymbol{D}^o_{i}),\\
	&\pi_iG_o(\boldsymbol{\hat \eta}_i,\boldsymbol{D}^o_i) - \pi_{i^*}G_o(\boldsymbol{\hat \eta}_{i},\boldsymbol{D}^o_{i}) \geq \\ &\pi_{i}G_o(\boldsymbol{\hat \eta}_{i^*},\boldsymbol{D}^o_{i^*}) - \pi_{i^*}G_o(\boldsymbol{\hat \eta}_{i^*},\boldsymbol{D}^o_{i^*}),\\
	&(\pi_i - \pi_{i^*})G_o(\boldsymbol{\hat \eta}_i,\boldsymbol{D}^o_i) \geq (\pi_i - \pi_{i^*})G_o(\boldsymbol{\hat \eta}_{i^*},\boldsymbol{D}^o_{i^*}).
	\end{aligned}
	\end{equation}
	By omitting $(\pi_i - \pi_{i^*})$, we obtain $G_o(\boldsymbol{\hat \eta}_i,\boldsymbol{D}^o_i) \geq G_o(\boldsymbol{\hat \eta}_{i^*},\boldsymbol{D}^o_{i^*})$. As the gain function in~(\ref{eqn:for30a}) is monotonically increasing in $\boldsymbol{D}^o_i$, we obtain that if $G_o(\boldsymbol{\hat \eta}_i,\boldsymbol{D}^o_i) \geq G_o(\boldsymbol{\hat \eta}_{i^*},\boldsymbol{D}^o_{i^*})$, then $\boldsymbol{D}_i^o \geq \boldsymbol{D}_{i^*}^o$.
	
	Additionally, we show that if $\boldsymbol{D}_i^o \geq \boldsymbol{D}_{i^*}^o$ then $\pi_i \geq \pi_{i^*}$. From~(\ref{eqn:for7c})-(\ref{eqn:for7e}), we derive that $\pi_i(G_o(\boldsymbol{\hat \eta}_i,\boldsymbol{D}^o_i) - G_o(\boldsymbol{\hat \eta}_{i^*},\boldsymbol{D}^o_{i^*})) \geq \pi_{i^*}(G_o(\boldsymbol{\hat \eta}_i,\boldsymbol{D}^o_i) - G_o(\boldsymbol{\hat \eta}_{i^*},\boldsymbol{D}^o_{i^*}))$.
	By removing $G_o(\boldsymbol{\hat \eta}_i,\boldsymbol{D}^o_i) - G_o(\boldsymbol{\hat \eta}_{i^*},\boldsymbol{D}^o_{i^*})$ from both sides, we have $\pi_i \geq \pi_{i^*}$ accounting for $G_o(\boldsymbol{\hat \eta}_i,\boldsymbol{D}^o_i) - G_o(\boldsymbol{\hat \eta}_{i^*},\boldsymbol{D}^o_{i^*}) \geq 0$ since $G_o(\boldsymbol{\hat \eta}_i,\boldsymbol{D}^o_i) \geq G_o(\boldsymbol{\hat \eta}_{i^*},\boldsymbol{D}^o_{i^*})$, and thus $\boldsymbol{D}_i^o \geq \boldsymbol{D}_{i^*}^o$. As a result, if $\boldsymbol{D}_i^o \geq \boldsymbol{D}_{i^*}^o$, then $\pi_i \geq \pi_{i^*}$. This concludes the proof.

	\section{Proof of Proposition 2}
	\label{appx:pro2}
	
	From~(\ref{eqn:for7c}) and (\ref{eqn:for7d}), we obtain that
	\begin{equation}
	\label{eqn:for1401a}
	\begin{aligned}
	&C_o(\boldsymbol{\hat \eta}_i,\boldsymbol{\rho}^o_i,\boldsymbol{D}^o_i) - C_o(\boldsymbol{\hat \eta}_{i^*},\boldsymbol{\rho}^o_{i^*},\boldsymbol{D}^o_{i^*}) \leq \\
	&\pi_i \Big(G_o(\boldsymbol{\hat\eta}_i,\boldsymbol{D}^o_i) - G_o(\boldsymbol{\hat\eta}_{i^*},\boldsymbol{D}^o_{i^*})\Big),
	\end{aligned}
	\end{equation}
	\begin{equation}
	\label{eqn:for1401b}
	\begin{aligned}
	&C_o(\boldsymbol{\hat \eta}_i,\boldsymbol{\rho}^o_i,\boldsymbol{D}^o_i) - C_o(\boldsymbol{\hat\eta}_{i^*},\boldsymbol{\rho}^o_{i^*},\boldsymbol{D}^o_{i^*}) \geq \\  
	&\pi_{i^*} \Big(G_o(\boldsymbol{\hat \eta}_i,\boldsymbol{D}^o_i) - G_o(\boldsymbol{\hat \eta}_{i^*},\boldsymbol{D}^o_{i^*})\Big).
	\end{aligned}
	\end{equation}
	From~(\ref{eqn:for1401b}), we have $C_o(\boldsymbol{\hat \eta}_i,\boldsymbol{\rho}^o_i,\boldsymbol{D}^o_i) \geq C_o(\boldsymbol{\hat \eta}_{i^*},\boldsymbol{\rho}^o_{i^*},\boldsymbol{D}^o_{i^*})$ if $G_o(\boldsymbol{\hat \eta}_i,\boldsymbol{D}^o_i) \geq G_o(\boldsymbol{\hat \eta}_{i^*},\boldsymbol{D}^o_{i^*})$, and thus $\boldsymbol{\rho}_i^o \geq \boldsymbol{\rho}_{i^*}^o$ as the cost function~(\ref{eqn:for30a2}) follows a monotonic increasing function in $\boldsymbol{\rho}_i^o$. We can also observe from~(\ref{eqn:for1401a}) that $G_o(\boldsymbol{\hat \eta}_i,\boldsymbol{D}^o_i) \geq G_o(\boldsymbol{\hat \eta}_{i^*},\boldsymbol{D}^o_{i^*})$ if $C_o(\boldsymbol{\hat \eta}_i,\boldsymbol{\rho}^o_i,\boldsymbol{D}^o_i) \geq C_o(\boldsymbol{\hat \eta}_{i^*},\boldsymbol{\rho}^o_{i^*},\boldsymbol{D}^o_{i^*})$, and thus $\boldsymbol{D}_i^o \geq \boldsymbol{D}_{i^*}^o$.
	
	\section{Proof of Proposition 3}
	\label{appx:pro3}
	
	From Lemma~\ref{lemma1} and Proposition~\ref{prop1}, we obtain $\boldsymbol{D}_i^o \geq \boldsymbol{D}_{i^*}^o$ and $\boldsymbol{\rho}_i^o \geq \boldsymbol{\rho}_{i^*}^o$. If $\pi_i \geq \pi_{i^*}$, then
	\begin{align}
	&\pi_iG_o(\boldsymbol{\hat \eta}_i,\boldsymbol{D}^o_i) - C_o(\boldsymbol{\hat \eta}_i,\boldsymbol{\rho}^o_i,\boldsymbol{D}^o_i) \geq \nonumber\\
	&\pi_{i}G_o(\boldsymbol{\hat \eta}_{i^*},\boldsymbol{D}^o_{i^*}) - C_o(\boldsymbol{\hat \eta}_{i^*},\boldsymbol{\rho}^o_{i^*},\boldsymbol{D}^o_{i^*}) \geq \nonumber \\
	&\pi_{i^*}G_o(\boldsymbol{\hat \eta}_{i^*},\boldsymbol{D}^o_{i^*}) - C_o(\boldsymbol{\hat \eta}_{i^*},\boldsymbol{\rho}^o_{i^*},\boldsymbol{D}^o_{i^*}) \geq 0. \label{eqn:for1402}
	\end{align}
	Thus, $\pi_iG_o(\boldsymbol{\hat \eta}_i,\boldsymbol{D}^o_i) - C_o(\boldsymbol{\hat \eta}_i,\boldsymbol{\rho}^o_i,\boldsymbol{D}^o_i) \geq \pi_{i^*}G_o(\boldsymbol{\hat \eta}_{i^*},\boldsymbol{D}^o_{i^*}) - C_o(\boldsymbol{\hat \eta}_{i^*},\boldsymbol{\rho}^o_{i^*},\boldsymbol{D}^o_{i^*})$ when $\pi_i \geq \pi_{i^*}$.
	
	\section{Proof of Lemma 2}
	\label{appx:lemma2}
	
	We first specify downward ICs (DICs) to be the IC contraints between the MAP with type indices $i$ and $i^*$, $\forall i^* \in \{1, \ldots, i-1\}$, and upward ICs (UICs) to be the IC contraints between the MAP with type indices $i$ and $i^*$, $\forall i^* \in \{i+1, \ldots, I\}$, where $\pi_iG_o(\boldsymbol{\hat \eta}_i,\boldsymbol{D}^o_i) - C_o(\boldsymbol{\hat \eta}_i,\boldsymbol{\rho}^o_i,\boldsymbol{D}^o_i) \geq
	\pi_{i}G_o(\boldsymbol{\hat \eta}_{i^*},\boldsymbol{D}^o_{i^*}) - C_o(\boldsymbol{\hat \eta}_{i^*},\boldsymbol{\rho}^o_{i^*},\boldsymbol{D}^o_{i^*})$.
	We then show that DICs can be transformed into two consecutive types, i.e., local DICs (LDICs), which are the IC constraints between the MAP with type indices $i$ and $i-1$. Given that $\pi_{i-1} < \pi_i < \pi_{i+1}, i \in \{2, \ldots, I-1\}$, we obtain
	\begin{align}
	&\pi_{i+1}G_o(\boldsymbol{\hat \eta}_{i+1},\boldsymbol{D}^o_{i+1}) - C_o(\boldsymbol{\hat \eta}_{i+1},\boldsymbol{\rho}^o_{i+1},\boldsymbol{D}^o_{i+1}) \geq \nonumber\\
	&\pi_{i+1}G_o(\boldsymbol{\hat \eta}_{i},\boldsymbol{D}^o_{i}) - C_o(\boldsymbol{\hat \eta}_{i},\boldsymbol{\rho}^o_{i},\boldsymbol{D}^o_{i}), \label{eqn:for7c2}
	\end{align}
	\begin{align}
	&\pi_{i}G_o(\boldsymbol{\hat \eta}_{i},\boldsymbol{D}^o_{i}) - C_o(\boldsymbol{\hat \eta}_{i},\boldsymbol{\rho}^o_{i},\boldsymbol{D}^o_{i}) \geq \nonumber\\
	&\pi_{i}G_o(\boldsymbol{\hat \eta}_{i-1},\boldsymbol{D}^o_{i-1}) - C_o(\boldsymbol{\hat \eta}_{i-1},\boldsymbol{\rho}^o_{i-1},\boldsymbol{D}^o_{i-1}). \label{eqn:for7c3}
	\end{align}
	From Lemma~\ref{lemma1}, we show that if $\pi_i \geq \pi_{i^*}$ then $G_o(\boldsymbol{\hat \eta}_i,\boldsymbol{D}^o_i) \geq G_o(\boldsymbol{\hat \eta}_{i^*},\boldsymbol{D}^o_{i^*})$. Based on this condition, we have
	\begin{align}
	&\pi_{i+1}(G_o(\boldsymbol{\hat \eta}_{i},\boldsymbol{D}^o_{i})-G_o(\boldsymbol{\hat \eta}_{i-1},\boldsymbol{D}^o_{i-1})) \geq \nonumber\\ 
	&\pi_{i}(G_o(\boldsymbol{\hat \eta}_{i},\boldsymbol{D}^o_{i})-G_o(\boldsymbol{\hat \eta}_{i-1},\boldsymbol{D}^o_{i-1}))  \geq  \label{eqn:for7c4} \\ 
	&C_o(\boldsymbol{\hat \eta}_{i},\boldsymbol{\rho}^o_{i},\boldsymbol{D}^o_{i}) - C_o(\boldsymbol{\hat \eta}_{i-1},\boldsymbol{\rho}^o_{i-1},\boldsymbol{D}^o_{i-1}), \text{ and} \nonumber
	\end{align}
	\begin{align}
	&\pi_{i+1}G_o(\boldsymbol{\hat \eta}_{i+1},\boldsymbol{D}^o_{i+1}) - C_o(\boldsymbol{\hat \eta}_{i+1},\boldsymbol{\rho}^o_{i+1},\boldsymbol{D}^o_{i+1}) \geq \nonumber\\
	&\pi_{i+1}G_o(\boldsymbol{\hat \eta}_{i},\boldsymbol{D}^o_{i}) - C_o(\boldsymbol{\hat \eta}_{i},\boldsymbol{\rho}^o_{i},\boldsymbol{D}^o_{i}) \geq  \label{eqn:for7c5}\\
	&\pi_{i+1}G_o(\boldsymbol{\hat \eta}_{i-1},\boldsymbol{D}^o_{i-1}) - C_o(\boldsymbol{\hat \eta}_{i-1},\boldsymbol{\rho}^o_{i-1},\boldsymbol{D}^o_{i-1}). \nonumber
	\end{align}
	Since $\pi_{i+1}G_o(\boldsymbol{\hat \eta}_{i+1},\boldsymbol{D}^o_{i+1}) - C_o(\boldsymbol{\hat \eta}_{i+1},\boldsymbol{\rho}^o_{i+1},\boldsymbol{D}^o_{i+1}) \geq \pi_{i+1}G_o(\boldsymbol{\hat \eta}_{i-1},\boldsymbol{D}^o_{i-1}) - C_o(\boldsymbol{\hat \eta}_{i-1},\boldsymbol{\rho}^o_{i-1},\boldsymbol{D}^o_{i-1})$,
	we can extend it from the MAP with type index $i-1$ to $1$ to prove that all the DICs hold, i.e.,
	\begin{align}
	&\pi_{i+1}G_o(\boldsymbol{\hat \eta}_{i+1},\boldsymbol{D}^o_{i+1}) - C_o(\boldsymbol{\hat \eta}_{i+1},\boldsymbol{\rho}^o_{i+1},\boldsymbol{D}^o_{i+1}) \geq \nonumber\\
	&\pi_{i+1}G_o(\boldsymbol{\hat \eta}_{i-1},\boldsymbol{D}^o_{i-1}) - C_o(\boldsymbol{\hat \eta}_{i-1},\boldsymbol{\rho}^o_{i-1},\boldsymbol{D}^o_{i-1}) \geq \ldots \geq \label{eqn:for7c7}\\
	&\pi_{i+1}G_o(\boldsymbol{\hat \eta}_{1},\boldsymbol{D}^o_{1}) - C_o(\boldsymbol{\hat \eta}_{1},\boldsymbol{\rho}^o_{1},\boldsymbol{D}^o_{1}), \forall i \in \{1,\ldots,I-1\} \nonumber.
	\end{align}
	Therefore, we can summarize that all the DICs satisfy by utilizing the LDICs, and thus the DICs can be reduced. Similarly, we can show that all the UICs hold considering the local UICs (LUICs) and Lemma~\ref{lemma1}. This concludes the proof.
	
	\section{Proof of Proposition 4}
	\label{appx:pro4}
	
	From the constraints (\ref{eqn:for9a})-(\ref{eqn:for9c}) and (\ref{eqn:for9arev})-(\ref{eqn:for9crev}), there are  $I$ trainable encrypted dataset constraints, $N$ trainable local dataset constraints, $I \times N$ total dataset constraints, one IR constraint, and $2(I-1)$ IC constraints, where $N$ is a constant. Hence, we can solve $(\mathbf{P}_4)$ with computational complexity $O(I)$.
	
	\section{Proof of Theorem 1}
	\label{appx:theorem1}
	
	For simplicity, we define $\boldsymbol{D}^{(\tau)}_n = \Big(\boldsymbol{D}^{o,(\tau)}_n, \boldsymbol{D}^{l,(\tau)}_n\Big)$ and $\boldsymbol{\rho}^{(\tau)}_n = \Big(\boldsymbol{\rho}^{o,(\tau)}_n, \boldsymbol{\rho}^{l,(\tau)}_n\Big)$. Similarly, we denote  $\boldsymbol{D}^{(\tau)}_{-n} = \Big(\boldsymbol{D}^{o,(\tau)}_{-n}, \boldsymbol{D}^{l,(\tau)}_{-n}\Big)$ and $\boldsymbol{\rho}^{(\tau)}_{-n} = \Big(\boldsymbol{\rho}^{o,(\tau)}_{-n}, \boldsymbol{\rho}^{l,(\tau)}_{-n}\Big)$. Then, suppose that $\mathbb{Q}$ has $\Big(\boldsymbol{D}^{(\tau+1)}_{-n},\boldsymbol{\rho}^{(\tau+1)}_{-n}\Big)$ and $\Big(\boldsymbol{D}^{(\tau)}_{-n},\boldsymbol{\rho}^{(\tau)}_{-n}\Big)$ such that $\Big(\boldsymbol{D}^{(\tau+1)}_{-n},\boldsymbol{\rho}^{(\tau+1)}_{-n}\Big) > \Big(\boldsymbol{D}^{(\tau)}_{-n},\boldsymbol{\rho}^{(\tau)}_{-n}\Big)$.
	Using ${\hat \Theta}^{(\tau+1)}_n\Big(\boldsymbol{D}^{(\tau)}_{-n}, \boldsymbol{\rho}^{(\tau)}_{-n}\Big) = \max \phantom{1} { \Theta}^{(\tau+1)}_n\Big(\boldsymbol{D}^{(\tau)}_{-n}, \boldsymbol{\rho}^{(\tau)}_{-n}\Big)$, we can first prove that the selection ${\hat \Theta}_n$ in ${\Theta}_n$ is increasing, i.e., ${\hat \Theta}^{(\tau+2)}_n\Big(\boldsymbol{D}^{(\tau+1)}_{-n}, \boldsymbol{\rho}^{(\tau+1)}_{-n}\Big) \geq {\hat \Theta}^{(\tau+1)}_n\Big(\boldsymbol{D}^{(\tau)}_{-n}, \boldsymbol{\rho}^{(\tau)}_{-n}\Big)$.
	From $\Big(\boldsymbol{D}^{(\tau+1)}_{-n}, \boldsymbol{\rho}^{(\tau+1)}_{-n}\Big) > \Big(\boldsymbol{D}^{(\tau)}_{-n}, \boldsymbol{\rho}^{(\tau)}_{-n}\Big)$ and ${\hat \Theta}^{(\tau+2)}_n\Big(\boldsymbol{D}^{(\tau+1)}_{-n}, \boldsymbol{\rho}^{(\tau+1)}_{-n}\Big) < {\hat \Theta}^{(\tau+1)}_n\Big(\boldsymbol{D}^{(\tau)}_{-n}, \boldsymbol{\rho}^{(\tau)}_{-n}\Big)$ for a contradiction, we have 
	\begin{equation}
	{\fontsize{11.3}{11}\selectfont
		\label{eqn:for16c}
		\begin{aligned}
		&U_n\bigg({\hat \Theta}^{(\tau+1)}_n\Big(\boldsymbol{D}^{(\tau)}_{-n}, \boldsymbol{\rho}^{(\tau)}_{-n}\Big),\boldsymbol{D}^{(\tau+2)}_{-n}, \boldsymbol{\rho}^{(\tau+2)}_{-n}\bigg) +\\ &U_n\bigg({\hat \Theta}^{(\tau+2)}_n\Big(\boldsymbol{D}^{(\tau+1)}_{-n}, \boldsymbol{\rho}^{(\tau+1)}_{-n}\Big),\boldsymbol{D}^{(\tau+1)}_{-n}, \boldsymbol{\rho}^{(\tau+1)}_{-n}\bigg) \geq \\
		&U_n\bigg({\hat \Theta}^{(\tau+2)}_n\Big(\boldsymbol{D}^{(\tau+1)}_{-n}, \boldsymbol{\rho}^{(\tau+1)}_{-n}\Big),\boldsymbol{D}^{(\tau+2)}_{-n}, \boldsymbol{\rho}^{(\tau+2)}_{-n}\bigg) + \\ &U_n\bigg({\hat \Theta}^{(\tau+1)}_n\Big(\boldsymbol{D}^{(\tau)}_{-n}, \boldsymbol{\rho}^{(\tau)}_{-n}\Big),\boldsymbol{D}^{(\tau+1)}_{-n}, \boldsymbol{\rho}^{(\tau+1)}_{-n}\bigg).
		\end{aligned}}
	\end{equation}
	From the definition of $\Theta^{(\tau+1)}_n$, ${\hat \Theta}^{(\tau+1)}_n\Big(\boldsymbol{D}^{(\tau)}_{-n}, \boldsymbol{\rho}^{(\tau)}_{-n}\Big) \in { \Theta}^{(\tau+1)}_n\Big(\boldsymbol{D}^{(\tau)}_{-n}, \boldsymbol{\rho}^{(\tau)}_{-n}\Big)$ reveals that
	\begin{equation}
	\label{eqn:for16d}
	\begin{aligned}
	&U_n\bigg({\hat \Theta}^{(\tau+1)}_n\Big(\boldsymbol{D}^{(\tau)}_{-n}, \boldsymbol{\rho}^{(\tau)}_{-n}\Big),\boldsymbol{D}^{(\tau+1)}_{-n}, \boldsymbol{\rho}^{(\tau+1)}_{-n}\bigg) \geq \\ &U_n\bigg({\hat \Theta}^{(\tau+2)}_n\Big(\boldsymbol{D}^{(\tau+1)}_{-n}, \boldsymbol{\rho}^{(\tau+1)}_{-n}\Big),\boldsymbol{D}^{(\tau+1)}_{-n}, \boldsymbol{\rho}^{(\tau+1)}_{-n}\bigg).
	\end{aligned}
	\end{equation}
	Correspondingly, from~(\ref{eqn:for16c}) and~(\ref{eqn:for16d}), we obtain that	
	\begin{equation}
	\label{eqn:for16e}
	\begin{aligned}
	&U_n\bigg({\hat \Theta}^{(\tau+1)}_n\Big(\boldsymbol{D}^{(\tau)}_{-n}, \boldsymbol{\rho}^{(\tau)}_{-n}\Big),\boldsymbol{D}^{(\tau+2)}_{-n}, \boldsymbol{\rho}^{(\tau+2)}_{-n}\bigg) \geq \\ &U_n\bigg({\hat \Theta}^{(\tau+2)}_n\Big(\boldsymbol{D}^{(\tau+1)}_{-n}, \boldsymbol{\rho}^{(\tau+1)}_{-n}\Big),\boldsymbol{D}^{(\tau+2)}_{-n}, \boldsymbol{\rho}^{(\tau+2)}_{-n}\bigg),
	\end{aligned}
	\end{equation}
	and thus we have ${\hat \Theta}^{(\tau+1)}_n\Big(\boldsymbol{D}^{(\tau)}_{-n}, \boldsymbol{\rho}^{(\tau)}_{-n}\Big) \in { \Theta}^{(\tau+2)}_n\Big(\boldsymbol{D}^{(\tau+1)}_{-n}, \boldsymbol{\rho}^{(\tau+1)}_{-n}\Big)$. Alternatively, it can be observed that ${\hat \Theta}^{(\tau+1)}_n\Big(\boldsymbol{D}^{(\tau)}_{-n}, \boldsymbol{\rho}^{(\tau)}_{-n}\Big) \geq {\hat \Theta}^{(\tau+2)}_n\Big(\boldsymbol{D}^{(\tau+1)}_{-n}, \boldsymbol{\rho}^{(\tau+1)}_{-n}\Big)$. From the original definition of ${\hat \Theta}_n^{(\tau+1)}$, we can notice that ${\hat \Theta}^{(\tau+2)}_n\Big(\boldsymbol{D}^{(\tau+1)}_{-n}, \boldsymbol{\rho}^{(\tau+1)}_{-n}\Big) \geq {\hat \Theta}^{(\tau+1)}_n\Big(\boldsymbol{D}^{(\tau)}_{-n}, \boldsymbol{\rho}^{(\tau)}_{-n}\Big)$. As such, the latter condition contradicts with the former condition, and thus ${\hat \Theta}_n$ is an increasing function. 
	
	Utilizing the increasing function of ${\hat \Theta}_n$, the MAP can renew the contract of MU-$n$ at iteration $\tau + 1$, i.e., $\Big(\boldsymbol{D}^{(\tau+1)}_{n}, \boldsymbol{\rho}^{(\tau+1)}_{n}\Big) \in \Theta^{(\tau+1)}_n$, if the condition in~(\ref{eqn:for11c}) holds. For that, the use of $\sigma$ can be defined as the optimality tolerance to terminate the iterative algorithm. Specifically, when the condition in~(\ref{eqn:for11c}) is not satisfied at iteration $\tau = \tau^\dagger$ for all MU-$n$, $\forall n \in \mathcal{N}$, we obtain that $\Big(\boldsymbol{D}^{(\tau^\dagger+1)}_{n}, \boldsymbol{\rho}^{(\tau^\dagger+1)}_{n}\Big) = \Big(\boldsymbol{D}^{(\tau^\dagger)}_{n}, \boldsymbol{\rho}^{(\tau^\dagger)}_{n}\Big)$, $\forall n \in \mathcal{N}$. In other words, the iterative algorithm produces the same $U_n\Big(\boldsymbol{D}^{(\tau+1)}_{n}, \boldsymbol{\rho}^{(\tau+1)}_{n},\boldsymbol{D}^{(\tau)}_{-n}, \boldsymbol{\rho}^{(\tau)}_{-n}\Big), \forall n \in \mathcal{N}$, for the rest of $\tau > \tau^\dagger$. As a result, the iterative algorithm converges under $\sigma$.

	\section{Proof of Theorem 2}
	\label{appx:theorem2}
	
	We first demonstrate that an equilibrium exists by finding a fixed point in $\Theta$. Consider that $\mathbb{Q}^*$ contains $\Big(\boldsymbol{D}^o, \boldsymbol{\rho}^o,\boldsymbol{D}^l, \boldsymbol{\rho}^l\Big) \in \mathbb{Q}$.
	This $\mathbb{Q}^*$ is a non-empty contract space since ${\hat \Theta}\big(\min \mathbb{Q}\big) \geq \min \mathbb{Q}$, ${\hat \Theta} \in \Theta$, thereby ${\hat \Theta}\big(\max \mathbb{Q}^*\big) \geq \max \mathbb{Q}^*$.
	As ${\hat \Theta}$ is an increasing function, we have ${\hat \Theta}\Big({\hat \Theta}\big(\max \mathbb{Q}^*\big)\Big) \geq {\hat \Theta}\big(\max \mathbb{Q}^*\big)$, and thus ${\hat \Theta}\big(\max \mathbb{Q}^*\big) \in \mathbb{Q}^*$. Then, we also have ${\hat \Theta}\big(\max \mathbb{Q}^*\big) \leq \max \mathbb{Q}^*$ and produce ${\hat \Theta}\big(\max \mathbb{Q}^*\big) = \max \mathbb{Q}^*$. Therefore, $\max \mathbb{Q}^*$ is a fixed point of $\Theta$ which has $\Big(\boldsymbol{D}^o, \boldsymbol{\rho}^o,\boldsymbol{D}^l, \boldsymbol{\rho}^l\Big)$ and implies that the equilibrium exists.
	As the utilities of all MUs in $\mathcal{N}$ follow the increasing function until reaching the convergence, the MUs cannot further improve the utilities when their optimal contracts are obtained. This means that the iterative algorithm must converge to the equilibrium contract $\Big(\boldsymbol{D}^o, \boldsymbol{\rho}^o,\boldsymbol{D}^l, \boldsymbol{\rho}^l\Big)$, to ensure that there is no MU-$n$ that can unilaterally improve its utility, i.e., $U_n\Big(\boldsymbol{\hat D}_n^o, \boldsymbol{\hat \rho}_n^o,\boldsymbol{\hat D}_n^l, \boldsymbol{\hat \rho}_n^l,\boldsymbol{\hat D}_{-n}^o, \boldsymbol{\hat \rho}_{-n}^o,\boldsymbol{\hat D}_{-n}^l, \boldsymbol{\hat \rho}_{-n}^l\Big)$~\cite{Bernheim:1986,Fraysse:1993}.
	
	\section{Proof of Theorem 3}
	\label{appx:theorem3}
	
	Suppose that there are $n^\star$ MUs in $\mathcal{N}$ that have finished the best response process at iteration $\tau+1$ without changing their current contracts. Let $u \in [0,1]$ denote the current normalized candidate utility. Considering $P$ to be the total number of contract strategies, the $n^\star$-th MU has $P-1$ new contract strategies in $\mathbb{Q}_{n^\star}$ when the MU calculates its ${\hat \Theta}^{(\tau+1)}_{n^\star}\Big(\boldsymbol{D}^{(\tau)}_{-n^\star}, \boldsymbol{\rho}^{(\tau)}_{-n^\star}\Big)$. As such, the probability that all the $P-1$ contract strategies cannot obtain higher normalized candidate utility than $u$ is $u^{P-1}$. As a result, the $(n^\star+1)$-th MU can then try to obtain its ${\hat \Theta}^{(\tau+1)}_{(n^\star+1)}\Big(\boldsymbol{D}^{(\tau)}_{-(n^\star+1)}, \boldsymbol{\rho}^{(\tau)}_{-(n^\star+1)}\Big)$. Meanwhile, the probability that one of the $P-1$ contract strategies is the best response can be expressed by $1-u^{P-1}$. In other words, the normalized candidate utility $u$ can be updated to a value larger than ${\hat U}_{n^\star}\Big(\boldsymbol{D}^{(\tau)}_{n^\star}, \boldsymbol{\rho}^{(\tau)}_{n^\star},\boldsymbol{D}^{(\tau)}_{-n^\star}, \boldsymbol{\rho}^{(\tau)}_{-n^\star}\Big) + \sigma$ (or ${\hat U}_{n^\star} + \sigma$ for the simplified form) with a probability $1 - ({\hat U}_{n^\star}+\sigma)^{P-1}$, where ${\hat U}_{n^\star} \in [0,1]$ is the normalized current utility of MU-$n^\star$. In this condition, the number of MUs in $\mathcal{N}$ which completes the best response can be set back to $1$.
	
	Let $V(u,n^\star)$ denote the number of steps in Algorithm~\ref{ICEA} when $n^\star$ MUs have completed the best response without modifying the contracts with $u$ prior to the convergence. Using the Markov chain, we have~\cite{Durand:2016}
	\begin{equation}
	\label{eqn:for17a}
	\begin{aligned}
	V(u,n^\star) &= u^{P-1}V(u,n^\star+1) \\&+\int_{u}^{1}(P-1){\tilde U}_{n^\star}^{P-2}(V({\tilde U}_{n^\star},1)+1)d{\tilde U}_{n^\star},
	\end{aligned}
	\end{equation}
	where ${\tilde U}_{n^\star} = {\hat U}_{n^\star}+\sigma$. Specifically, the first component of (\ref{eqn:for17a}) indicates that the $(n^\star+1)$-th MU does not change its contract, i.e., $V(u,n^\star) = V(u,n^\star+1)$, with probability $u^{P-1}$. Moreover, the second component of (\ref{eqn:for17a}) specifies that the $(n^\star+1)$-th MU obtains a new best response with normalized candidate utility ${\tilde U}_{n^\star}$ with the probability density $(P-1){\tilde U}_{n^\star}^{P-2}$. To this end, the Algorithm~\ref{ICEA} produces one additional step $V(u,n^\star) = V({\tilde U}_{n^\star},1)+1$.
	
	Using the Markov chain's boundary conditions $V(u,N) = 0$ and $V(1,n^\star) = 0$ for all $u$ and $n^\star$, respectively, we can obtain $V(u,1) = u^{P-1}V(u,2) + \Phi(u),\ldots,V(u,N-2) = u^{P-1}V(u,N-1) + \Phi(u),V(u,N-1) = \Phi(u)$.
	In this case, $\Phi(u) = \overset{1}{\underset{u}{\int}}(P-1){\tilde U}_{n^\star}^{P-2}(V({\tilde U}_{n^\star},1)+1)d{\tilde U}_{n^\star}$. Then, we can derive that $V(u,1) = \Phi(i)\Xi(u)$, where $\Xi(u) =  1 + u^{P-1} + \ldots + u^{(P-1)(N-2)}$,
	and thus the differential equation of $V(u,1)$ with respect to $u$ can be written as
	\begin{equation}
	\label{eqn:for17c}
	\begin{aligned}
	&\frac{dV(u,1)}{du} + \Big((P-1)u^{P-2}\Xi(u) - \frac{1}{\Xi(u)}\frac{d\Xi(u)}{du}\Big)V(u,1) \\&= -(P-1)u^{P-2}\Xi(u).
	\end{aligned}
	\end{equation}
	Given $V(1,1) = 0$ and H(u) = $\int_{0}^{u}\Big((P-1){\tilde U}_{1}^{P-2}\Xi({\tilde U}_{1}) - \frac{1}{\Xi({\tilde U}_{1})}\frac{d\Xi({\tilde U}_{1})}{d{\tilde U}_{1}}\Big)d{\tilde U}_{1} 
	= \int_{0}^{u}(P-1){\tilde U}_{1}^{P-2}\Xi({\tilde U}_{1})d{\tilde U}_{1}  - \log(\Xi(u))$, 
	the $V(u,1)$ becomes $V(u,1) = e^{-H(u)}\int_{u}^{1}(P-1){\tilde U}_{1}^{P-2}\Xi({\tilde U}_{1})e^{H({\tilde U}_{1})}d{\tilde U}_{1}$.
	Since $V(u,1)$ is a decreasing function in $U$, the number of steps is upper-bounded by $V(0,1)$. Particularly, given that $\Xi(0) = 1$ and $H(0) = 0$, we have
	\begin{align}
	&V(0,1) = \int_{0}^{1}(P-1){\tilde U}_{1}^{P-2}e^{\sum_{n=0}^{N-2}\frac{{\tilde U}_{1}^{(P-1)(n+1)}}{n+1}}d{\tilde U}_{1} \nonumber \\
	&= \int_{0}^{1 - \frac{1}{N}}e^{\sum_{n=1}^{N-1}\frac{{\tilde U}_{1}^{(P-1)n}}{n}}d{\tilde U}_{1}^{(P-1)} +\nonumber\\& \int_{1 - \frac{1}{N}}^{1}e^{\sum_{n=1}^{N-1}\frac{{\tilde U}_{1}^{(P-1)n}}{n}}d{\tilde U}_{1}^{(P-1)} \nonumber \\
	&\leq \int_{0}^{1 - \frac{1}{N}}e^{\sum_{n=1}^{\infty}\frac{{\tilde U}_{1}^{(P-1)n}}{n}}d{\tilde U}_{1}^{(P-1)} + \frac{1}{N}e^{\sum_{n=1}^{N-1}\frac{1}{n}} \nonumber \\ 
	&= \log(N) + e^\epsilon +O(1/N)), \label{eqn:for17f}
	\end{align}
	where $\epsilon \approx 0.5772$ is the Euler constant. 
	From (\ref{eqn:for17f}), it is shown that the number of steps of the Algorithm~\ref{ICEA} has the polynomial complexity with upper bound $\log(N) + e^\epsilon +O(1/N)$~\cite{Saputra:2020,Durand:2016}.
	
	\section{Convergence Analysis of Algorithm~2}
	\label{appx:theorem4}
	
	We analyze the convergence of the proposed privacy-aware FL algorithm in the following. Specifically, we use the gap between the expected global loss after $\theta^\diamond$ rounds, i.e., $\mathbb{E}[\varphi(\mathbf{\tilde\Upsilon}^{(\theta^\diamond)})]$ and the final global loss, i.e., $\varphi^*(\mathbf{\tilde\Upsilon}^*)$. Given that the probability of an MU is selected by the MAP for the FL process at any round along with the MAP is $\frac{N}{M+1}$ (where $M+1$ is the total MUs along with the MAP), our FL algorithm is still unbiased towards full participation. Alternatively, the high-quality MU selection is random across the active MUs at each learning round~\cite{Ren:2021}. To this end, we can show that the global loss gap is upper bounded by the expected squared L2-norm encrypted global model gap, i.e., $\mathbb{E}[\|\mathbf{\tilde\Upsilon}^{(\theta^\diamond)} - \mathbf{\tilde\Upsilon}^*\|_2^2]$, and it eventually reaches zero as stated in Theorem~\ref{theorem1}.
	
	\begin{theorem}
		\label{theorem1}
		The proposed FL with privacy-awareness in Algorithm~\ref{DDL-PC} will converge to the minimum global loss $\varphi^*(\mathbf{\tilde\Upsilon}^*)$ according to the global loss gap condition $\big[\mathbb{E}[\varphi(\mathbf{\tilde\Upsilon}^{(\theta^\diamond)})] - \varphi^*(\mathbf{\tilde\Upsilon}^*)\big] \leq \frac{\delta}{2}\mathbb{E}[\|\mathbf{\tilde\Upsilon}^{(\theta^\diamond)} - \mathbf{\tilde\Upsilon}^*\|_2^2]$,
		where $\delta$ is a positive constant. 
	\end{theorem}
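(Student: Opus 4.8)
The plan is to split the argument into two pieces: first derive the stated inequality as a direct consequence of smoothness of the aggregated global loss, and then show that its right-hand side vanishes as the number of rounds grows, so that Algorithm~\ref{DDL-PC} indeed drives $\mathbb{E}[\varphi(\mathbf{\tilde\Upsilon}^{(\theta^\diamond)})]$ to $\varphi^*(\mathbf{\tilde\Upsilon}^*)$.

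\textbf{Step 1 (the smoothness inequality).} I would first assume, as is standard in FL convergence analysis, that each encrypted/local loss $\varphi_i,\varphi_n$ — and hence the aggregated global loss $\varphi$ in~(\ref{eqn3i2}) — is $\delta$-smooth, i.e. $\|\nabla\varphi(\mathbf{x})-\nabla\varphi(\mathbf{y})\|_2\leq\delta\|\mathbf{x}-\mathbf{y}\|_2$. This is natural here because the losses in~(\ref{eqn4e1})--(\ref{eqn4e2}) are squared Frobenius-norm errors of the (locally linearized) model, hence quadratic with bounded Hessian. Since $\mathbf{\tilde\Upsilon}^*$ is the minimizer, $\nabla\varphi(\mathbf{\tilde\Upsilon}^*)=\mathbf{0}$, so the quadratic upper bound implied by $\delta$-smoothness, applied between $\mathbf{\tilde\Upsilon}^*$ and $\mathbf{\tilde\Upsilon}^{(\theta^\diamond)}$, gives $\varphi(\mathbf{\tilde\Upsilon}^{(\theta^\diamond)})-\varphi^*(\mathbf{\tilde\Upsilon}^*)\leq\langle\nabla\varphi(\mathbf{\tilde\Upsilon}^*),\mathbf{\tilde\Upsilon}^{(\theta^\diamond)}-\mathbf{\tilde\Upsilon}^*\rangle+\frac{\delta}{2}\|\mathbf{\tilde\Upsilon}^{(\theta^\diamond)}-\mathbf{\tilde\Upsilon}^*\|_2^2=\frac{\delta}{2}\|\mathbf{\tilde\Upsilon}^{(\theta^\diamond)}-\mathbf{\tilde\Upsilon}^*\|_2^2$. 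Taking expectations yields exactly the claimed bound, with $\delta$ any valid smoothness constant.

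\textbf{Step 2 (the model gap vanishes).} It then remains to show $\mathbb{E}[\|\mathbf{\tilde\Upsilon}^{(\theta^\diamond)}-\mathbf{\tilde\Upsilon}^*\|_2^2]\to 0$. I would add the usual assumptions — $\mu$-strong convexity of $\varphi$, bounded variance of the per-round stochastic gradients, and bounded expected squared gradient norm — and invoke the fact noted before the theorem that the randomized high-quality-MU selection makes the participation pattern unbiased with probability $\tfrac{N}{M+1}$, so that the aggregated direction $\nabla\mathbf{\tilde\Upsilon}^{(\theta)}$ in~(\ref{eqn3f}) is, in expectation, a weighted estimate of the true global gradient $\nabla\varphi(\mathbf{\tilde\Upsilon}^{(\theta)})$ up to a controllable bias from the additional encrypted-training term. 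Substituting the update rule~(\ref{eqn3g}) into $\|\mathbf{\tilde\Upsilon}^{(\theta+1)}-\mathbf{\tilde\Upsilon}^*\|_2^2$, expanding the square, bounding the inner-product term by strong convexity and the remainder by the variance/gradient-norm bounds, I obtain a recursion $\mathbb{E}[\|\mathbf{\tilde\Upsilon}^{(\theta+1)}-\mathbf{\tilde\Upsilon}^*\|_2^2]\leq(1-\mu\lambda^{(\theta+1)})\,\mathbb{E}[\|\mathbf{\tilde\Upsilon}^{(\theta)}-\mathbf{\tilde\Upsilon}^*\|_2^2]+(\lambda^{(\theta+1)})^2 B$ for a constant $B$ absorbing the variance, the non-i.i.d.\ divergence, and the encrypted-model mismatch. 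With the diminishing \emph{Adam} step size $\lambda^{(\theta)}=O(1/\theta)$, a standard induction gives $\mathbb{E}[\|\mathbf{\tilde\Upsilon}^{(\theta^\diamond)}-\mathbf{\tilde\Upsilon}^*\|_2^2]=O(1/\theta^\diamond)\to 0$; combined with Step~1 this shows $\mathbb{E}[\varphi(\mathbf{\tilde\Upsilon}^{(\theta^\diamond)})]\to\varphi^*(\mathbf{\tilde\Upsilon}^*)$.

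\textbf{Main obstacle.} The delicate part is Step~2: controlling the bias from blending the MAP's encrypted-data gradient $\nabla\mathbf{\tilde\Upsilon}^{(\theta)}_i$ with the aggregated encrypted local gradients $\nabla\mathbf{\tilde\Upsilon}^{(\theta)}_{\mathcal{N}}$ in~(\ref{eqn3e})--(\ref{eqn3f}) under non-i.i.d.\ data and random, straggler-driven participation. One must check that the convex-combination weights in~(\ref{eqn3f}) keep the aggregated direction close enough (in expectation) to $\nabla\varphi(\mathbf{\tilde\Upsilon}^{(\theta)})$ for the $(1-\mu\lambda^{(\theta)})$ contraction to dominate, and that decryption/encryption of gradients — exact under BFV-FHE — leaves the recursion unchanged. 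The adaptive \emph{Adam} update~\cite{Kingma:2015} would be handled by bounding its effective per-coordinate step size between two positive constants, reducing it to the diminishing-step SGD case analyzed above.
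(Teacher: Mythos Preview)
Your proposal is correct and follows essentially the same two-step route as the paper: (i) invoke $\delta$-smoothness of the global loss at the minimizer to obtain $\mathbb{E}[\varphi(\mathbf{\tilde\Upsilon}^{(\theta^\diamond)})]-\varphi^*(\mathbf{\tilde\Upsilon}^*)\le\frac{\delta}{2}\mathbb{E}[\|\mathbf{\tilde\Upsilon}^{(\theta^\diamond)}-\mathbf{\tilde\Upsilon}^*\|_2^2]$, and (ii) establish a contraction-plus-noise recursion on $\mathbb{E}[\|\mathbf{\tilde\Upsilon}^{(\theta)}-\mathbf{\tilde\Upsilon}^*\|_2^2]$ that vanishes under a decreasing step size. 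The only substantive difference is cosmetic: the paper imports the recursion from~\cite{Amiri:2021} with explicit factors $f_1(\theta^*)=1-\delta_2\lambda^{(\theta^*)}(\vartheta_{\emph{th}}-\lambda^{(\theta^*)}(\vartheta_{\emph{th}}-1))$ and $f_2(\theta^*)$ tracking the local-iteration count $\vartheta_{\emph{th}}$, the partial-participation factor $\tfrac{M+1-N}{NM}$, and the non-i.i.d.\ bias $\Lambda$, whereas you collapse all of these into a single constant $B$ in a one-step $(1-\mu\lambda)$ contraction --- but the skeleton and the conclusion $\lim_{\theta^\diamond\to\infty}\big[\mathbb{E}[\varphi(\mathbf{\tilde\Upsilon}^{(\theta^\diamond)})]-\varphi^*(\mathbf{\tilde\Upsilon}^*)\big]=0$ are identical.
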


	\begin{proof}
		
		We first denote the data distribution bias across MUs and the MAP as $\Lambda \triangleq \varphi^*(\mathbf{\tilde\Upsilon}^*) - \frac{1}{M+1}\sum_{m=1}^{M+1}\varphi_m(\mathbf{\tilde\Upsilon}^*) > 0$ for non-i.i.d data distribution and the largest encrypted global model norm for all learning rounds as $\Gamma = \max \|\mathbf{\tilde\Upsilon}^{(\theta)}\|$. Using the assumptions $1 - 3$ in~\cite{Amiri:2021}, $0 < \lambda^{(\theta)} \leq \min(1,\frac{1}{\delta_1\vartheta_{\emph{\mbox{th}}}}), \forall \theta$, where $\delta_1$ is a positive constant and $\vartheta_{\emph{\mbox{th}}}$ is pre-defined local iteration number, and $\lambda^{(\theta)} = \lambda_1^{(\theta)} = \ldots = \lambda_{N+1}^{(\theta)}$~\cite{Amiri:2021}, we can derive the expected squared L2-norm encrypted global model gap with an additional positive constant $\delta_2$ as follows:
		\begin{equation}
		\label{eqn18c}
		\begin{aligned}
		\mathbb{E}[\|\mathbf{\tilde\Upsilon}^{(\theta)} - \mathbf{\tilde\Upsilon}^*\|_2^2] &\leq \bigg(\prod_{\theta^*=0}^{\theta-1}f_1(\theta^*)\bigg)\|\mathbf{\tilde\Upsilon}^{(0)} - \mathbf{\tilde\Upsilon}^*\|_2^2 \\&+ \sum_{\theta^\dagger=0}^{\theta-1}f_2(\theta^\dagger)\prod_{\theta^*=\theta^\dagger+1}^{\theta-1}f_1(\theta^*),
		\end{aligned}
		\end{equation}
		where
		\begin{equation}
		\label{eqn18d}
		\begin{aligned}
		f_1(\theta^*) \triangleq 1 - \delta_2\lambda^{(\theta^*)}(\vartheta_{\emph{\mbox{th}}} - \lambda^{(\theta^*)}(\vartheta_{\emph{\mbox{th}}}-1)),
		\end{aligned}
		\end{equation}
		\begin{equation}
		\label{eqn18e}
		\begin{aligned}
		&f_2(\theta^*) \triangleq \frac{(M+1-N)\Big(\lambda^{(\theta^*)}\Big)^2\vartheta^2_{\emph{\mbox{th}}}\Gamma^2}{NM} + 2\lambda^{(\theta^*)}(\vartheta_{\emph{\mbox{th}}} - 1)\Lambda + \\
		&\Big(1+\delta_2\Big(1-\lambda^{(\theta^*)}\Big)\Big)\Big(\lambda^{(\theta^*)}\Big)^2\Gamma^2\frac{\vartheta_{\emph{\mbox{th}}}(\vartheta_{\emph{\mbox{th}}}-1)(2\vartheta_{\emph{\mbox{th}}}-1)}{6} + \\
		&\Big(\lambda^{(\theta^*)}\Big)^2(\vartheta^2_{\emph{\mbox{th}}} + \vartheta_{\emph{\mbox{th}}} - 1)\Gamma^2.
		\end{aligned}
		\end{equation}

	Using the $\delta$-smoothness in Assumption~1 of~\cite{Amiri:2021} for the global loss $\varphi(.)$, after $\theta^\diamond$ rounds, we have 
		\begin{align}
		&\mathbb{E}[\varphi(\mathbf{\tilde\Upsilon}^{(\theta^\diamond)})] - \varphi^*(\mathbf{\tilde\Upsilon}^*) \leq \frac{\delta}{2}\mathbb{E}[\|\mathbf{\tilde\Upsilon}^{(\theta^\diamond)} - \mathbf{\tilde\Upsilon}^*\|_2^2] \nonumber\\
		&\leq \frac{\delta}{2}\bigg(\prod_{\theta^*=0}^{\theta^\diamond-1}f_1(\theta^*)\bigg)\|\mathbf{\tilde\Upsilon}^{(0)} - \mathbf{\tilde\Upsilon}^*\|_2^2 \label{eqn3l}\\&+ \frac{\delta}{2}\sum_{\theta^\dagger=0}^{\theta^\diamond-1}f_2(\theta^\dagger)\prod_{\theta^*=\theta^\dagger+1}^{\theta^\diamond-1}f_1(\theta^*). \nonumber
		\end{align}
		Equation (\ref{eqn3l}) reveals that the global loss gap is upper bounded by $\frac{\delta}{2}\mathbb{E}[\|\mathbf{\tilde\Upsilon}^{(\theta^\diamond)} - \mathbf{\tilde\Upsilon}^*\|_2^2]$. Since the learning rate is decreasing, i.e., $\lim_{\theta\rightarrow\infty}\lambda^{(\theta)} = 0$, we show that $\lim_{\theta\diamond\rightarrow\infty}\big[\mathbb{E}[\varphi(\mathbf{\tilde\Upsilon}^{(\theta^\diamond)})] - \varphi^*(\mathbf{\tilde\Upsilon}^*)\big] = 0$, which indicates that the global loss gap eventually will reach zero.
		
	\begin{remark}
		The convex loss function assumption in~\cite{Amiri:2021} is used to show the theoretical convergence bound of the global loss gap condition of the proposed FL. The convergence analysis can be further extended for the non-convex loss function, e.g., CNN and BLSTM, by omitting the convex assumption, identifying the convergence rate using the average expected squared L2-norm of the encrypted global gradient, i.e., $\frac{1}{\theta^\diamond}\mathbb{E}\sum_{\theta=0}^{\theta^\diamond}\|\nabla\mathbf{\tilde\Upsilon}^{(\theta)}\|_2^2$, and bounding $\lim_{\theta^\diamond\rightarrow\infty}\mathbb{E}[\|\nabla\mathbf{\tilde\Upsilon}^{(\theta^\diamond)}\|_2^2]$. This will be discussed in the future work. 
	\end{remark}
		
	\end{proof}

	%
	%
%

\end{document}